\documentclass{article}
\usepackage[utf8]{inputenc}

\usepackage[preprint]{neurips_2026}
\usepackage{graphicx}
\usepackage{amsmath}
\usepackage{amsthm}
\graphicspath{{./}{../neurlps_navigation_experiments_v2/}}

\usepackage[utf8]{inputenc} 
\DeclareUnicodeCharacter{FF0C}{,}
\usepackage[T1]{fontenc}    
\usepackage{hyperref}       
\hypersetup{
  hidelinks,
  pdfauthor={Yuhang He, Junfeng Zuo, Tianhao Chu, Si Wu},
  pdftitle={Planning as Dynamics Relaxation: Hippocampal Recurrent Network Realizes Optimal Goal-Directed Navigation}
}
\usepackage{url}            
\usepackage{booktabs}       
\usepackage{amsfonts}       
\usepackage{bm}             
\usepackage{nicefrac}       
\usepackage{microtype}      
\usepackage{xcolor}         

\usepackage{enumitem}       

\newtheorem{theorem}{Theorem}

\makeatletter
\let\papersection\section
\let\papersubsection\subsection
\newcommand{\AppendixContents}{%
  \papersection*{Appendix Contents}
  \begingroup
  \setlength{\parskip}{0pt}%
  \@starttoc{aptoc}%
  \endgroup
}
\newcommand{\EnableAppendixContents}{%
  \renewcommand{\section}[1]{%
    \papersection{##1}%
    \addcontentsline{aptoc}{section}{\protect\numberline{\thesection}##1}%
  }%
  \renewcommand{\subsection}[1]{%
    \papersubsection{##1}%
    \addcontentsline{aptoc}{subsection}{\protect\numberline{\thesubsection}##1}%
  }%
}
\newcommand{\DisableAppendixContents}{%
  \let\section\papersection
  \let\subsection\papersubsection
}
\makeatother

\title{Planning as Dynamics Relaxation: Hippocampal
Recurrent Network Realizes Optimal Goal-Directed Navigation
}

\author{%
  Yuhang He\textsuperscript{1,2,3,4,5}\quad
  Junfeng Zuo\textsuperscript{1,2,3,4,5}\quad
  Tianhao Chu\textsuperscript{1,3}\quad
  Si Wu\textsuperscript{1,2,3,4,5}\thanks{Corresponding author: \texttt{siwu@pku.edu.cn}.}\\
  \textsuperscript{1}School of Psychological and Cognitive Sciences, Peking University\\
  \textsuperscript{2}IDG/McGovern Institute for Brain Research, Peking University\\
  \textsuperscript{3}Peking-Tsinghua Center for Life Sciences,\\
  Academy for Advanced Interdisciplinary Studies, Peking University\\
  \textsuperscript{4}Center of Quantitative Biology, Peking University\\
  \textsuperscript{5}Key Laboratory of Machine Perception (Ministry of Education), Peking University
}

\begin{document}

\maketitle

\begin{abstract}

Neural correlates of spatial cognitive map are well documented, yet exactly how neural circuits perform spatial navigation in complex environments — e.g., reaching a goal while avoiding obstacles — remains largely unclear. Here, we show that a hippocampal network with appropriate recurrent connections can naturally achieve optimal goal-directed navigation via its relaxation dynamics. Specifically, we consider that the recurrent weights between the neurons represent the transition probabilities between spatial locations encoded by neurons; obstacles such as walls and blocked corridors are therefore reflected by the vanishing of connection weights. This connection pattern can be learned in the hippocampus via behavioral-timescale synaptic plasticity (BTSP)  while the animal is exploring the environment. When a goal signal is presented, the network dynamics will relax into an activity field representing the goal location. We prove that this field is mathematically equivalent to the desirability field of a Linearly-solvable Markov Decision Process (LMDP), and the local log-gradient of the field indicates the navigation direction. Both theoretical analyses and simulations demonstrate that this recurrent network dynamics-mediated navigation is efficient and robust in environments with complex obstacle layouts. Moreover, only low-rank updates of the network's connection pattern are needed when the environment has local changes. We hope this study offers insight into a general circuit principle for planning in abstract rational maps in the brain beyond spatial navigation. 
\end{abstract}

\section{Introduction}
\label{sec:intro}

An animal in a familiar environment can recall a destination and plan 
a route that avoids walls and takes available detours.The neural correlates of this cognitive map are well 
established: hippocampal place cells signaling location 
\citep{OKeefeDostrovsky1971}, entorhinal grid cells tiling the explored 
space \citep{Hafting2005}, and experience-dependent plasticity that 
binds them into a stored representation of the environment 
\citep{McHugh1996,Mehta2000,Lever2002,Bittner2017}. Yet how these 
circuits produce an obstacle-avoiding route from this map is poorly 
understood.

\textbf{Related work}(App.~\ref{app:related-work}) Several theoretical accounts have tried to address this 
question. Predictive-map and successor-representation models propose 
that the hippocampus stores long-horizon transition statistics, from 
which a value function for any goal can be derived 
\citep{Dayan1993,Stachenfeld2017,PirayDaw2021,Bono2023,Fang2023,George2023}; 
this view extends to relational and abstract task structures 
\citep{Behrens2018,Whittington2020,Whittington2022Review,George2021,Garvert2023,Courellis2024,ElGaby2024,Tacikowski2024,Bakermans2025}. 
Linearly-solvable Markov decision processe (LMDP) reduce planning to 
a linear equation whose solution, the desirability field, gives the 
navigation direction via its log-gradient 
\citep{Kappen2005,Todorov2006,Todorov2009,Levine2018}. Harmonic and Laplacian 
methods offer a related geometric picture 
\citep{ConnollyBurnsWeiss1990,MahadevanMaggioni2007,Machado2017}. Other 
biological models construct routes by simulating prospective 
trajectories during theta sequences and replay, treating planning as 
prioritized sampling or model-based rollout 
\citep{JohnsonRedish2007,PfeifferFoster2013,MattarDaw2018,SchuckNiv2019,Liu2019,Schwartenbeck2023,Jensen2024}. 
Continuous-attractor and recurrent-network models capture path 
integration, spatial codes, and preplay through circuit dynamics 
\citep{SamsonovichMcNaughton1997,BurakFiete2009,ErdemHasselmo2012,%
CorneilGerstner2015,PonulakHopfield2013,HagaFukai2018}. In these 
accounts, planning is either an equation to be solved or a trajectory 
to be simulated. How the recurrent network storing this map could 
itself produce a route through its own relaxation dynamics remains an 
open question.

Here, we show that a hippocampal recurrent network can perform 
goal-directed navigation via its own relaxation dynamics. The recurrent 
weights between neurons represent the transition probabilities between 
the locations they encode; walls and blocked corridors are reflected by 
the vanishing of the corresponding weights. This connectivity is written 
during ordinary, goal-free exploration by behavioral-timescale synaptic 
plasticity (BTSP) \citep{Bittner2017,Li2024}. When the animal recalls a 
destination, top-down input clamps the corresponding neuron, and the 
population relaxes into a goal-locked activity field. We prove that this 
field coincides with the desirability field of a linearly-solvable Markov 
decision process (LMDP). A downstream place-tuned population then combines 
this field with its local place kernel via dendritic multiplication 
\citep{MitchellSilver2003}, and the resulting population vector points 
along the field's log-gradient toward the goal.

\textbf{Our specific contributions are as follows:}
\begin{itemize}[leftmargin=0.8em,topsep=2pt plus 1pt minus 1pt,itemsep=1pt plus 1pt minus 0.5pt,parsep=0pt plus 0.5pt,partopsep=0pt]
\item \textbf{LMDP equivalence.} We give a formal proof 
(Sec.~\ref{sec:equivalence}) and show that the readout direction is 
robust to recall amplitude (Sec.~\ref{sec:perturbation}). In a weak-cost 
limit, the relaxation reduces to a Green's function of the recurrent 
connectivity, providing a circuit-level basis for the spectral navigation 
account of \citet{Zuo2026Laplacian} (Sec.~\ref{sec:exp-green}).
\item \textbf{Place-tuned readout.} A downstream place-tuned neuron population 
multiplicatively combines this field with its local place kernel, and the 
resulting population vector points toward the goal 
(Sec.~\ref{sec:readout}).
\item \textbf{Learning by BTSP.} BTSP writes the required connectivity 
during goal-free exploration (Sec.~\ref{sec:btsp-rule}).
\item \textbf{Low-rank rerouting.} Local environmental changes update only 
the affected synapses, and a low-rank correction captures the rerouting 
toward every recalled goal at the next relaxation 
(Sec.~\ref{sec:lowrank}).
\end{itemize}
By recasting planning as the relaxation of recurrent network dynamics, 
our framework unifies cognitive map storage and goal-directed action 
within a single circuit; simulations on grid mazes, including a bug-trap 
layout requiring non-trivial detours, confirm these predictions 
(Sec.~\ref{sec:experiments}).

\section{The model}
\label{sec:model}

\subsection{Recurrent network model}
\label{sec:network}

The hippocampus (e.g CA3) is densely interconnected by recurrent
collaterals~\citep{AmaralWitter1989, Ishizuka1990, Sammons2024, Mishra2016, Li2024}, and after
exploration its pyramidal cells form a place code in which different
cells fire at different
locations~\citep{OKeefeBurgess1996}. We model this population as $N$
recurrently connected neurons in continuous time. Neuron $i$ codes
location $i$, with firing rate $r_i(t)$. The recurrent synapse from
neuron $j$ to neuron $i$ has weight $S_{ij}$, and we collect these
weights into the matrix $\boldsymbol{S}\in\mathbb{R}^{N\times N}$.
Sec.~\ref{sec:learning} describes how $\boldsymbol{S}$ is written by plasticity
during exploration.

Each neuron receives three external inputs. A constant background
$I^{\mathrm{bg}}_i$ keeps the population active. A location-tuned
entorhinal input $I^{\mathrm{EC}}_i(\boldsymbol{y}_t)$ signals the animal's
current position $\boldsymbol{y}_t$~\citep{Hafting2005}. A top-down goal current
$I^{\mathrm{goal}}_i(g)$ is delivered to the goal neuron alone
during recall, $I^{\mathrm{goal}}_i(g)\propto\delta_{ig}$, where
$\delta_{ig}$ is the indicator function that equals one for $i=g$ and
zero otherwise~\citep{Ito2015}. The firing-rate dynamics read
\begin{equation}
\tau_u\,\dot r_i
= -r_i
+ g_s\sum_j S_{ij}\,\phi(r_j)
- g_{\mathrm{inh}}\,\phi(r_i)
+ I^{\mathrm{bg}}_i
+ I^{\mathrm{EC}}_i(\boldsymbol{y}_t)
+ I^{\mathrm{goal}}_i(g),
\label{eq:rate}
\end{equation}
where $\phi$ is a smooth saturating activation function (we use
$\tanh$), $\tau_u$ is the membrane time constant, and
$g_s,g_{\mathrm{inh}}>0$ scale the recurrent excitation and the
local inhibition.

When no goal is recalled ($\boldsymbol{I}^{\mathrm{goal}}=\boldsymbol{0}$), the population
settles at a stable resting state $\boldsymbol{r}^0(\boldsymbol{y}_t)$ set by the background
and entorhinal drives. When goal $g$ is recalled, the goal current
clamps the firing rate of the goal neuron at the elevated value
$r_g=r_g^0+\bar u_g$, where $\bar u_g>0$ is the recall amplitude.
The remaining neurons evolve under Eq.~\eqref{eq:rate} toward a
goal-driven steady state.

To analyze this steady state, we measure activity as a deviation
from rest, $\boldsymbol{u}:=(u_i)_i$, $u_i:=r_i-r_i^0$, and we write $\delta\boldsymbol{I}:=(\delta I_i)_i$, $\delta I_i:=
I^{\mathrm{goal}}_i(g)$ for the only input that changes between
rest and recall. Expanding $\phi$ around $r_i^0$ in
Eq.~\eqref{eq:rate} gives the linearized dynamics
\begin{equation}
\tau\,\dot{\boldsymbol{u}}
= -(\boldsymbol{I}_N-\alpha\,\boldsymbol{S})\,\boldsymbol{u} + \delta\boldsymbol{I} + \mathcal{O}(\|\boldsymbol{u}\|^2),
\label{eq:linear}
\end{equation}
with $\boldsymbol{I}_N$ the $N\times N$ identity matrix and effective recurrent
gain $\alpha$ and effective time constant $\tau$ given by
\begin{equation}
\rho_i := \phi'(r_i^0)=\rho,
\qquad
\alpha := \frac{g_s\,\rho}{1+g_{\mathrm{inh}}\,\rho}
\in (0,1),
\qquad
\tau := \frac{\tau_u}{1+g_{\mathrm{inh}}\,\rho}
\label{eq:alpha-tau}
\end{equation}
(App.~\ref{app:linearization-homogeneous-gain}); $\rho$ is the slope of the
activation function at the resting state, and local inhibition
reduces both $\alpha$ and $\tau$. The condition $\alpha\in(0,1)$
keeps the network stable. The remainder $\mathcal{O}(\|\boldsymbol{u}\|^2)$
collects the curvature of $\phi$ and stays small while $\boldsymbol{u}$ is
small.

The recurrent gain $\alpha$ controls how far activity spreads
along the recurrent connections. A length-$L$ path through $\boldsymbol{S}$
contributes a factor $\alpha^L$, which we write as $e^{-\kappa L}$
with $\kappa:=-\log\alpha>0$. The length scale $1/\kappa$ thus
sets how many steps activity travels before it decays
noticeably.

\subsection{Goal-clamped relaxation produces the desirability field}
\label{sec:equivalence}

We now show what the network computes when a goal is recalled. At every
neuron $i$, the goal-clamped steady state equals (up to a known factor)
the desirability $z^g(i)$ of a Linearly-solvable Markov Decision Process
(LMDP) on the same locations. The desirability $z^g(i)$ is high at
locations from which the goal is cheap to reach, and low at locations
from which it is costly. We proceed in three steps: first compute the
steady state, then state the planning problem, and finally show that
the two satisfy the same equation.

With the goal neuron clamped at $u_g = \bar u_g$, the remaining neurons
settle into a steady activity field. Let
$\mathcal{I}_g := \{1,\dots,N\} \setminus \{g\}$ denote the set of
non-goal neurons, and let $\boldsymbol{S}_{\mathcal{I}_g\mathcal{I}_g}$ denote the
submatrix of $\boldsymbol{S}$ restricted to $\mathcal{I}_g$. Setting $\dot{\boldsymbol{u}} = 0$ in
Eq.~\eqref{eq:linear} on $\mathcal{I}_g$ and dropping the small
quadratic remainder gives
\begin{equation}
(\boldsymbol{I}_{\mathcal{I}_g} - \alpha\, \boldsymbol{S}_{\mathcal{I}_g\mathcal{I}_g})\, \boldsymbol{u}_{\mathcal{I}_g}
= \boldsymbol{b}_g,
\qquad
(\boldsymbol{b}_g)_i := \alpha\, S_{ig}\, \bar u_g
\;\;\text{for } i \in \mathcal{I}_g,
\label{eq:dirichlet}
\end{equation}
where $\boldsymbol{b}_g$ is the current that the clamped goal sends into each
non-goal neuron $i$ through the synapse $S_{ig}$.

Since $\alpha < 1$, Eq.~\eqref{eq:dirichlet} has a unique solution, and
the inverse expands as
$(\boldsymbol{I}_{\mathcal{I}_g} - \alpha\, \boldsymbol{S}_{\mathcal{I}_g\mathcal{I}_g})^{-1}
= \sum_{L \ge 0} \alpha^L\, \boldsymbol{S}^L_{\mathcal{I}_g\mathcal{I}_g}$.
Substituting back, $u_i$ becomes a sum over all paths from neuron $i$
to the goal $g$ along the recurrent connections. Each path contributes
the product of its synaptic weights, damped by
$\alpha^L = e^{-\kappa L}$, where $L$ is the path length. Walls and
blocked corridors contribute no paths, because the corresponding
synapses never form (Sec.~\ref{sec:learning}). The relaxation thus
aggregates every legal route from $i$ to the goal, with shorter routes
weighted more strongly.

To interpret this field, we state a planning problem and then match
its solution to the relaxation. An animal at location $i$ chooses a
probability distribution $\mu(\cdot \mid i)$ over its next location,
and pays a per-step cost with two parts: a fixed amount $q_0 > 0$ for
taking the step, plus a deviation penalty
$\lambda\, D_{\mathrm{KL}}\bigl(\mu(\cdot \mid i)\,\|\,
P_0(\cdot \mid i)\bigr)$. Here $D_{\mathrm{KL}}$ is the
Kullback--Leibler divergence (a measure of how much $\mu$ differs from
$\boldsymbol{P}_0$), and $\lambda > 0$ scales the penalty. The reference
distribution $P_0(\cdot \mid i)$ describes how the animal moves
naturally with no goal in mind; the same plasticity that writes $\boldsymbol{S}$
also writes $\boldsymbol{P}_0$ as the spontaneous next-step distribution among the
accessible neighbors of $i$ (Sec.~\ref{sec:btsp-rule}).

Let $V(i)$ be the expected total future cost from neuron $i$ under
the best plan, with $V(g) = 0$ at the goal. The substitution
$z^g(i) := \exp(-V(i)/\lambda)$ turns the Bellman equation for $V$
into a linear equation~\citep{Kappen2005,Todorov2006,Todorov2009}:
\begin{equation}
(\boldsymbol{I}_{\mathcal{I}_g} - \gamma\, \boldsymbol{P}_{0,\,\mathcal{I}_g\mathcal{I}_g})\, \boldsymbol{z}^g_{\mathcal{I}_g}
= \gamma\, \boldsymbol{p}_g,
\qquad
(\boldsymbol{p}_g)_i := P_0(g \mid i),
\qquad
\gamma := e^{-q_0/\lambda} \in (0,1),
\label{eq:lmdp}
\end{equation}
with $z^g(g) = 1$ at the goal. The best next-step distribution is
then $\mu^*(j \mid i) \propto P_0(j \mid i)\, z^g(j)$.

Eq.~\eqref{eq:dirichlet} and Eq.~\eqref{eq:lmdp} have the same form.
Each holds the goal at a fixed value and spreads this value through
the rest of the population, the first through the recurrent weights
$\alpha \boldsymbol{S}$, the second through the spontaneous next-step distribution
$\gamma \boldsymbol{P}_0$. Setting $\alpha = \gamma$ matches the recurrent decay
$\kappa = -\log\alpha$ per synaptic step to the planning cost
$q_0/\lambda$ per LMDP step. What remains is the relation between
$\boldsymbol{S}$ and $\boldsymbol{P}_0$.

The same plasticity rule writes both $\boldsymbol{S}$ and $\boldsymbol{P}_0$
(Sec.~\ref{sec:btsp-rule}). It samples each accessible transition
forward and backward equally often, so $\boldsymbol{P}_0$ satisfies detailed
balance: there exists a positive measure $\boldsymbol{\pi}$ on the neurons such
that $\pi_i\, P_0(j \mid i) = \pi_j\, P_0(i \mid j)$. The same rule
writes the recurrent weights as
\begin{equation}
\boldsymbol{S} = \boldsymbol{\Pi}^{1/2}\, \boldsymbol{P}_0\, \boldsymbol{\Pi}^{-1/2},
\qquad
\boldsymbol{\Pi} := \operatorname{diag}(\pi_1,\dots,\pi_N),
\label{eq:S-from-P0}
\end{equation}
and detailed balance makes this matrix symmetric. Under these
conditions, Eq.~\eqref{eq:lmdp} reduces to Eq.~\eqref{eq:dirichlet}
by direct substitution
(Apps.~\ref{app:reversibility-symmetric-coordinate} and~\ref{app:lmdp-exact-linear-network}):

\begin{theorem}[Equivalence between goal-clamped relaxation and the LMDP desirability]
\label{thm:equiv}
Suppose $\boldsymbol{P}_0$ is reversible with respect to a positive measure $\boldsymbol{\pi}$,
the recurrent weight matrix is $\boldsymbol{S} = \boldsymbol{\Pi}^{1/2}\, \boldsymbol{P}_0\, \boldsymbol{\Pi}^{-1/2}$, and
$\alpha = \gamma$. Then the goal-clamped steady state of the
linearized network satisfies, at every neuron $i$,
\begin{equation}
\frac{u_i}{\bar u_g}
\;=\;
\sqrt{\frac{\pi_i}{\pi_g}}\; z^g(i),
\label{eq:state-map}
\end{equation}
where $z^g$ is the LMDP desirability field with discount $\gamma$ and
$z^g(g) = 1$.
\end{theorem}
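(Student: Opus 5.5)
The approach is to show that the rescaled vector $\boldsymbol{w}$, defined by $w_i := \sqrt{\pi_i/\pi_g}\,z^g(i)$ for $i\in\mathcal{I}_g$, solves the Dirichlet system \eqref{eq:dirichlet} with $\bar u_g$ normalized to one. By linearity in $\bar u_g$ and uniqueness of the solution of \eqref{eq:dirichlet} (the Neumann series converges, as noted before the theorem), this gives $u_{\mathcal{I}_g}/\bar u_g=\boldsymbol{w}$. At $i=g$ itself, \eqref{eq:state-map} reads $u_g/\bar u_g=1=\sqrt{\pi_g/\pi_g}\,z^g(g)$, which holds trivially by the clamp and the boundary condition $z^g(g)=1$.

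The key algebraic step is a diagonal similarity on the restricted block. Since $\boldsymbol{\Pi}$ is diagonal, restriction commutes with conjugation:
\begin{equation*}
\boldsymbol{S}_{\mathcal{I}_g\mathcal{I}_g}=\boldsymbol{\Pi}^{1/2}_{\mathcal{I}_g}\,\boldsymbol{P}_{0,\mathcal{I}_g\mathcal{I}_g}\,\boldsymbol{\Pi}^{-1/2}_{\mathcal{I}_g}.
\end{equation*}
The same holds entrywise for the off-block column, $S_{ig}=\sqrt{\pi_i/\pi_g}\,P_0(g\mid i)$. I multiply the LMDP equation \eqref{eq:lmdp} on the left by $\boldsymbol{\Pi}^{1/2}_{\mathcal{I}_g}$ and insert $\boldsymbol{\Pi}^{-1/2}_{\mathcal{I}_g}\boldsymbol{\Pi}^{1/2}_{\mathcal{I}_g}$ before $\boldsymbol{z}^g_{\mathcal{I}_g}$. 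With $\alpha=\gamma$ this gives
\begin{equation*}
(\boldsymbol{I}_{\mathcal{I}_g}-\alpha\boldsymbol{S}_{\mathcal{I}_g\mathcal{I}_g})\,\boldsymbol{\Pi}^{1/2}_{\mathcal{I}_g}\boldsymbol{z}^g_{\mathcal{I}_g}=\alpha\,\boldsymbol{\Pi}^{1/2}_{\mathcal{I}_g}\boldsymbol{p}_g .
\end{equation*}
The $i$-th entry of the right side is $\alpha\sqrt{\pi_i}\,P_0(g\mid i)=\alpha\sqrt{\pi_g}\,S_{ig}$. Dividing through by $\sqrt{\pi_g}$ turns the right side into $(\boldsymbol{b}_g)_i/\bar u_g$, and turns the unknown into exactly $\boldsymbol{w}$. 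So $\boldsymbol{w}$ satisfies \eqref{eq:dirichlet} with unit clamp.

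Notably, the similarity step needs only the definition \eqref{eq:S-from-P0}, not symmetry. Reversibility matters for two other things: it makes $\boldsymbol{S}$ symmetric, and it guarantees invertibility. I expect invertibility to be the only real obstacle. The paper asserts that $\alpha<1$ suffices, which requires the spectral radius of $\boldsymbol{S}_{\mathcal{I}_g\mathcal{I}_g}$ to be at most one. I would argue this via similarity. The matrix $\boldsymbol{P}_{0,\mathcal{I}_g\mathcal{I}_g}$ is a substochastic block of a stochastic matrix, so its spectral radius is at most one. The block $\boldsymbol{S}_{\mathcal{I}_g\mathcal{I}_g}$ is similar to it, so it shares that spectrum, and hence $\rho(\alpha\boldsymbol{S}_{\mathcal{I}_g\mathcal{I}_g})\le\alpha<1$. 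The same bound gives uniqueness of $\boldsymbol{z}^g$ in \eqref{eq:lmdp} with $\gamma<1$, so both sides of \eqref{eq:state-map} are well defined, and the identification is complete.
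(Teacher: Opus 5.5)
Your proposal is correct and follows essentially the same route as the paper's proof: multiply the LMDP Dirichlet equation by $\boldsymbol{\Pi}^{1/2}$, rewrite the boundary term as $\sqrt{\pi_i}\,P_0(g\mid i)=\sqrt{\pi_g}\,S_{ig}$, rescale by $\bar u_g/\sqrt{\pi_g}$, and conclude by uniqueness. Two side remarks on where reversibility enters. The boundary rewrite follows from the definition of $\boldsymbol{S}$ alone, as you say, although the paper credits it to detailed balance. Your invertibility argument, via similarity to the substochastic block, likewise never uses reversibility, so your remark that reversibility "guarantees invertibility" is not needed for your own argument; the paper instead uses symmetry to get $\|\boldsymbol{S}_{\mathcal I_g\mathcal I_g}\|_2\le 1$.
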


Both cost terms have a clear interpretation. The step cost $q_0$
controls the recurrent decay through $\alpha = e^{-q_0/\lambda}$:
a larger $q_0$ damps activity faster along the recurrent connections
and penalizes long trajectories more sharply. The deviation cost
measures how much each planned step bends the spontaneous flow $\boldsymbol{P}_0$
away from rest. Following $\boldsymbol{P}_0$ closely is cheap but reaches the goal
slowly; pushing toward the goal arrives quickly but pays a high
deviation cost. The relaxation stores the value of this trade-off at
every neuron.

When neurons cover the environment evenly, the same bidirectional
sampling that yields detailed balance also drives $\boldsymbol{\pi}$ to be uniform
(Sec.~\ref{sec:btsp-rule}; App.~\ref{app:reversibility-lazy-normalization}). In this regime,
the factor $\sqrt{\pi_i/\pi_g}$ in Eq.~\eqref{eq:state-map} drops out
and $\boldsymbol{S} = \boldsymbol{P}_0$. The normalized goal-locked field
$\psi^g(i) := u_i / \bar u_g$ then coincides with the desirability
$z^g(i)$. The general reversible case is treated in
App.~\ref{app:asymmetry-readout-bias}.

\subsection{A place-tuned readout of the goal-locked field}
\label{sec:readout}

Theorem~\ref{thm:equiv} gives the goal-locked field $\boldsymbol{\psi}^g$ at
every neuron in the network. To choose its next step, the animal
needs a single direction at its current position $\boldsymbol{y}_t$. This
direction is supplied by a downstream population of place-tuned
readout cells, which receive the same entorhinal and self-motion
inputs that drive place tuning in the hippocampus at
rest~\citep{OKeefeBurgess1996,McNaughton2006,Aoki2019,Qian2025}.

Each readout cell has a preferred location $\boldsymbol{x}$, the center of its
place field. When the animal is at $\boldsymbol{y}_t$, the cell receives a
place-tuned input $\mathcal{G}_{\boldsymbol{y}_t}(\boldsymbol{x})$ that is largest at $\boldsymbol{x}=\boldsymbol{y}_t$ and
falls off with distance over the place-field width. It also
receives the value $\psi^g(\boldsymbol{x})$ from the upstream recurrent neuron
coding the same location. The two inputs multiply on the
dendrite~\citep{MitchellSilver2003,Silver2010}, and the cell fires at a rate
proportional to $\mathcal{G}_{\boldsymbol{y}_t}(\boldsymbol{x})\,\psi^g(\boldsymbol{x})$.

The next-step direction is given by the population vector across
these cells, which weights each cell's preferred displacement
$\boldsymbol{x}-\boldsymbol{y}_t$ by its firing rate:
\begin{equation}
\Delta\boldsymbol{x}_{\mathrm{pop}}(\boldsymbol{y}_t,g)
= \frac{\int (\boldsymbol{x}-\boldsymbol{y}_t)\,\mathcal{G}_{\boldsymbol{y}_t}(\boldsymbol{x})\,\psi^g(\boldsymbol{x})\,d\boldsymbol{x}}
       {\int \mathcal{G}_{\boldsymbol{y}_t}(\boldsymbol{x})\,\psi^g(\boldsymbol{x})\,d\boldsymbol{x}}
\;\approx\; \boldsymbol{\Sigma}_G(\boldsymbol{y}_t)\,\nabla\!\log\psi^g(\boldsymbol{y}_t),
\label{eq:popvec}
\end{equation}
where $\boldsymbol{\Sigma}_G(\boldsymbol{y}_t)$ is the covariance of $\mathcal{G}_{\boldsymbol{y}_t}$ around $\boldsymbol{y}_t$.
The approximation on the right comes from a local Taylor expansion
of $\log\psi^g$ at $\boldsymbol{y}_t$, and holds when the place field is
narrower than the spatial scale on which $\psi^g$ varies
(App.~\ref{app:popvec-centered-kernel}). The readout direction is therefore the
local log-gradient of $\psi^g$, shaped by the local place-field
covariance.

This direction is invariant under any overall rescaling of
$\psi^g$: multiplying $\psi^g$ by any positive constant rescales
the numerator and denominator of Eq.~\eqref{eq:popvec} by the same
factor, leaving the direction unchanged. Any uniform gain change
in the recurrent population, including a change in the recall
amplitude $\bar u_g$, therefore drops out of the readout.
Sec.~\ref{sec:perturbation} builds on this fact to show that the
readout survives strong recall.

The chosen direction also reflects local geometry through
$\boldsymbol{\Sigma}_G$. Exploration writes the recurrent weights $\boldsymbol{S}$ and the
place fields $\mathcal{G}_{\boldsymbol{y}_t}$ at the same time, and the two inherit the
same accessibility pattern: both extend along directions in which
the animal can move from $\boldsymbol{y}_t$, and both stop at walls. Hence
$\boldsymbol{\Sigma}_G$ is approximately isotropic in open space, narrows along
the blocked direction near a wall, and stretches along the open
direction inside a corridor. The readout filters $\nabla\log\psi^g$
through this geometry, so the layout of walls and corridors shapes
the chosen direction.

The same readout produces two firing patterns, depending on
whether a goal is recalled. Without recall, $\psi^g$ is
approximately constant across the place field, and
$\Delta\boldsymbol{x}_{\mathrm{pop}}$ vanishes because $\mathcal{G}_{\boldsymbol{y}_t}$ is symmetric
around $\boldsymbol{y}_t$. The active cells are those whose place fields
contain $\boldsymbol{y}_t$, and the readout returns the resting place code.
When a goal is recalled, the network relaxes into its goal-locked
field $\boldsymbol{\psi}^g$. The readout cells then fire at
$\mathcal{G}_{\boldsymbol{y}_t}(\boldsymbol{x})\,\psi^g(\boldsymbol{x})$, and their population vector points along
$\boldsymbol{\Sigma}_G(\boldsymbol{y}_t)\,\nabla\log\psi^g(\boldsymbol{y}_t)$. This matches the mean
displacement of the optimal next-step distribution
$\mu^*(j\mid i)\propto P_0(j\mid i)\,\psi^g(j)$ at $\boldsymbol{y}_t$
(App.~\ref{app:popvec-lmdp}). Hippocampal cells are reported to
hold a stable place code at rest and to become goal-modulated
during navigation~\citep{Sarel2017,OrmondOKeefe2022}; in our
model, both behaviors arise from a single readout population, with
the readout reflecting whether the upstream field is flat or
goal-locked.

\subsection{Robustness to recall amplitude}
\label{sec:perturbation}

Theorem~\ref{thm:equiv} assumes that the network stays in the
linear range of $\phi$ around the goal-free fixed point. When
goal recall is strong, $\phi$ saturates at the clamped goal
neuron, and the field near the goal no longer matches this
linear prediction. We now show that the readout direction is
robust to this saturation.

The leading effect of saturation is a uniform rescaling of the
whole field. Because the clamped neuron's output saturates, the
drive it sends into the rest of the network shrinks from
$\bar u_g$ to a smaller effective amplitude
$A_{\mathrm{eff}}(\bar u_g)<\bar u_g$
(App.~\ref{app:perturbation-nonlinear-steady}). The rest of the network stays in
its linear range, so every non-goal neuron is rescaled by the
same factor $A_{\mathrm{eff}}(\bar u_g)/\bar u_g$. The readout in
Eq.~\eqref{eq:popvec} depends on $\psi^g$ only through its
log-gradient, which is unchanged by any uniform rescaling. The
chosen next-step direction is therefore the same as in the linear
case.

A small residual shape distortion remains in a shell around the
goal, where neurons close to the clamped neuron also enter
saturation. This distortion decays exponentially with distance
from the goal, so the shell radius grows only logarithmically
with $\bar u_g$ and stays bounded for any saturating $\phi$
(App.~\ref{app:perturbation-far-field-amplitude}). Outside the shell, the readout
direction is set by the recurrent weights learned during
exploration. Sec.~\ref{sec:exp-linearization} confirms these
predictions numerically: at strong recall, the outside-shell
readout direction matches the linear case to sub-degree
precision, and the shell radius scales as $\log\bar u_g$.

\subsection{The weak-cost limit recovers Green's function navigation}
\label{sec:exp-green}

A recent account of hippocampal-entorhinal coding~\citep{Zuo2026Laplacian}
pointed out two roles for a discrete Laplacian on the explored
environment. First, its eigenfunctions reproduce the firing maps of
grid, band, and boundary cells in the medial entorhinal cortex.
Second, its Green's function (the response to a localized drive)
yields a potential whose gradient guides goal-directed paths. That
account derived the Laplacian from a normative principle, but left
open which circuit produces these patterns.

The recurrent relaxation of Sec.~\ref{sec:equivalence} supplies that
circuit. Let $\boldsymbol{L} := \boldsymbol{I} - \boldsymbol{S}$ be the Laplacian of the recurrent
connectivity; it equals the Laplacian of~\citep{Zuo2026Laplacian} up to a
positive scalar (App.~\ref{app:green-laplacian-identity}). Since $\boldsymbol{L} = \boldsymbol{I} - \boldsymbol{S}$, the two
matrices share the same eigenvectors. The spatial codes
of~\citep{Zuo2026Laplacian} thus emerge as natural firing patterns of the
recurrent network, whose connectivity $\boldsymbol{S}$ is written by BTSP during
ordinary exploration (Sec.~\ref{sec:learning}).

The Green's function appears in the weak-cost limit of the same
relaxation. When the LMDP step cost vanishes ($q_0 \to 0$), the
recurrent gain $\alpha = e^{-q_0/\lambda}$ approaches one, and on the
non-goal neurons $\mathcal{I}_g$, Eq.~\eqref{eq:dirichlet} reduces to
\begin{equation}
\boldsymbol{L}_{\mathcal{I}_g\mathcal{I}_g}\,\boldsymbol{u}_{\mathcal{I}_g}
\;=\;
\boldsymbol{S}_{\mathcal{I}_g g}\,\bar u_g,
\label{eq:weak-cost}
\end{equation}
where $\boldsymbol{L}_{\mathcal{I}_g\mathcal{I}_g}$ is the submatrix of $\boldsymbol{L}$ on the
non-goal neurons. In this limit, the relaxation returns the Green's
function of $\boldsymbol{L}_{\mathcal{I}_g\mathcal{I}_g}$, which matches the
navigation potential of~\citep{Zuo2026Laplacian} (App.~\ref{app:green-boundary-driven}).

In the planning picture, the desirability $\boldsymbol{z}^g$ flattens out as
$q_0 \to 0$: when each step is free, every neuron eventually reaches
the goal under the spontaneous flow $\boldsymbol{P}_0$, so $z^g(i) \to 1$ everywhere.
The spatial structure of $\boldsymbol{z}^g$ now sits in the leading correction:
$\log z^g(i) \approx -(q_0/\lambda)\,\mathbb{E}_i[\tau_g]$, where
$\mathbb{E}_i[\tau_g]$ is the expected number of spontaneous steps
from neuron $i$ to the goal under $\boldsymbol{P}_0$. The finite-cost regime of
Sec.~\ref{sec:equivalence} and this vanishing-cost regime are thus
two ends of a single cost axis, both arising from the same recurrent
weights $\boldsymbol{S}$. App.~\ref{app:exp-green-limit} reports numerical checks
across cost values.

\section{Learning the recurrent connectivity}
\label{sec:learning}
\subsection{Transition-gated symmetric plasticity}
\label{sec:btsp-rule}

Theorem~\ref{thm:equiv} requires two properties of the recurrent matrix $\boldsymbol{S}$: 
it vanishes on inaccessible transitions, and it takes the symmetric form 
$\boldsymbol{S}=\boldsymbol{\Pi}^{1/2}\boldsymbol{P}_0\boldsymbol{\Pi}^{-1/2}$, where $\boldsymbol{P}_0$ is a transition kernel satisfying 
detailed balance. Both arise from behavioral-timescale synaptic plasticity 
(BTSP) during ordinary goal-free exploration.

BTSP is triggered by dendritic plateau potentials and potentiates synapses 
whose presynaptic input fired within a few seconds of the 
plateau~\citep{Bittner2017,MageeGrienberger2020}. Recent CA3 recordings show 
that place fields there emerge through a symmetric BTSP-like process at 
recurrent CA3--CA3 synapses, with entorhinal input updating the recurrent 
dynamics during movement~\citep{Li2024}. This is consistent with broad 
symmetric spike-timing-dependent plasticity at these 
synapses~\citep{Mishra2016} and with computational evidence that such 
symmetric rules support stable spatial coding~\citep{Keck2025}. We model 
the induction kernel as symmetric: $k(\tau)=k(-\tau)\ge 0$, where $\tau$ is 
the lag between pre- and postsynaptic firing within the BTSP window.

The seconds-long induction window lets BTSP bind firing at successively 
visited locations. Such firing sequences arise from physical exploration, 
driven by self-motion and path-integration signals~\citep{McNaughton2006}, 
and from internally generated theta sequences and replay events during 
pauses and slow 
movement~\citep{JohnsonRedish2007,Davidson2009,WikenheiserRedish2015,%
PfeifferFoster2013,WidloskiFoster2022}. Let $a_i(t)\ge 0$ denote the firing 
rate of neuron $i$. Let $\Xi_{ij}(t,\tau)\in\{0,1\}$ be a transition gate 
that equals one when neurons $i$ and $j$ fire in sequence at lag $\tau$ 
within the BTSP window centered at $t$, and zero otherwise. Such a sequence 
marks an accessible transition between the two locations the neurons code. 
The accumulated weight on the synapse between them is
\begin{equation}
W^{\mathrm{tr}}_{ij}
=\sum_\tau k(\tau)\,
\mathbb{E}\!\bigl[a_i(t)\,a_j(t+\tau)\,\Xi_{ij}(t,\tau)\bigr].
\label{eq:btsp}
\end{equation}

Walls and blocked corridors permit no physical traversal, and replay 
sequences likewise respect these 
barriers~\citep{Gupta2010,WuFoster2014,WidloskiFoster2022}, so $\Xi_{ij}=0$ 
on these pairs and $W^{\mathrm{tr}}_{ij}=0$. Physical traversals and replay 
both sample accessible transitions in either 
direction~\citep{FosterWilson2006,DibaBuzsaki2007,Gupta2010}, so combined with the symmetric 
kernel $k$, $\boldsymbol{W}^{\mathrm{tr}}$ is symmetric in expectation: 
$W^{\mathrm{tr}}_{ij}=W^{\mathrm{tr}}_{ji}$. Finite-sample fluctuations and biases from goal-directed behavior leave a 
small asymmetric residue (App.~\ref{app:asymmetry-irreversible-sampling}; numerically verified in 
App.~\ref{app:exp-asymmetric-btsp}).

We turn $\boldsymbol{W}^{\mathrm{tr}}$ into the transition kernel $\boldsymbol{P}_0$ by distributing 
each row's outgoing weight among accessible neighbors and keeping the 
remainder as the probability of staying put:
\begin{equation}
P_0(j\mid i)=\epsilon\,W^{\mathrm{tr}}_{ij}\;(j\ne i),
\qquad
P_0(i\mid i)=1-\epsilon\sum_{j\ne i}W^{\mathrm{tr}}_{ij},
\label{eq:lazy}
\end{equation}
with $\epsilon>0$ small enough that every row is nonnegative. When 
$\boldsymbol{W}^{\mathrm{tr}}$ is symmetric, so is $\boldsymbol{P}_0$, and detailed balance holds with 
respect to the uniform measure $\pi_i=1/N$; the recurrent matrix 
$\boldsymbol{S}=\boldsymbol{\Pi}^{1/2}\boldsymbol{P}_0\boldsymbol{\Pi}^{-1/2}$ reduces to $\boldsymbol{S}=\boldsymbol{P}_0$, and Theorem~\ref{thm:equiv} 
applies. Uneven exploration or boundary effects can make this measure 
nonuniform, adding a $(1/2)\nabla\log\pi$ bias to the readout direction 
(App.~\ref{app:asymmetry-readout-bias}).

The same kernel $\boldsymbol{P}_0$ plays a dual role: it is both the reference 
distribution of the planning problem in Sec.~\ref{sec:equivalence} and the 
spontaneous next-step distribution that the recurrent activity produces at 
rest. A single set of synapses thus carries the cognitive map and generates 
the network's spontaneous flow through it.

\subsection{Local environmental change as a low-rank update}
\label{sec:lowrank}

The same plasticity that wrote the cognitive map remains active after the environment changes locally.When a corridor opens or a doorway closes,
transition events at the affected synapses resume or cease, and the
corresponding entries of $\boldsymbol{W}^{\mathrm{tr}}$ update accordingly.
Entries away from the change retain the values written during
earlier exploration.

A handful of traversals through the changed region therefore suffice
to update the connectivity for all future plans. At the next
planning pause, whichever goal is recalled, the relaxation runs on
the updated $\boldsymbol{S}$ and returns a goal-locked field consistent with the
new layout. A single local edit reshapes plans toward every
recalled goal at once.

This rerouting has a compact structure. Editing $k$ synapses changes
only those $k$ entries of $\boldsymbol{S}$. The change in goal-locked fields
decomposes (App.~\ref{app:woodbury-unit-clamped-all-goal-fields}) into a rank-$\mathcal{O}(k)$ propagation
pattern, which describes how perturbations at the edited synapses
spread through the network, plus a per-goal scaling factor that
keeps each goal clamp at its prescribed amplitude. The readout
direction depends on the field only through its log-gradient
(Eq.~\eqref{eq:popvec}), which is invariant under uniform rescaling,
so the per-goal scaling cancels in the readout. The change in chosen
next-step direction across all recalled goals is therefore captured
by the rank-$\mathcal{O}(k)$ propagation pattern: a few directions of
synaptic change explain the rerouting toward every goal.
Sec.~\ref{sec:exp-btsp-navigation} confirms this numerically.

\section{Experiments}
\label{sec:experiments}

We tested the theory on grid mazes. Each open cell is coded by one neuron, and neurons coding adjacent cells are recurrently connected (eight-neighbor connectivity). Implementation details are in App.~\ref{app:exp-common}.
\begin{figure}[t]
\centering
\includegraphics[width=0.8\linewidth]{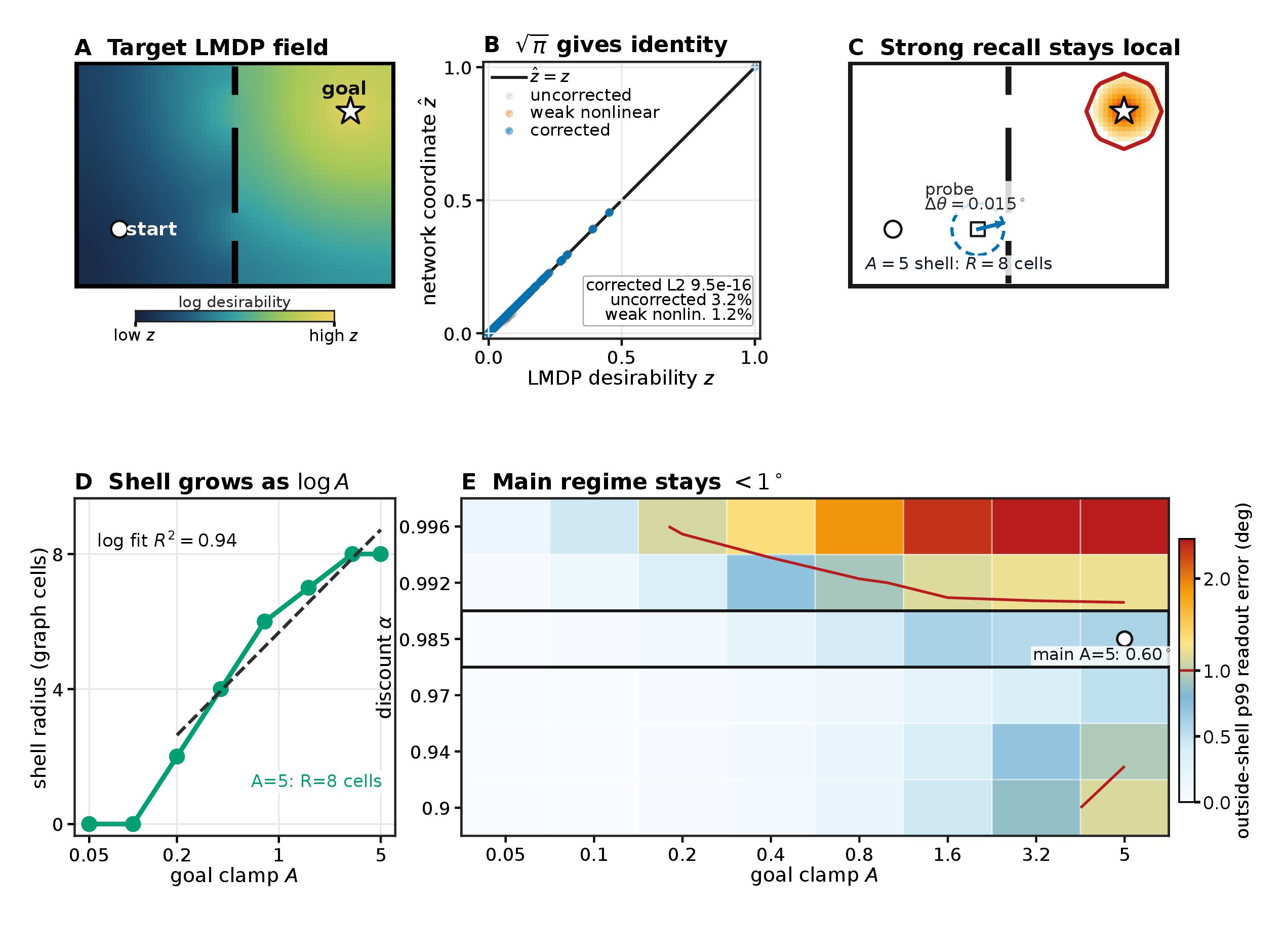}
\caption{\textbf{The relaxation reproduces the LMDP field, and the readout is amplitude-robust.}
(\textbf{A}) Two-room diagnostic maze; color shows the log desirability (low at start, high at goal).
(\textbf{B}) Network steady state mapped to the LMDP coordinate $\widehat{\boldsymbol{z}}$, plotted against the LMDP solution $\boldsymbol{z}$. Corrected fit on identity ($L_2=9.5\times 10^{-16}$); uncorrected residual $3.2\%$; weak nonlinear recall ($A=0.2$) contributes $1.2\%$.
(\textbf{C}) Strong recall ($A=5$) confines shape distortion to an $8$-cell shell around the goal; at a probe outside the shell, $\Delta\theta=0.015^\circ$.
(\textbf{D}) Shell radius grows as $\log A$ ($R^2=0.94$).
(\textbf{E}) Outside-shell $99$th-percentile readout error across $\alpha$ and $A$. The main row $\alpha=0.985$ (black box) stays below $1^\circ$ across the full recall range; the red contour marks where the $1^\circ$ boundary is crossed in adjacent rows.}
\label{fig:equivalence-linearization}
\end{figure}
\subsection{The relaxation reproduces the LMDP field, and the readout is amplitude-robust}
\label{sec:exp-linearization}

Theorem~\ref{thm:equiv} predicts that $\boldsymbol{\psi}^g$ equals the LMDP desirability $\boldsymbol{z}$ up to a $\sqrt{\pi_i/\pi_g}$ factor. To make this factor numerically visible, we used a two-room maze whose narrow doorways make the spontaneous occupancy $\boldsymbol{\pi}$ mildly non-uniform (Fig.~\ref{fig:equivalence-linearization}A). At $\alpha=0.985$, the corrected field $\widehat{\boldsymbol{z}}$ with entries $\widehat z_i := \sqrt{\pi_g/\pi_i}\,\psi^g_i$ matched $\boldsymbol{z}$ to machine precision (relative $L_2$ error $9.5\times 10^{-16}$); the uncorrected field $\boldsymbol{\psi}^g$ differed from $\boldsymbol{z}$ by $3.2\%$, exactly the size of the $\sqrt{\boldsymbol{\pi}}$ correction (Fig.~\ref{fig:equivalence-linearization}B). The same agreement held across all $\alpha$ tested (App.~\ref{app:exp-equivalence}).

We next checked the robustness predicted in Sec.~\ref{sec:perturbation}: because the readout depends on $\psi^g$ only through its log-gradient (Eq.~\eqref{eq:popvec}), saturation of $\phi$ should distort the readout direction only inside a small region around the clamped goal. With $\phi=\tanh$, we swept the recall amplitude $A:=\bar u_g$ from $0.05$ to $5$. At the strongest recall ($A=5$), the field deviated from the linear prediction inside a shell of about eight neurons around the goal (Fig.~\ref{fig:equivalence-linearization}C); outside this shell, the readout direction matched the linear prediction to $0.15^\circ$ on average and $0.60^\circ$ at the $99$th percentile. The shell radius grew as $\log A$ ($R^2=0.94$, Fig.~\ref{fig:equivalence-linearization}D), as predicted. A joint sweep over $\alpha$ and $A$ confirmed both findings (Fig.~\ref{fig:equivalence-linearization}E): at $\alpha=0.985$, the outside-shell readout error stayed below $1^\circ$ across the entire recall range.

\begin{figure}[t]
\centering
\includegraphics[width=0.8\linewidth]{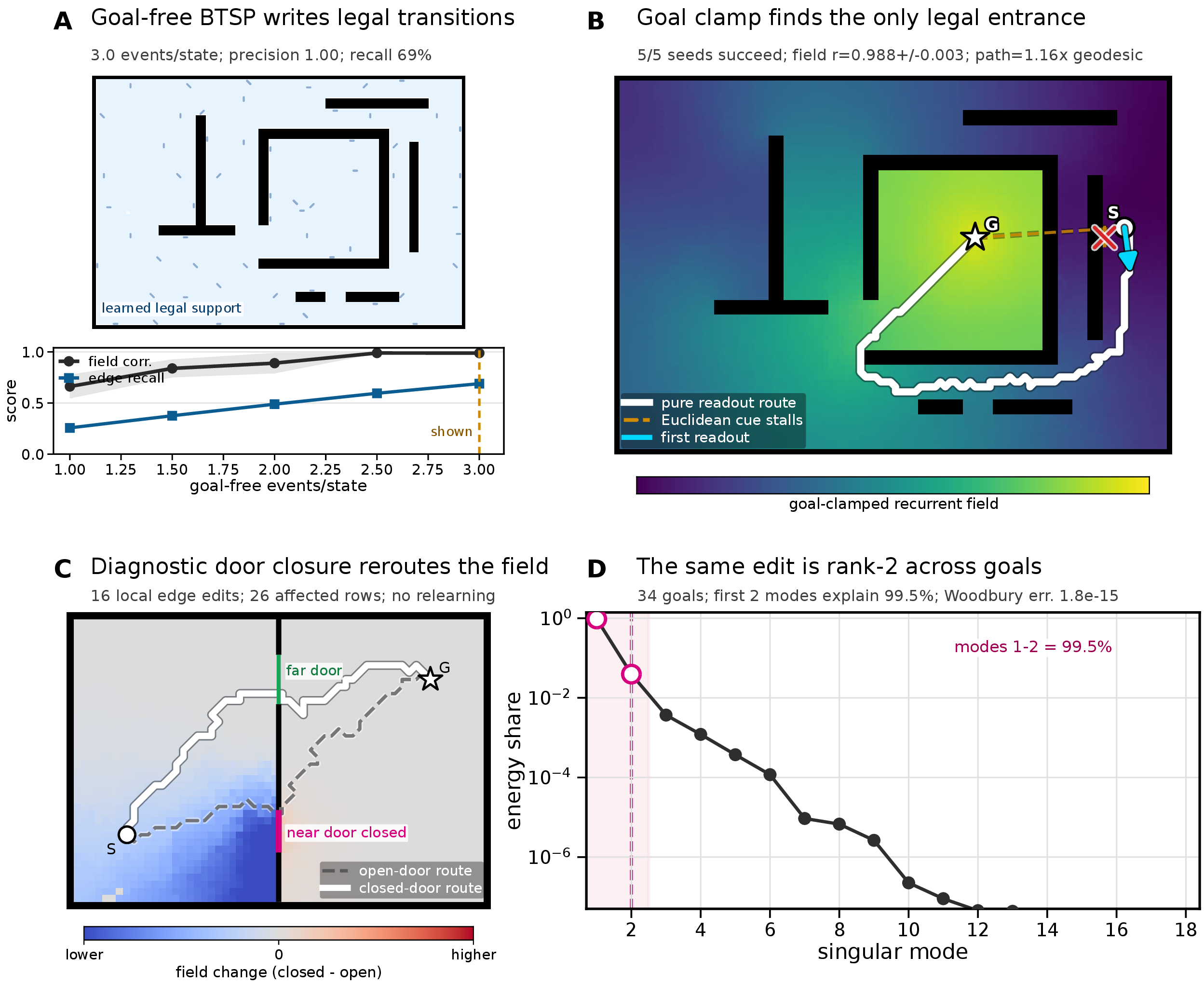}
\caption{\textbf{Goal-free BTSP solves the bug trap, and a local edit reroutes plans globally.}
(\textbf{A}) Top: connectivity learned by goal-free BTSP. Bottom: field correlation and edge recall as functions of goal-free transitions per neuron; the dashed line marks the displayed setting ($3.0$ transitions per neuron, recall $69\%$, precision $1.000$).
(\textbf{B}) Goal-clamped field (color) and population-vector rollout (white) trace a $110$-step route; first readout (cyan) points $93.8^\circ$ off the Euclidean direction (orange dashed), which stalls at the wall.
(\textbf{C}) Two-door diagnostic. Closing the near door updates only the local crossing connections; the next relaxation reroutes through the far door (white) without further learning. Color: field difference (closed minus open).
(\textbf{D}) Singular spectrum of field differences across $34$ goals (log scale); the first two modes carry $99.5\%$ of the total energy.}
\label{fig:btsp-navigation-lowrank}
\end{figure}
\subsection{Goal-free BTSP solves the bug trap, and a local edit reroutes plans globally}
\label{sec:exp-btsp-navigation}

We next asked whether goal-free BTSP alone can build a
connectivity that supports planning around a complex barrier.
We used a bug-trap maze: the goal sits a short Euclidean distance
from the start, yet the only legal route detours through a distant
entrance (Fig.~\ref{fig:btsp-navigation-lowrank}B). Each sampled
transition strengthened the forward and reverse synapses between
the two visited neurons equally
(Sec.~\ref{sec:btsp-rule}; protocol in App.~\ref{app:exp-btsp}; asymmetric extension in 
App.~\ref{app:exp-asymmetric-btsp}).

After $3$ transitions per neuron, the learned connectivity
recovered $68.9\%$ of the accessible connections, with no false
positives (precision $1.000$; mean across $5$ runs;
Fig.~\ref{fig:btsp-navigation-lowrank}A). Its goal-clamped field
correlated at $0.988\pm 0.003$ (mean $\pm$ SEM;
App.~\ref{app:exp-btsp}) with the field computed directly from
the true maze layout. Every run reached the goal by following the
population-vector readout, with median path length only
$1.16\times$ the shortest legal route. At the start, the readout
already pointed $93.8^\circ$ away from the Euclidean direction,
aiming instead at the distant entrance that opened the only legal
route (Fig.~\ref{fig:btsp-navigation-lowrank}B). An agent guided
only by the Euclidean direction stalled at the wall after $16$
steps.

To test the low-rank rerouting predicted in
Sec.~\ref{sec:lowrank}, we trained BTSP on a two-door maze
($2$ transitions per neuron;
Fig.~\ref{fig:btsp-navigation-lowrank}C). Closing the near door
updated only the synapses that cross it, and the next relaxation
rerouted plans through the far door without any further learning
(Fig.~\ref{fig:btsp-navigation-lowrank}C, white). We recalled
$34$ goals across the maze and stacked the changes in their
goal-clamped fields (closed door minus open door) into a matrix.
A singular value decomposition of this matrix showed that the
first two correction patterns alone captured $99.5\%$ of the
total energy (Fig.~\ref{fig:btsp-navigation-lowrank}D). A single
local edit therefore reroutes plans toward every recalled goal
through a rank-$2$ correction, as predicted in
Sec.~\ref{sec:lowrank}. App.~\ref{app:exp-green-limit}
additionally verifies the weak-cost Green's-function limit
(Sec.~\ref{sec:exp-green}) on the same connectivity.

\section{Discussion}
\label{sec:discussion}

We presented a hippocampal recurrent network that plans
obstacle-avoiding routes via its own relaxation dynamics.
Behavioral-timescale synaptic plasticity (BTSP) writes the
recurrent weights during exploration, storing the
cognitive map. At goal recall, the same circuit relaxes into a
goal-locked activity field, which a place-tuned readout converts
into a next-step direction. This field equals the desirability
field of a Linearly-solvable Markov Decision Process (LMDP). In
the weak-cost limit, the same relaxation returns the Green's
function of the recurrent Laplacian, supplying a circuit-level basis
for the Laplacian-based account of grid, band, and boundary
cells~\citep{Zuo2026Laplacian}.

\textbf{Limitations:} Our model describes a hippocampal neural population 
without committing to a specific subfield. BTSP-like signatures
have been reported across several subfields~\citep{Bittner2017,MageeGrienberger2020,Qian2025,Li2024}, and intracellular recordings during planning pauses could
test which subfield supports this relaxation. The transition gate
$\Xi_{ij}$ in Sec.~\ref{sec:btsp-rule} is also idealized:
in real circuits, self-motion signals and internally generated
sequences likely cooperate to restrict BTSP to accessible
transitions. Identifying this cellular mechanism and measuring
the BTSP kernel and inhibitory gain in vivo, are concrete next
steps.

\textbf{Prediction:}During goal recall, each place-tuned cell should keep
its resting place-field center and pick up a subthreshold
component proportional to $\psi^g(\boldsymbol{x})$, where $\boldsymbol{x}$ is the cell's
preferred location. This component should be larger for cells
with shorter legal routes to the goal, and should rearrange after a local change to the environment -- explaining how stable CA1
place fields~\citep{Duvelle2021} can coexist with adaptive
plans. A local environmental edit should also change the plans
for many recalled goals at once. These changes should follow
only a few common patterns concentrated near the edit, detectable
within a few traversals of the changed region.

\textbf{Outlook:}The same idea extends beyond spatial navigation.
Hippocampal-entorhinal maps have been reported for sensory
continua, social hierarchies, and abstract task
structures~\citep{Behrens2018,Garvert2023,Courellis2024,Tacikowski2024}. A 'step' looks different in each
setting: a physical movement in space, a shift of attention between items, or an imagined step along a chain of
associations.

Behind the technical equivalence lies a simple picture of cognition.
Each synaptic weight $S_{ij}$ records one transition, and the
agent's actions decide which transitions exist. States are then defined by the transitions they participate in: the
transitions come first, and the states emerge from them.
Goal-directed cognition becomes the relaxation of this learned web
of transitions, and the planning is the relaxation itself.

  \newpage
\appendix
\AppendixContents
\EnableAppendixContents
\newpage

\section{Related work}
\label{app:related-work}

\paragraph{Cognitive maps as learned transition structure.}
The cognitive map hypothesis began as a behavioral account of flexible 
navigation \citep{Tolman1948,OKeefeNadel1978}. Neural correlates were 
later identified in hippocampal place cells, entorhinal grid cells, and 
experience-dependent spatial coding \citep{OKeefeDostrovsky1971,
OKeefeBurgess1996,Hafting2005,McNaughton2006,McHugh1996,Mehta2000,
Lever2002}. The same principle has since been extended beyond physical 
space, to relational, temporal, and abstract task structure 
\citep{Behrens2018,Whittington2020,Whittington2022Review,George2021,
ElGaby2024,Courellis2024,Tacikowski2024,Garvert2023,Bakermans2025}. Our 
model adopts this transition-centric view and goes one step further: the 
transition structure is encoded directly in the recurrent connectivity. 
Each recurrent synapse carries the probability of a locally accessible 
move; walls and blocked corridors leave the corresponding synapses at 
zero; and goal recall clamps a single neuron in the same population. 
Planning then emerges as the relaxation of the very circuit that stores 
the map.

\paragraph{Successor representations and predictive maps.}
Successor representations (SRs) store long-horizon expected occupancies 
under a reference policy and convert any reward or goal vector into a 
value function \citep{Dayan1993,Russek2017,Momennejad2017,Gershman2018}. 
This idea has been developed into hippocampal predictive-map accounts, 
in which place- and grid-like codes reflect the eigenstructure of a 
learned transition matrix \citep{Stachenfeld2017,deCothiBarry2020,
deCothi2022,PirayDaw2021,Bono2023,Fang2023,George2023}. These models 
explain why hippocampal representations should carry predictive 
structure, while leaving open how the predictive computation is 
realized in a recurrent circuit. Our framework supplies this circuit: 
the predictive field is generated at recall time by a goal-clamped 
relaxation, and we prove it equals the desirability field of an LMDP 
(Theorem~\ref{thm:equiv}). The same recurrent connectivity serves all 
goals, with the choice of goal entering only as a boundary clamp at 
recall.

\paragraph{Linearly-solvable planning, harmonic navigation, and Laplacian 
methods.}
Linearly-solvable Markov decision processes (LMDPs) reformulate 
KL-regularized planning as a linear desirability equation 
\citep{Kappen2005,Todorov2006,Todorov2009,Levine2018}. In robotics, 
potential fields, harmonic functions, and Laplace equations have long 
served as path-planning tools that generate obstacle-avoiding routes 
\citep{Khatib1986,ConnollyBurnsWeiss1990,ConnollyGrupen1993,Connolly1997,
RimonKoditschek1992}. In reinforcement learning, Laplacian eigenfunctions 
and proto-value functions provide geometry-aware bases for representation 
and option discovery \citep{MahadevanMaggioni2007,Machado2017}. In 
neuroscience and machine learning, Laplacian and spectral accounts have 
been linked to grid-cell coding and spatial abstraction 
\citep{Dordek2016,Stachenfeld2017,Banino2018,CuevaWei2018,Zuo2026Laplacian}. 
These four lines treat the desirability equation, the harmonic potential, 
the Laplacian basis, and the cognitive map as four separate objects. 
Our framework joins them in one recurrent circuit: the same connectivity 
that stores transitions, under a goal clamp, produces a solution to the 
LMDP desirability equation; in the weak-cost limit, this solution 
converges to a Green's function of the Laplacian formed from the same 
connectivity.

\paragraph{Planning as network dynamics.}
Recurrent neural networks have long been proposed as computational 
substrates for pattern completion, attractor dynamics, and optimization 
\citep{Amari1977,HopfieldTank1985,Tsodyks1999}. In hippocampal-entorhinal 
models, continuous attractors support path integration and spatial codes 
\citep{SamsonovichMcNaughton1997,BurakFiete2009}, and spiking and 
recurrent network models can generate wavefronts, preplay events, and 
goal-directed sequences \citep{ErdemHasselmo2012,PonulakHopfield2013,
CorneilGerstner2015,HagaFukai2018,Jensen2024}. In machine learning, 
differentiable planning modules embed dynamic programming inside a 
neural network \citep{Tamar2016,Lee2018}. Our framework specifies a 
different role for the recurrent connectivity: the recurrent weights 
store the passive transition probabilities of the environment, the 
goal is delivered as a clamping current at a single neuron, and the 
steady state of the clamped circuit equals the LMDP desirability 
field. Planning therefore takes the form of a physical relaxation of 
the stored map, performed by the very circuit that holds the map; at 
decision time, the computation reduces to a single equilibration of 
the recurrent dynamics.

\paragraph{Replay, theta sequences, and internally generated 
trajectories.}
Hippocampal theta sequences and sharp-wave ripple replay express 
forward, reverse, and nonlocal trajectories at compressed timescales 
\citep{JohnsonRedish2007,FosterWilson2006,DibaBuzsaki2007,Davidson2009,
Gupta2010,PfeifferFoster2013,WikenheiserRedish2015,Buzsaki2015,
JooFrank2018}. These internally generated sequences have been linked 
to memory updating, planning, and reinforcement learning 
\citep{Jadhav2012,MattarDaw2018,Liu2019,Liu2021,SchuckNiv2019,
Schwartenbeck2023,Sagiv2025,He2026Ripples,Zhou2026Ripples}. Recent work 
also emphasizes that replay can serve offline learning, map 
construction, and experience restructuring, with its behavioral role 
varying across species and tasks \citep{vanDerMeerBendor2025,Forli2025}. 
Our framework assigns replay a mechanistic role compatible with these 
observations. Replay and theta sequences activate pairs of assemblies 
within the BTSP gating window, thereby writing or updating the 
corresponding recurrent transition probabilities. Once the connectivity 
is in place, the online action signal at each step is the goal-clamped 
relaxation field and its local log-gradient. The offline sampling that 
builds the map is therefore separated from the online relaxation that 
reads the map for action.

\paragraph{CA3 recurrence, BTSP, and local transition learning.}
CA3 has long been modeled as a recurrent associative network 
supporting pattern completion and memory retrieval 
\citep{Marr1971,TrevesRolls1994,Nakazawa2002}. Anatomically, CA3 
pyramidal cells are densely connected by recurrent collaterals 
\citep{AmaralWitter1989,Ishizuka1990,Sammons2024}. At the plasticity 
level, behavioral-timescale synaptic plasticity (BTSP) creates new 
place fields within seconds of behavioral input 
\citep{Bittner2017,MageeGrienberger2020,Qian2025}; recent CA3 
recordings reveal BTSP-like dynamics in memory-supporting circuits 
\citep{Li2024}; and CA3-CA3 synapses can express broad, symmetric 
timing-dependent plasticity \citep{Mishra2016}. Our learning rule 
abstracts these observations into a transition-gated symmetric 
accumulation: when an accessible move or a replayed transition 
co-activates two assemblies within the BTSP window, the corresponding 
recurrent connection is strengthened symmetrically. Walls and blocked 
corridors leave the corresponding connections at zero, since they 
yield no transition events. Exact reversibility is an idealization 
that simplifies the equivalence theorem; the appendix treats finite 
sampling and direction-biased replay as perturbations, and 
bidirectional sampling projects the empirical transition statistics 
onto their reversible component.

\paragraph{Goal coding and a local population-vector readout.}
The hippocampal formation contains goal-, reward-, and vector-related 
signals that coexist with stable spatial coding 
\citep{Hok2007,GauthierTank2018,Aoki2019,Sarel2017,OrmondOKeefe2022,
Nyberg2022,Sosa2025}, and stable place fields can persist across 
changes in replay structure or task-dependent sequence content after 
environmental edits \citep{Duvelle2021,WuFoster2014,WidloskiFoster2022}. 
Our readout mechanism explains this coexistence by keeping location 
tuning and goal-locked modulation in separate components of the same 
firing rate. A downstream place-tuned cell receives two inputs: a 
local place-kernel input centered on the animal's current position, 
and a recurrent input from the upstream desirability field evaluated 
at the cell's preferred location. The two inputs combine 
multiplicatively on the dendrite, so the cell's place-field center is 
preserved while a goal-dependent gain shapes its rate. This dendritic 
gain interaction is supported by classical work on dendritic nonlinear 
integration and shunting gain control 
\citep{MitchellSilver2003,Poirazi2003,LondonHausser2005,Silver2010}. 
The model therefore predicts that, during recall, hippocampal cells 
retain their resting place-field centers while acquiring a subthreshold 
or rate-level modulation proportional to the desirability of those 
centers.

\paragraph{Local remapping and global rerouting.}
A computational consequence of representing the environment as a 
recurrent transition matrix is that local environmental changes 
correspond to local synaptic edits. When a doorway closes or a corridor 
opens, only the affected transition weights are revised; at the next 
goal recall, the relaxation propagates this local edit and updates the 
field for whichever goal is recalled. This provides a mechanistic 
interpretation for the flexible replay rerouting around barriers 
reported in \citet{WidloskiFoster2022}, and for the adaptive planning 
that coexists with stable place fields in \citet{Duvelle2021}. 
Algebraically, a small synaptic edit is a low-rank perturbation of the 
recurrent connectivity, so the induced changes in goal-locked fields 
across all goals are captured by a small number of correction patterns. 
The map update is local, while its planning consequence is immediately 
global; this locality-globality coupling is a hallmark of the 
recurrent-relaxation account.
\section{Notation and standing assumptions}
\label{app:notation}

The explored environment is represented by a finite set of assemblies
\[
\mathcal X=\{1,\ldots,N\}.
\]
Assembly \(i\) codes location \(i\), with physical coordinate \(\boldsymbol{x}_i\) when a coordinate is needed. The recalled goal is denoted by \(g\). The non-goal set is
\[
\mathcal I_g=\mathcal X\setminus\{g\}.
\]
For a matrix \(\boldsymbol{M}\), \(\boldsymbol{M}_{AB}\) denotes the submatrix with rows indexed by \(A\) and columns indexed by \(B\). Dirichlet equations are written on \(\mathcal I_g\), with the goal assembly held fixed.

If the reversible support graph has several connected components, the theory below is applied to the closed irreducible component containing the recalled goal. States in components not containing \(g\) have no passive path to \(g\), and therefore have zero Dirichlet response and zero LMDP desirability for that goal. To keep notation light, \(\mathcal X\) denotes the goal-connected component throughout the appendix.

The passive transition kernel is denoted by \(\boldsymbol{P}_0\). The row convention is
\[
(\boldsymbol{P}_0\boldsymbol{f})_i=\sum_j P_0(j\mid i)f_j ,
\]
so \(P_0(j\mid i)\) is the probability of moving from \(i\) to \(j\) under the spontaneous goal-free dynamics. The recurrent matrix \(\boldsymbol{S}\) acts with the same convention:
\[
(\boldsymbol{S}\boldsymbol{f})_i=\sum_jS_{ij}f_j .
\]
The exact symmetric theory assumes that \(\boldsymbol{P}_0\) is finite, row-stochastic, nonnegative, irreducible on the goal-connected component, and reversible with respect to a positive stationary measure \(\boldsymbol{\pi}\), normalized by \(\sum_i\pi_i=1\):
\begin{equation}
\pi_iP_0(j\mid i)=\pi_jP_0(i\mid j).
\label{eq:app-detailed-balance-notation}
\end{equation}
Let
\[
\boldsymbol{\Pi}=\operatorname{diag}(\pi_1,\ldots,\pi_N),
\qquad
\boldsymbol{S}=\boldsymbol{\Pi}^{1/2}\boldsymbol{P}_0\boldsymbol{\Pi}^{-1/2}.
\]
Then
\begin{equation}
S_{ij}=\sqrt{\frac{\pi_i}{\pi_j}}\,P_0(j\mid i),
\label{eq:app-S-coordinate}
\end{equation}
and \(\boldsymbol{S}\) is symmetric. Since \(\boldsymbol{S}\) is similar to \(\boldsymbol{P}_0\), the two matrices have the same eigenvalues. On the goal-connected irreducible component, the Perron vector of \(\boldsymbol{S}\) is
\[
\boldsymbol{\varphi}_0=\sqrt{\boldsymbol{\pi}},
\qquad
(\sqrt{\boldsymbol{\pi}})_i=\sqrt{\pi_i},
\]
with
\[
\boldsymbol{S}\boldsymbol{\varphi}_0=\boldsymbol{\varphi}_0,
\qquad
\|\boldsymbol{\varphi}_0\|_2=1.
\]
The graph Laplacian in the symmetric network coordinate is
\[
\boldsymbol{L}=\boldsymbol{I}-\boldsymbol{S}.
\]
On a connected component,
\[
\ker \boldsymbol{L}=\operatorname{span}\{\sqrt{\boldsymbol{\pi}}\}.
\]
If the full graph has several closed components, the nullspace is the span of the corresponding component-wise \(\sqrt{\boldsymbol{\pi}}\) vectors.

The Dirichlet Laplacian for a clamped goal is
\[
\boldsymbol{L}_g^{\mathrm{Dir}}
=
\boldsymbol{I}_{\mathcal I_g}-\boldsymbol{S}_{\mathcal I_g\mathcal I_g}.
\]
Because the killed passive chain on \(\mathcal I_g\) is transient when every state in the component can hit \(g\),
\[
r_{\mathrm{sp}}(\boldsymbol{P}_{0,\mathcal I_g\mathcal I_g})<1.
\]
Since \(\boldsymbol{S}_{\mathcal I_g\mathcal I_g}\) is similar to \(\boldsymbol{P}_{0,\mathcal I_g\mathcal I_g}\) and is symmetric, all eigenvalues of \(\boldsymbol{S}_{\mathcal I_g\mathcal I_g}\) are strictly smaller than one in absolute value, and
\[
\boldsymbol{L}_g^{\mathrm{Dir}}\succ0.
\]
The recurrent discount is
\[
\alpha\in(0,1),
\qquad
\kappa=-\log\alpha.
\]
The LMDP discount is
\[
\gamma=\exp(-q_0/\lambda).
\]
The exact network--LMDP equivalence uses
\[
\alpha=\gamma.
\]
The corresponding screening mass is
\[
m=\alpha^{-1}-1=\exp(q_0/\lambda)-1.
\]
All fields denoted by \(\boldsymbol{u}\) in the linear, nonlinear, and readout analyses are background-subtracted goal-induced perturbations above the goal-free fixed point. Raw firing rates have the form
\[
\boldsymbol{r}=\boldsymbol{r}^0+\boldsymbol{u},
\qquad
r_i=r_i^0+u_i.
\]
The LMDP-equivalent field is \(\boldsymbol{u}\), not the raw rate \(\boldsymbol{r}\). The normalized goal-locked activity field is
\[
\boldsymbol{\psi}^g=(\psi_i^g)_{i\in\mathcal X},
\qquad
\psi_i^g=\frac{u_i}{\bar u_g},
\]
where \(\bar u_g\) is the clamped goal perturbation. The LMDP desirability field is
\[
\boldsymbol{z}^g=(z_i^g)_{i\in\mathcal X},
\qquad
z_i^g=\exp[-V_i^g/\lambda],
\qquad
z_g^g=1.
\]
Under uniform \(\boldsymbol{\pi}\),
\[
\psi_i^g=z_i^g.
\]
For general reversible \(\boldsymbol{\pi}\),
\[
\psi_i^g=\sqrt{\frac{\pi_i}{\pi_g}}\,z_i^g.
\]
The graph distance \(d(i,j)\) is the minimum number of accessible transitions in a path from \(i\) to \(j\). For an accessible oriented edge \(e=(i,j)\), the edge gradient is
\[
\nabla_e f=f_j-f_i.
\]
When a continuous spatial gradient \(\nabla f(\boldsymbol{x}_i)\) is written on an embedded graph, it denotes the local least-squares or interpolation gradient obtained from neighboring assemblies and their coordinates.

\section{Linearization of the recurrent population}
\label{app:linearization}

Let \(r_i\) denote the raw firing rate in the recurrent rate equation
\begin{equation}
\tau_u\dot r_i
=
-r_i
+
g_s\sum_jS_{ij}\phi(r_j)
-
g_{\mathrm{inh}}\phi(r_i)
+
I_i(t),
\label{eq:app-rate-raw}
\end{equation}
where
\[
I_i(t)=I_i^{\mathrm{bg}}+I_i^{\mathrm{EC}}(\boldsymbol{y}_t)+I_i^{\mathrm{goal}}(g,t).
\]
The local inhibitory term reduces the local gain and the effective time constant in the first-order dynamics.

\subsection{Expansion around the goal-free fixed point}
\label{app:linearization-goal-free-fixed-point}

During goal-free behavior,
\[
I_i^{\mathrm{goal}}(g,t)=0.
\]
For fixed current position \(\boldsymbol{y}_t\), the background state \(\boldsymbol{r}^0\) satisfies
\begin{equation}
0
=
-r_i^0
+
g_s\sum_jS_{ij}\phi(r_j^0)
-
g_{\mathrm{inh}}\phi(r_i^0)
+
I_i^0,
\label{eq:app-background-fixed-point}
\end{equation}
where
\[
I_i^0=I_i^{\mathrm{bg}}+I_i^{\mathrm{EC}}(\boldsymbol{y}_t).
\]
The recall equations below are stated for fixed \(\boldsymbol{y}_t\), or for a quasi-static planning pause in which the change in \(\boldsymbol{r}^0(\boldsymbol{y}_t)\) is negligible over the relaxation time. If \(\boldsymbol{y}_t\) varies during the relaxation, the perturbation equation includes the additional forcing term \(-\tau_u\dot r_i^0(\boldsymbol{y}_t)\).
Define the goal-induced perturbation and raw input perturbation by
\[
u_i=r_i-r_i^0,
\qquad
\delta I_i^{\mathrm{raw}}=I_i(t)-I_i^0.
\]
Let
\[
\rho_i=\phi'(r_i^0).
\]
The Taylor estimates assume \(\phi\in C^2\) on the relevant bounded activity interval. For a piecewise-smooth nonlinearity, the same estimates apply on intervals that do not cross a kink or hard saturation boundary.
Taylor expansion gives
\[
\phi(r_i^0+u_i)
=
\phi(r_i^0)
+
\rho_i u_i
+
\frac12\phi''(r_i^0)u_i^2
+
\mathcal{O}(|u_i|^3).
\]
Substituting into Eq.~\eqref{eq:app-rate-raw} and subtracting Eq.~\eqref{eq:app-background-fixed-point} yields
\begin{equation}
\tau_u\dot u_i
=
-u_i
+
g_s\sum_jS_{ij}\rho_ju_j
-
g_{\mathrm{inh}}\rho_i u_i
+
\delta I_i^{\mathrm{raw}}
+
\mathcal{R}_i(\boldsymbol{u}),
\label{eq:app-exact-linearization}
\end{equation}
where
\begin{align}
\mathcal{R}_i(\boldsymbol{u})
&=
g_s\sum_jS_{ij}
\Big[
\phi(r_j^0+u_j)-\phi(r_j^0)-\rho_j u_j
\Big]
\nonumber\\
&\quad
-
g_{\mathrm{inh}}
\Big[
\phi(r_i^0+u_i)-\phi(r_i^0)-\rho_i u_i
\Big].
\label{eq:app-remainder-exact}
\end{align}
On any bounded activity interval there is a finite constant \(C_\phi\), depending on \(g_s\), \(g_{\mathrm{inh}}\), \(\boldsymbol{S}\), and the supremum of \(|\phi''|\) on that interval, such that
\begin{equation}
|\mathcal{R}_i(\boldsymbol{u})|
\le
C_\phi
\left(
|u_i|^2+\sum_j |S_{ij}|\,|u_j|^2
\right).
\label{eq:app-remainder-bound-raw}
\end{equation}

\subsection{Homogeneous-gain scalar-discount reduction}
\label{app:linearization-homogeneous-gain}

The exact scalar-discount theory is obtained when the background gain is homogeneous:
\[
\rho_i=\rho>0
\qquad
\text{for all }i.
\]
Then Eq.~\eqref{eq:app-exact-linearization} becomes
\[
\tau_u\dot{\boldsymbol{u}}
=
-(1+g_{\mathrm{inh}}\rho)\boldsymbol{u}
+
g_s\rho\,\boldsymbol{S}\boldsymbol{u}
+
\delta\boldsymbol{I}^{\mathrm{raw}}
+
\bm{\mathcal{R}}(\boldsymbol{u}).
\]
Define
\begin{equation}
c_\ell=1+g_{\mathrm{inh}}\rho,
\qquad
\tau=\frac{\tau_u}{c_\ell},
\qquad
\alpha=\frac{g_s\rho}{c_\ell},
\label{eq:app-alpha-definition}
\end{equation}
and
\begin{equation}
\delta\widetilde{\boldsymbol{I}}=\frac{\delta\boldsymbol{I}^{\mathrm{raw}}}{c_\ell},
\qquad
\widetilde{\bm{\mathcal{R}}}=\frac{\bm{\mathcal{R}}}{c_\ell}.
\label{eq:app-scaled-input}
\end{equation}
The homogeneous-gain dynamics are
\begin{equation}
\tau\dot{\boldsymbol{u}}
=
-(\boldsymbol{I}-\alpha \boldsymbol{S})\boldsymbol{u}+\delta\widetilde{\boldsymbol{I}}+\widetilde{\bm{\mathcal{R}}}(\boldsymbol{u}).
\label{eq:app-linear-plus-remainder}
\end{equation}
Dropping the nonlinear remainder gives
\begin{equation}
\tau\dot{\boldsymbol{u}}
=
-(\boldsymbol{I}-\alpha \boldsymbol{S})\boldsymbol{u}+\delta\widetilde{\boldsymbol{I}}.
\label{eq:app-linear-dynamics}
\end{equation}
The condition
\[
0<\alpha<1
\]
is the gain constraint
\begin{equation}
0<g_s\rho<1+g_{\mathrm{inh}}\rho.
\label{eq:app-gain-condition}
\end{equation}

Since \(\boldsymbol{S}\) is symmetric and similar to a Markov kernel, its spectrum lies in \([-1,1]\). Along an eigenvector of \(\boldsymbol{S}\) with eigenvalue \(\lambda_n(\boldsymbol{S})\), the homogeneous linearized dynamics have decay rate
\[
\frac{1-\alpha\lambda_n(\boldsymbol{S})}{\tau}.
\]
For \(\alpha\in(0,1)\),
\[
1-\alpha\lambda_n(\boldsymbol{S})\ge 1-\alpha>0.
\]
Thus the linearized fixed point is stable and \(\boldsymbol{I}-\alpha \boldsymbol{S}\) is invertible. Similarly,
\[
\|\boldsymbol{S}_{\mathcal I_g\mathcal I_g}\|_2\le1,
\]
so \(\boldsymbol{I}-\alpha \boldsymbol{S}_{\mathcal I_g\mathcal I_g}\) is invertible for every \(\alpha\in(0,1)\).

In the scaled variables, the nonlinear remainder satisfies
\begin{equation}
|\widetilde{\mathcal{R}}_i(\boldsymbol{u})|
\le
\widetilde C_\phi
\left(
|u_i|^2+\sum_j |S_{ij}|\,|u_j|^2
\right)
\label{eq:app-remainder-bound}
\end{equation}
for a finite constant \(\widetilde C_\phi\).

\subsection{Diagonal-gain operator}
\label{app:linearization-diagonal-gain}

A localized entorhinal drive can make the background \(r^0_i\) spatially nonuniform. For a saturating nonlinearity, this gives nonuniform gains \(\rho_i=\phi'(r_i^0)\). Define
\[
\boldsymbol{D}_\rho=\operatorname{diag}(\rho_1,\ldots,\rho_N),
\qquad
\boldsymbol{D}_\ell=\boldsymbol{I}+g_{\mathrm{inh}}\boldsymbol{D}_\rho.
\]
The first-order part of Eq.~\eqref{eq:app-exact-linearization} is
\begin{equation}
\tau_u\dot{\boldsymbol{u}}
=
-\bigl(\boldsymbol{D}_\ell-g_s\boldsymbol{S}\boldsymbol{D}_\rho\bigr)\boldsymbol{u}+\delta\boldsymbol{I}^{\mathrm{raw}}.
\label{eq:app-diagonal-gain}
\end{equation}
At steady state, or after using local diagonal time constants, the effective recurrent operator is
\begin{equation}
\boldsymbol{K}=g_s\boldsymbol{D}_\ell^{-1}\boldsymbol{S}\boldsymbol{D}_\rho.
\label{eq:app-effective-K}
\end{equation}
The scalar-discount theorem uses the homogeneous case
\[
\boldsymbol{K}=\alpha \boldsymbol{S}.
\]
For nonuniform gain, the scalar-discount equivalence is treated as a perturbation of this case, or as a state-dependent gain operator.

Let
\[
\boldsymbol{E}_K=\boldsymbol{K}-\alpha \boldsymbol{S}.
\]
If
\[
\|\boldsymbol{E}_K\|_2<1-\alpha,
\]
then \(\boldsymbol{I}-\boldsymbol{K}\) is invertible and
\begin{equation}
\|(\boldsymbol{I}-\boldsymbol{K})^{-1}-(\boldsymbol{I}-\alpha \boldsymbol{S})^{-1}\|_2
\le
\frac{\|\boldsymbol{E}_K\|_2}
{(1-\alpha)(1-\alpha-\|\boldsymbol{E}_K\|_2)}.
\label{eq:app-diagonal-gain-perturbation-bound}
\end{equation}

When \(\boldsymbol{S}\) is symmetric and \(0<\rho_{\min}\le\rho_i\le\rho_{\max}<\infty\), \(\boldsymbol{K}\) is diagonally symmetrizable. Indeed,
\[
K_{ij}
=
g_s\frac{\rho_j}{1+g_{\mathrm{inh}}\rho_i}S_{ij}.
\]
With
\[
\omega_i=\rho_i(1+g_{\mathrm{inh}}\rho_i),
\]
one has
\[
\omega_iK_{ij}
=
g_s\rho_i\rho_jS_{ij}
=
\omega_jK_{ji}.
\]
Thus \(\boldsymbol{K}\) is self-adjoint in the weighted inner product induced by \(\boldsymbol{\omega}\). It is not generally a stochastic reversible transition kernel, because its row sums and effective discount can vary across states.

\section{LMDP desirability and exact linear network equivalence}
\label{app:lmdp-derivation}

\subsection{Bellman equation with KL control cost}
\label{app:lmdp-kl-bellman}

At each non-goal state \(i\in\mathcal I_g\), the agent chooses a controlled next-state distribution
\[
\mu(\cdot\mid i)
\]
supported on the accessible successors of \(i\). The per-step cost is
\[
q_0
+
\lambda
D_{\mathrm{KL}}
\!\left(
\mu(\cdot\mid i)\,\Vert\,P_0(\cdot\mid i)
\right),
\]
where \(q_0>0\) and \(\lambda>0\). The terminal value is
\[
V(g)=0.
\]
For \(i\in\mathcal I_g\), the Bellman equation is
\begin{equation}
V(i)
=
q_0
+
\min_{\mu(\cdot\mid i)}
\left\{
\lambda
\sum_j\mu(j\mid i)\log\frac{\mu(j\mid i)}{P_0(j\mid i)}
+
\sum_j\mu(j\mid i)V(j)
\right\}.
\label{eq:app-lmdp-bellman}
\end{equation}
The Lagrange multiplier calculation for the constraint \(\sum_j\mu(j\mid i)=1\) gives
\begin{equation}
\mu^*(j\mid i)
=
\frac{
P_0(j\mid i)\exp[-V(j)/\lambda]
}{
\sum_kP_0(k\mid i)\exp[-V(k)/\lambda]
}.
\label{eq:app-lmdp-policy-v}
\end{equation}
The minimized bracket in Eq.~\eqref{eq:app-lmdp-bellman} is
\[
-\lambda\log\sum_jP_0(j\mid i)\exp[-V(j)/\lambda].
\]
Thus
\begin{equation}
V(i)
=
q_0 -
\lambda\log\sum_jP_0(j\mid i)\exp[-V(j)/\lambda].
\label{eq:app-soft-bellman}
\end{equation}

\subsection{Linear desirability equation}
\label{app:lmdp-linear-desirability}

Define
\[
z(i)=\exp[-V(i)/\lambda],
\qquad
\gamma=\exp(-q_0/\lambda).
\]
Exponentiating Eq.~\eqref{eq:app-soft-bellman} gives
\begin{equation}
z(i)
=
\gamma\sum_jP_0(j\mid i)z(j),
\qquad
i\in\mathcal I_g,
\label{eq:app-z-linear}
\end{equation}
with boundary condition
\[
z(g)=1.
\]
Splitting the sum into non-goal and goal terms yields
\begin{equation}
(\boldsymbol{I}-\gamma \boldsymbol{P}_{0,\mathcal I_g\mathcal I_g})\boldsymbol{z}_{\mathcal I_g}
=
\gamma \boldsymbol{p}_g,
\qquad
(\boldsymbol{p}_g)_i=P_0(g\mid i).
\label{eq:app-lmdp-dirichlet}
\end{equation}

The optimal policy in desirability coordinates is
\begin{equation}
\mu^*(j\mid i)
=
\frac{P_0(j\mid i)z(j)}
{\sum_kP_0(k\mid i)z(k)}
=
\frac{\gamma P_0(j\mid i)z(j)}{z(i)}.
\label{eq:app-lmdp-policy-z}
\end{equation}
On the finite goal-connected component, \(z(i)>0\) for all \(i\), and \(\mu^*(\cdot\mid i)\) has the same support as \(P_0(\cdot\mid i)\). Outside the goal-reachable component, \(z=0\) and the finite-value policy formula is not used.

Since \(\gamma\in(0,1)\),
\[
(\boldsymbol{I}-\gamma \boldsymbol{P}_{0,\mathcal I_g\mathcal I_g})^{-1}
=
\sum_{t\ge0}\gamma^t\boldsymbol{P}_{0,\mathcal I_g\mathcal I_g}^{t}.
\]
Therefore
\begin{equation}
\boldsymbol{z}_{\mathcal I_g}
=
\sum_{L\ge1}
\gamma^L
\boldsymbol{P}_{0,\mathcal I_g\mathcal I_g}^{L-1}\boldsymbol{p}_g.
\label{eq:app-lmdp-path-expansion}
\end{equation}
If \(\tau_g\) is the first hitting time of \(g\) under the passive chain, then
\begin{equation}
z(i)
=
\mathbb E_{\boldsymbol{P}_0,i}
\!\left[
\gamma^{\tau_g}\mathbb{1}_{\{\tau_g<\infty\}}
\right].
\label{eq:app-z-hitting}
\end{equation}
On the goal-connected finite component, \(\mathbb P_i(\tau_g<\infty)=1\).

\subsection{Exact linear network--LMDP theorem}
\label{app:lmdp-exact-linear-network}

\begin{theorem}[Linear network--LMDP equivalence]
\label{thm:app-linear-equivalence}
Assume that \(\boldsymbol{P}_0\) is finite, row-stochastic, nonnegative, irreducible on the goal-connected component, and reversible with respect to positive \(\boldsymbol{\pi}\). Let
\[
\boldsymbol{S}=\boldsymbol{\Pi}^{1/2}\boldsymbol{P}_0\boldsymbol{\Pi}^{-1/2}.
\]
Assume homogeneous linear gain, so the background-subtracted goal-induced network dynamics are governed by \(\boldsymbol{I}-\alpha \boldsymbol{S}\) with
\[
0<\alpha<1.
\]
During goal recall, impose a true Dirichlet clamp
\[
u_g=\bar u_g
\]
and let the non-goal assemblies solve
\begin{equation}
(\boldsymbol{I}-\alpha \boldsymbol{S}_{\mathcal I_g\mathcal I_g})\boldsymbol{u}_{\mathcal I_g}
=
\alpha \boldsymbol{S}_{\mathcal I_g g}\bar u_g.
\label{eq:app-network-dirichlet}
\end{equation}
Set
\[
\alpha=\gamma=\exp(-q_0/\lambda).
\]
Then the normalized network field satisfies
\begin{equation}
\psi_i^g
=
\frac{u_i}{\bar u_g}
=
\sqrt{\frac{\pi_i}{\pi_g}}\,z_i^g
\label{eq:app-equivalence-result}
\end{equation}
for every state in the goal-connected component, where \(z^g\) is the LMDP desirability field with boundary condition \(z_g^g=1\).
\end{theorem}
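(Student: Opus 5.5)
The plan is a direct substitution argument: change variables in the LMDP Dirichlet system \eqref{eq:app-lmdp-dirichlet} by the diagonal similarity $\boldsymbol{\Pi}^{1/2}$, and check that the rescaled desirability solves the network Dirichlet system \eqref{eq:app-network-dirichlet}. Uniqueness of the solution then closes the argument.

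First I record the entrywise form of $\boldsymbol{S}$ from \eqref{eq:app-S-coordinate}, namely $S_{ij}=\sqrt{\pi_i/\pi_j}\,P_0(j\mid i)$. Restricting to $\mathcal I_g$ gives the block identities
\[
\boldsymbol{S}_{\mathcal I_g\mathcal I_g}=\boldsymbol{\Pi}_{\mathcal I_g}^{1/2}\boldsymbol{P}_{0,\mathcal I_g\mathcal I_g}\boldsymbol{\Pi}_{\mathcal I_g}^{-1/2},
\qquad
S_{ig}=\sqrt{\pi_i/\pi_g}\,P_0(g\mid i).
\]
This holds because a diagonal similarity commutes with taking principal submatrices. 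Positivity of $\boldsymbol{\pi}$, guaranteed by irreducibility on the goal-connected component, makes $\boldsymbol{\Pi}^{\pm1/2}$ well defined. Reversibility is used only to guarantee that $\boldsymbol{S}$ is symmetric; the identity itself is pure algebra.

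Next I define the candidate $w_i:=\bar u_g\sqrt{\pi_i/\pi_g}\,z^g_i$ for $i\in\mathcal I_g$, that is, $\boldsymbol{w}=\bar u_g\,\pi_g^{-1/2}\boldsymbol{\Pi}^{1/2}_{\mathcal I_g}\boldsymbol{z}^g_{\mathcal I_g}$. I left-multiply \eqref{eq:app-lmdp-dirichlet} by $\bar u_g\pi_g^{-1/2}\boldsymbol{\Pi}^{1/2}_{\mathcal I_g}$ and insert $\boldsymbol{\Pi}^{-1/2}\boldsymbol{\Pi}^{1/2}$ between $\boldsymbol{P}_{0,\mathcal I_g\mathcal I_g}$ and $\boldsymbol{z}$. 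With $\gamma=\alpha$, the left side becomes $(\boldsymbol{I}-\alpha\boldsymbol{S}_{\mathcal I_g\mathcal I_g})\boldsymbol{w}$. The right side has entries $\alpha\bar u_g\sqrt{\pi_i/\pi_g}\,P_0(g\mid i)=\alpha S_{ig}\bar u_g$. So $\boldsymbol{w}$ solves \eqref{eq:app-network-dirichlet}.

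For uniqueness, $\boldsymbol{I}-\alpha\boldsymbol{S}_{\mathcal I_g\mathcal I_g}$ is invertible by App.~\ref{app:linearization-homogeneous-gain}, since $\|\boldsymbol{S}_{\mathcal I_g\mathcal I_g}\|_2\le1$ and $\alpha<1$. Hence $\boldsymbol{u}_{\mathcal I_g}=\boldsymbol{w}$, which gives \eqref{eq:app-equivalence-result} on $\mathcal I_g$. At $i=g$ the identity reads $\bar u_g/\bar u_g=1=z^g_g$, which holds trivially. The LMDP solution is likewise unique, because $\gamma\boldsymbol{P}_{0,\mathcal I_g\mathcal I_g}$ has spectral radius below one, so $\boldsymbol{z}^g$ is well defined.

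The only real obstacle is bookkeeping: making sure the $\sqrt{\pi_g}$ factor in the goal column of $\boldsymbol{S}$ is exactly absorbed by the normalization at the clamp. One also has to note the restriction to the goal-connected component, where all $\pi_i>0$ and $z^g_i>0$. States in other components have zero response on both sides, as stated in App.~\ref{app:notation}, so the identity holds there as well.
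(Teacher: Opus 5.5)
Your proposal is correct and follows the paper's proof essentially step for step. Both arguments conjugate the LMDP Dirichlet system by $\boldsymbol{\Pi}^{1/2}$, identify the goal column via $\sqrt{\pi_i}\,P_0(g\mid i)=\sqrt{\pi_g}\,S_{ig}$, rescale by $\bar u_g/\sqrt{\pi_g}$, and close with invertibility of $\boldsymbol{I}-\alpha\boldsymbol{S}_{\mathcal I_g\mathcal I_g}$. Your remark that the goal-column identity follows from the definition of $\boldsymbol{S}$ alone, whereas the paper invokes detailed balance, is accurate.
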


\begin{proof}
Define the symmetric-coordinate desirability
\[
\tilde z_i=\sqrt{\pi_i}\,z_i.
\]
Multiplying Eq.~\eqref{eq:app-lmdp-dirichlet} by \(\boldsymbol{\Pi}^{1/2}\) gives
\[
(\boldsymbol{I}-\gamma \boldsymbol{S}_{\mathcal I_g\mathcal I_g})\tilde{\boldsymbol{z}}_{\mathcal I_g}
=
\gamma\tilde{\boldsymbol{p}}_g,
\]
where
\[
(\tilde{\boldsymbol{p}}_g)_i=\sqrt{\pi_i}P_0(g\mid i).
\]
Using detailed balance,
\[
\sqrt{\pi_i}P_0(g\mid i)
=
\sqrt{\pi_g}
\sqrt{\frac{\pi_i}{\pi_g}}P_0(g\mid i)
=
\sqrt{\pi_g}S_{ig}.
\]
Hence
\begin{equation}
(\boldsymbol{I}-\gamma \boldsymbol{S}_{\mathcal I_g\mathcal I_g})\tilde{\boldsymbol{z}}_{\mathcal I_g}
=
\gamma\sqrt{\pi_g}\boldsymbol{S}_{\mathcal I_g g}.
\label{eq:app-symmetric-z-equation}
\end{equation}
Set
\[
u_i=\frac{\bar u_g}{\sqrt{\pi_g}}\tilde z_i
=
\bar u_g\sqrt{\frac{\pi_i}{\pi_g}}z_i.
\]
Then \(u_g=\bar u_g z_g=\bar u_g\), and Eq.~\eqref{eq:app-symmetric-z-equation} becomes Eq.~\eqref{eq:app-network-dirichlet}. Since \(\boldsymbol{I}-\alpha \boldsymbol{S}_{\mathcal I_g\mathcal I_g}\) is invertible, the solution is unique.
\end{proof}

A source-normalized full-resolvent representation is equivalent to the Dirichlet clamp. Let
\[
\boldsymbol{R}=(\boldsymbol{I}-\alpha \boldsymbol{S})^{-1}.
\]
Solving
\[
(\boldsymbol{I}-\alpha \boldsymbol{S})\tilde{\boldsymbol{u}}^{(g)}=c_g \boldsymbol{e}_g
\]
gives
\[
\tilde{\boldsymbol{u}}^{(g)}=c_g \boldsymbol{R}\boldsymbol{e}_g.
\]
Choosing
\[
c_g=\frac{\bar u_g}{R_{gg}}
\]
enforces \(\tilde{\boldsymbol{u}}^{(g)}_g=\bar u_g\). For \(i\ne g\), the equation has no source term, so
\[
(\boldsymbol{I}-\alpha \boldsymbol{S}_{\mathcal I_g\mathcal I_g})\tilde{\boldsymbol{u}}^{(g)}_{\mathcal I_g}
=
\alpha \boldsymbol{S}_{\mathcal I_g g}\bar u_g,
\]
which is the same Dirichlet field.

\section{Reversibility, learned transition kernels, and normalization}
\label{app:reversibility}

\subsection{Symmetric network coordinate}
\label{app:reversibility-symmetric-coordinate}

Assume detailed balance:
\[
\pi_iP_0(j\mid i)=\pi_jP_0(i\mid j).
\]
With
\[
\boldsymbol{S}=\boldsymbol{\Pi}^{1/2}\boldsymbol{P}_0\boldsymbol{\Pi}^{-1/2},
\]
one has
\[
S_{ij}
=
\sqrt{\frac{\pi_i}{\pi_j}}P_0(j\mid i).
\]
Then
\[
S_{ji}
=
\sqrt{\frac{\pi_j}{\pi_i}}P_0(i\mid j)
=
\sqrt{\frac{\pi_i}{\pi_j}}P_0(j\mid i)
=S_{ij}.
\]
Thus \(\boldsymbol{S}\) is symmetric. This is the recurrent coordinate used in Theorem~\ref{thm:app-linear-equivalence}.

\subsection{Transition-gated symmetric accumulation}
\label{app:reversibility-transition-gated-symmetric-accumulation}

Let \(a_i(t)\ge0\) be the activity of assembly \(i\). The transition-gated accumulation rule is
\begin{equation}
W_{ij}^{\mathrm{tr}}
=
\sum_\tau k(\tau)
\mathbb E
\!\left[
a_i(t)a_j(t+\tau)\Xi_{ij}(t,\tau)
\right],
\label{eq:app-btsp-rule}
\end{equation}
where
\[
k(\tau)=k(-\tau),
\qquad
k(\tau)\ge0.
\]
The gate \(\Xi_{ij}(t,\tau)\ge0\) marks an accessible transition between assemblies \(i\) and \(j\) inside the plasticity window. Physical movement and internally generated sequences enter through the same transition gate. The nonnegativity of \(a_i\), \(k\), and \(\Xi\) gives
\[
W_{ij}^{\mathrm{tr}}\ge0.
\]
Define the lagged gated correlation
\[
C_{ij}(\tau)
=
\mathbb E
\!\left[
a_i(t)a_j(t+\tau)\Xi_{ij}(t,\tau)
\right].
\]
The ideal reversible sampling condition is
\begin{equation}
C_{ij}(\tau)=C_{ji}(-\tau)
\label{eq:app-time-reversal-condition}
\end{equation}
for every accessible unordered edge \(\{i,j\}\) and lag \(\tau\). This condition states joint invariance of the activity pair and transition gate under reversal of the sampled edge. Under Eq.~\eqref{eq:app-time-reversal-condition},
\begin{align}
W_{ji}^{\mathrm{tr}}
&=
\sum_\tau k(\tau)C_{ji}(\tau)
\nonumber\\
&=
\sum_\tau k(\tau)C_{ij}(-\tau)
\nonumber\\
&=
\sum_{\tau'}k(-\tau')C_{ij}(\tau')
\nonumber\\
&=
\sum_{\tau'}k(\tau')C_{ij}(\tau')
=
W_{ij}^{\mathrm{tr}}.
\label{eq:app-W-symmetric}
\end{align}
Thus \(\boldsymbol{W}^{\mathrm{tr}}\) is symmetric in expectation under stationary bidirectional sampling with a symmetric nonnegative temporal kernel and a jointly time-reversal-invariant transition gate.
Directed or policy-biased sequence statistics enter as deviations from this symmetrized effective transition memory and are handled by the perturbative nonreversible terms in App.~\ref{app:asymmetry-irreversible-sampling}.

If a transition is blocked, the gate is absent for that pair, and
\[
W_{ij}^{\mathrm{tr}}=0.
\]
The support of \(\boldsymbol{W}^{\mathrm{tr}}\) is therefore the accessible transition graph.

\subsection{Lazy normalization}
\label{app:reversibility-lazy-normalization}

Let
\[
d_i=\sum_{j\ne i}W_{ij}^{\mathrm{tr}}.
\]
Choose a fixed global \(\epsilon\) satisfying
\begin{equation}
0<\epsilon\le \frac{1}{\max_i d_i}.
\label{eq:app-epsilon-condition}
\end{equation}
The lazy normalization is
\begin{equation}
P_0(j\mid i)=\epsilon W_{ij}^{\mathrm{tr}}
\quad (j\ne i),
\qquad
P_0(i\mid i)=1-\epsilon d_i.
\label{eq:app-lazy-normalization}
\end{equation}
Every row sums to one. If \(\boldsymbol{W}^{\mathrm{tr}}\) is symmetric, then
\[
P_0(j\mid i)=P_0(i\mid j)
\]
for all \(i,j\), so \(\boldsymbol{P}_0\) is symmetric and reversible with uniform stationary measure
\[
\pi_i=\frac1N.
\]
In this regime,
\[
\boldsymbol{S}=\boldsymbol{P}_0,
\qquad
\sqrt{\boldsymbol{\pi}}\propto\boldsymbol{1}.
\]
For local environmental edits, \(\epsilon\) is kept fixed across the pre-edit and post-edit graphs and chosen to satisfy the laziness condition for both graphs. If \(\epsilon\) were globally reset after an edit, all off-diagonal weights could be rescaled, producing a nonlocal operator change.

\subsection{Standard random-walk normalization}
\label{app:reversibility-random-walk-normalization}

Another common normalization is
\begin{equation}
P_0(j\mid i)=\frac{W_{ij}^{\mathrm{tr}}}{d_i}
\qquad
(d_i>0).
\label{eq:app-rw-normalization}
\end{equation}
For isolated states with \(d_i=0\), one either removes the state from the navigable component or sets \(P_0(i\mid i)=1\) as a closed singleton component. On a connected component with \(d_i>0\), symmetric \(\boldsymbol{W}^{\mathrm{tr}}\) gives detailed balance with
\[
\pi_i=\frac{d_i}{\sum_\ell d_\ell}.
\]
The symmetric recurrent coordinate is
\begin{equation}
S_{ij}
=
\sqrt{\frac{\pi_i}{\pi_j}}P_0(j\mid i)
=
\frac{W_{ij}^{\mathrm{tr}}}{\sqrt{d_i d_j}}.
\label{eq:app-rw-symmetric-S}
\end{equation}

\section{Finite sampling, nonuniform \texorpdfstring{\(\boldsymbol{\pi}\)}{pi}, and nonreversible perturbations}
\label{app:asymmetry}

\subsection{Reversible projection of empirical transition counts}
\label{app:asymmetry-reversible-projection}

Finite sampling can create asymmetric empirical transition counts. Let \(\widehat{\boldsymbol{W}}\) denote the empirical nonnegative count matrix. A reversible projection is obtained by
\begin{equation}
\widehat{\boldsymbol{W}}^{\mathrm{sym}}
=
\frac12(\widehat{\boldsymbol{W}}+\widehat{\boldsymbol{W}}^\top).
\label{eq:app-sym-counts}
\end{equation}
Applying the lazy normalization of Eq.~\eqref{eq:app-lazy-normalization} to \(\widehat{\boldsymbol{W}}^{\mathrm{sym}}\) gives an exactly symmetric passive kernel and hence uniform \(\boldsymbol{\pi}\). Applying the random-walk normalization to \(\widehat{\boldsymbol{W}}^{\mathrm{sym}}\) gives an exactly reversible passive kernel with \(\pi_i\propto \widehat d_i\). The exact theorem applies to these reversible projected kernels.

\subsection{Perturbation by irreversible sampling error}
\label{app:asymmetry-irreversible-sampling}

Let \(\bar{\boldsymbol{P}}\) be an ideal reversible kernel with stationary measure \(\bar{\boldsymbol{\pi}}\), and let
\[
\bar{\boldsymbol{S}}=\bar{\boldsymbol{\Pi}}^{1/2}\bar{\boldsymbol{P}}\bar{\boldsymbol{\Pi}}^{-1/2}
\]
be its symmetric coordinate. Suppose a raw empirical kernel has the form
\[
\boldsymbol{P}^{\mathrm{emp}}=\bar{\boldsymbol{P}}+\Delta\boldsymbol{P},
\qquad
\Delta\boldsymbol{P}\,\boldsymbol{1}=0.
\]
In the \(\bar{\boldsymbol{\pi}}\)-symmetric coordinate define
\begin{equation}
\boldsymbol{E}=\bar{\boldsymbol{\Pi}}^{1/2}\Delta\boldsymbol{P}\,\bar{\boldsymbol{\Pi}}^{-1/2}.
\label{eq:app-empirical-E}
\end{equation}
Then the raw empirical recurrent operator in this coordinate is
\[
\boldsymbol{S}^{\mathrm{emp}}=\bar{\boldsymbol{S}}+\boldsymbol{E}.
\]
Decompose
\[
\boldsymbol{E}=\boldsymbol{E}_{\mathrm{sym}}+\boldsymbol{E}_{\mathrm{asym}},
\qquad
\boldsymbol{E}_{\mathrm{sym}}=\frac12(\boldsymbol{E}+\boldsymbol{E}^\top),
\qquad
\boldsymbol{E}_{\mathrm{asym}}=\frac12(\boldsymbol{E}-\boldsymbol{E}^\top).
\]
The antisymmetric part \(\boldsymbol{E}_{\mathrm{asym}}\) measures irreversible circulation in the empirical transition statistics.

For the full resolvent, if
\[
\alpha\|\boldsymbol{E}\|_2<1-\alpha,
\]
then
\begin{equation}
\|(\boldsymbol{I}-\alpha(\bar{\boldsymbol{S}}+\boldsymbol{E}))^{-1}-(\boldsymbol{I}-\alpha\bar{\boldsymbol{S}})^{-1}\|_2
\le
\frac{\alpha\|\boldsymbol{E}\|_2}
{(1-\alpha)(1-\alpha-\alpha\|\boldsymbol{E}\|_2)}.
\label{eq:app-nonrev-full-bound}
\end{equation}
The same bound holds for Dirichlet restrictions with \(\boldsymbol{E}\) replaced by \(\boldsymbol{E}_{\mathcal I_g\mathcal I_g}\). Thus finite sampling asymmetry and irreversible circulation give controlled corrections to the recurrent field when their operator norm is small relative to the stability margin \(1-\alpha\). Exact network--LMDP equivalence with the symmetric recurrent coordinate is retained for the reversible projection \(\bar{\boldsymbol{P}}\).

For a clamped goal, write
\[
\boldsymbol{A}_0=\boldsymbol{I}-\alpha\bar{\boldsymbol{S}}_{\mathcal I_g\mathcal I_g},
\qquad
\boldsymbol{A}_E=\boldsymbol{I}-\alpha(\bar{\boldsymbol{S}}+\boldsymbol{E})_{\mathcal I_g\mathcal I_g}.
\]
If \(\boldsymbol{b}_0\) is the reversible boundary source and \(\boldsymbol{b}_E=\boldsymbol{b}_0+\delta\boldsymbol{b}\) is the empirical boundary source, then
\[
\boldsymbol{u}_E=\boldsymbol{A}_E^{-1}\boldsymbol{b}_E,
\qquad
\boldsymbol{u}_0=\boldsymbol{A}_0^{-1}\boldsymbol{b}_0,
\]
and
\begin{equation}
\boldsymbol{u}_E-\boldsymbol{u}_0
=
(\boldsymbol{A}_E^{-1}-\boldsymbol{A}_0^{-1})\boldsymbol{b}_0
+
\boldsymbol{A}_E^{-1}\delta\boldsymbol{b}.
\label{eq:app-nonrev-field-difference}
\end{equation}
Equation~\eqref{eq:app-nonrev-full-bound} controls the first term, and
\[
\|\boldsymbol{A}_E^{-1}\|_2\le\frac{1}{1-\alpha-\alpha\|\boldsymbol{E}\|_2}
\]
controls the second.

\subsection{Perturbation of a reversible stationary measure}
\label{app:asymmetry-stationary-measure}

When the learned kernel is kept reversible and the stationary measure changes by sampling or normalization, the change in \(\boldsymbol{\pi}\) is controlled by the spectral gap. Let
\[
\boldsymbol{P}_\varepsilon=\bar{\boldsymbol{P}}+\Delta\boldsymbol{P}
\]
be row-stochastic and reversible up to the retained order, with stationary measure
\[
\boldsymbol{\pi}=\bar{\boldsymbol{\pi}}+\delta\boldsymbol{\pi}.
\]
Use the row-vector norm
\[
\|\boldsymbol{r}\|_{\bar{\boldsymbol{\pi}}^{-1}}
=
\left(\sum_i\frac{r_i^2}{\bar\pi_i}\right)^{1/2}
\]
and its compatible operator norm. The stationarity equations give, to first order,
\[
\delta\boldsymbol{\pi}(\boldsymbol{I}-\bar{\boldsymbol{P}})=\bar{\boldsymbol{\pi}}\Delta\boldsymbol{P},
\qquad
\delta\boldsymbol{\pi}\,\boldsymbol{1}=0.
\]
On the zero-sum subspace, the reversible chain has inverse norm bounded by the reciprocal spectral gap
\[
\operatorname{gap}=1-\lambda_2(\bar{\boldsymbol{P}})>0.
\]
Thus
\begin{equation}
\delta\boldsymbol{\pi}
=
\bar{\boldsymbol{\pi}}\Delta\boldsymbol{P}\,(\boldsymbol{I}-\bar{\boldsymbol{P}})^\#
+
\mathcal{O}(|\Delta\boldsymbol{P}|^2),
\label{eq:app-pi-perturbation}
\end{equation}
and, if \(\boldsymbol{P}_\varepsilon\) remains irreducible and
\[
\|\Delta\boldsymbol{P}\|_{\bar{\boldsymbol{\pi}}^{-1}\to\bar{\boldsymbol{\pi}}^{-1}}<\operatorname{gap},
\]
\begin{equation}
\|\boldsymbol{\pi}-\bar{\boldsymbol{\pi}}\|_{\bar{\boldsymbol{\pi}}^{-1}}
\le
\frac{\|\bar{\boldsymbol{\pi}}\Delta\boldsymbol{P}\|_{\bar{\boldsymbol{\pi}}^{-1}}}
{\operatorname{gap}-\|\Delta\boldsymbol{P}\|_{\bar{\boldsymbol{\pi}}^{-1}\to\bar{\boldsymbol{\pi}}^{-1}}}.
\label{eq:app-pi-bound}
\end{equation}

\subsection{Readout bias from nonuniform stationary measure}
\label{app:asymmetry-readout-bias}

For a reversible kernel with nonuniform \(\boldsymbol{\pi}\), Theorem~\ref{thm:app-linear-equivalence} gives
\[
\psi_i^g
=
\sqrt{\frac{\pi_i}{\pi_g}}\,z_i^g.
\]
Taking logarithms,
\begin{equation}
\log\psi_i^g
=
\log z_i^g
+
\frac12\log\pi_i
-
\frac12\log\pi_g.
\label{eq:app-log-pi}
\end{equation}
For an accessible edge \(e=(i,j)\),
\begin{equation}
\nabla_e\log\psi^g
=
\nabla_e\log z^g
+
\frac12\nabla_e\log\pi,
\label{eq:app-edge-pi-bias}
\end{equation}
where
\[
\nabla_e\log\pi=\log\pi_j-\log\pi_i.
\]
In a local embedded approximation, this edge relation corresponds to
\begin{equation}
\nabla\log\psi^g
=
\nabla\log z^g
+
\frac12\nabla\log\pi.
\label{eq:app-pi-bias}
\end{equation}

Let
\[
\boldsymbol{v}_0=\boldsymbol{\Sigma}_G\nabla\log z^g,
\qquad
\boldsymbol{b}_\pi=\frac12\boldsymbol{\Sigma}_G\nabla\log\pi.
\]
When \(\boldsymbol{v}_0\ne0\) and \(\|\boldsymbol{b}_\pi\|<\|\boldsymbol{v}_0\|\),
\begin{equation}
\sin\angle(\boldsymbol{v}_0+\boldsymbol{b}_\pi,\boldsymbol{v}_0)
\le
\frac{\|\boldsymbol{b}_\pi\|}{\|\boldsymbol{v}_0+\boldsymbol{b}_\pi\|}
\le
\frac{\|\boldsymbol{b}_\pi\|}{\|\boldsymbol{v}_0\|-\|\boldsymbol{b}_\pi\|}.
\label{eq:app-pi-angle-bound}
\end{equation}
An exact corrected readout in the network coordinate uses
\[
\frac{\psi_i^g}{\sqrt{\pi_i}}
=
\frac{z_i^g}{\sqrt{\pi_g}},
\]
whose log-gradient is exactly \(\nabla\log z^g\).

\section{Local place-kernel readout}
\label{app:popvec}

\subsection{Continuous local moment expansion}
\label{app:popvec-continuous-moment}

Let
\[
f(\boldsymbol{x})=\psi^g(\boldsymbol{x})>0
\]
or, under nonuniform \(\boldsymbol{\pi}\), the corrected field
\[
f(\boldsymbol{x})=\frac{\psi^g(\boldsymbol{x})}{\sqrt{\pi(\boldsymbol{x})}}.
\]
Fix the animal's current position \(\boldsymbol{y}\), and write
\[
\boldsymbol{\xi}=\boldsymbol{x}-\boldsymbol{y}.
\]
The local population-vector readout is
\begin{equation}
\Delta\boldsymbol{x}_{\mathrm{pop}}(\boldsymbol{y};f)
=
\frac{
\int \boldsymbol{\xi}\,\mathcal{G}_{\boldsymbol{y}}(\boldsymbol{y}+\boldsymbol{\xi})f(\boldsymbol{y}+\boldsymbol{\xi})\,d\boldsymbol{\xi}
}{
\int \mathcal{G}_{\boldsymbol{y}}(\boldsymbol{y}+\boldsymbol{\xi})f(\boldsymbol{y}+\boldsymbol{\xi})\,d\boldsymbol{\xi}
}.
\label{eq:app-popvec-start}
\end{equation}
Let
\[
M_0(\boldsymbol{y})=\int \mathcal{G}_{\boldsymbol{y}}(\boldsymbol{y}+\boldsymbol{\xi})\,d\boldsymbol{\xi},
\]
and define expectation under the normalized place kernel:
\[
\mathbb E_{\mathcal G}[h(\boldsymbol{\xi})]
=
\frac{1}{M_0(\boldsymbol{y})}
\int h(\boldsymbol{\xi})\mathcal{G}_{\boldsymbol{y}}(\boldsymbol{y}+\boldsymbol{\xi})\,d\boldsymbol{\xi}.
\]
The kernel mean and covariance are
\[
\boldsymbol{\mu}_G(\boldsymbol{y})=\mathbb E_{\mathcal G}[\boldsymbol{\xi}],
\]
\[
\boldsymbol{\Sigma}_G(\boldsymbol{y})
=
\mathbb E_{\mathcal G}[(\boldsymbol{\xi}-\boldsymbol{\mu}_G)(\boldsymbol{\xi}-\boldsymbol{\mu}_G)^\top].
\]
Expanding
\[
f(\boldsymbol{y}+\boldsymbol{\xi})
=
f(\boldsymbol{y})
+
\nabla f(\boldsymbol{y})^\top\boldsymbol{\xi}
+
\frac12\boldsymbol{\xi}^\top\nabla^2f(\boldsymbol{y})\boldsymbol{\xi}
+
\mathcal{O}(\|\boldsymbol{\xi}\|^3)
\]
gives, to first order in the local field variation,
\begin{equation}
\Delta\boldsymbol{x}_{\mathrm{pop}}(\boldsymbol{y};f)
=
\boldsymbol{\mu}_G(\boldsymbol{y})
+
\boldsymbol{\Sigma}_G(\boldsymbol{y})\nabla\log f(\boldsymbol{y})
+
\text{higher-order kernel-width terms}.
\label{eq:app-popvec-general-mean}
\end{equation}
The term \(\boldsymbol{\mu}_G(\boldsymbol{y})\) is the resting displacement produced by the local kernel alone:
\[
\Delta\boldsymbol{x}_{\mathrm{pop}}(\boldsymbol{y};1)=\boldsymbol{\mu}_G(\boldsymbol{y}).
\]
The goal-dependent centered readout is therefore
\begin{equation}
\Delta\boldsymbol{x}_{\mathrm{goal}}(\boldsymbol{y};f)
=
\Delta\boldsymbol{x}_{\mathrm{pop}}(\boldsymbol{y};f)-\Delta\boldsymbol{x}_{\mathrm{pop}}(\boldsymbol{y};1)
=
\boldsymbol{\Sigma}_G(\boldsymbol{y})\nabla\log f(\boldsymbol{y})
+
\text{higher-order terms}.
\label{eq:app-centered-popvec}
\end{equation}
When the local kernel is centered,
\[
\boldsymbol{\mu}_G(\boldsymbol{y})=0,
\]
the raw and centered readouts agree to leading order.

\subsection{Centered-kernel correction}
\label{app:popvec-centered-kernel}

For a centered kernel, define
\[
\boldsymbol{\Sigma}_G(\boldsymbol{y})
=
\frac{1}{M_0(\boldsymbol{y})}
\int \boldsymbol{\xi}\boldsymbol{\xi}^\top \mathcal{G}_{\boldsymbol{y}}(\boldsymbol{y}+\boldsymbol{\xi})\,d\boldsymbol{\xi}.
\]
Then
\begin{equation}
\Delta\boldsymbol{x}_{\mathrm{pop}}(\boldsymbol{y};f)
=
\boldsymbol{\Sigma}_G(\boldsymbol{y})\nabla\log f(\boldsymbol{y})
+
\text{higher-order terms}.
\label{eq:app-popvec-leading}
\end{equation}

Using repeated-index summation over spatial indices, define the third moment tensor
\[
T_{abc}(\boldsymbol{y})
=
\frac{1}{M_0(\boldsymbol{y})}
\int
\xi_a\xi_b\xi_c\mathcal{G}_{\boldsymbol{y}}(\boldsymbol{y}+\boldsymbol{\xi})\,d\boldsymbol{\xi}.
\]
For a centered kernel,
\begin{align}
[\Delta\boldsymbol{x}_{\mathrm{pop}}]_a
&=
[\boldsymbol{\Sigma}_G\nabla\log f]_a
+
\frac12
T_{abc}
\frac{\partial_b\partial_c f}{f}
\nonumber\\
&\quad
-
\frac12
[\boldsymbol{\Sigma}_G\nabla\log f]_a
(\boldsymbol{\Sigma}_G)_{bc}
\frac{\partial_b\partial_c f}{f}
+
\mathcal{O}(\sigma_G^4/\ell_f^3),
\label{eq:app-popvec-correction}
\end{align}
where \(\sigma_G\) is the local kernel-width scale and \(\ell_f\) is the local length scale on which \(f\) varies. If the kernel has local reflection symmetry, then \(T_{abc}=0\), and the first shape correction is fourth order in the kernel width.

\subsection{Accessible directions}
\label{app:popvec-accessible-directions}

The covariance \(\boldsymbol{\Sigma}_G(\boldsymbol{y})\) records the directions sampled by the local place kernel. In open space it is close to isotropic:
\[
\boldsymbol{\Sigma}_G(\boldsymbol{y})\approx\sigma_G^2\boldsymbol{I}.
\]
Near a wall, variance in the blocked direction is suppressed. Inside a narrow corridor, the covariance is elongated along the corridor axis. The goal-dependent vector
\[
\boldsymbol{\Sigma}_G(\boldsymbol{y})\nabla\log f(\boldsymbol{y})
\]
therefore projects the local log-gradient onto directions represented in the local place-kernel support. A nonzero kernel mean \(\boldsymbol{\mu}_G\) contributes a resting drift term, and the centered readout of Eq.~\eqref{eq:app-centered-popvec} isolates the goal-dependent component.

\section{Relation between the readout and the LMDP optimal policy}
\label{app:popvec-lmdp}

At state \(i\), the LMDP optimal next-step distribution is
\[
\mu^*(j\mid i)
=
\frac{P_0(j\mid i)z(j)}
{\sum_kP_0(k\mid i)z(k)}.
\]
Let
\[
\boldsymbol{\xi}_j=\boldsymbol{x}_j-\boldsymbol{x}_i.
\]
The optimal-policy mean displacement is
\begin{equation}
\boldsymbol{m}^*(i,g)
=
\sum_j\boldsymbol{\xi}_j\mu^*(j\mid i).
\label{eq:app-policy-mean-def}
\end{equation}

In the uniform-\(\boldsymbol{\pi}\) case, \(z=\psi^g\), and
\begin{equation}
\boldsymbol{m}^*(i,g)
=
\frac{
\sum_j\boldsymbol{\xi}_jP_0(j\mid i)\psi^g(j)
}{
\sum_jP_0(j\mid i)\psi^g(j)
}.
\label{eq:app-policy-mean}
\end{equation}
Define the passive local mean and covariance
\[
\bar{\boldsymbol{\xi}}_i=\sum_jP_0(j\mid i)\boldsymbol{\xi}_j,
\]
\[
\boldsymbol{\Sigma}_{\boldsymbol{P}_0}(i)
=
\sum_jP_0(j\mid i)(\boldsymbol{\xi}_j-\bar{\boldsymbol{\xi}}_i)(\boldsymbol{\xi}_j-\bar{\boldsymbol{\xi}}_i)^\top.
\]
A local expansion gives
\begin{equation}
\boldsymbol{m}^*(i,g)
=
\bar{\boldsymbol{\xi}}_i
+
\boldsymbol{\Sigma}_{\boldsymbol{P}_0}(i)\nabla\log\psi^g(\boldsymbol{x}_i)
+
\text{higher-order local terms}.
\label{eq:app-policy-gradient-general}
\end{equation}
Thus the centered optimal displacement is
\begin{equation}
\boldsymbol{m}^*(i,g)-\bar{\boldsymbol{\xi}}_i
=
\boldsymbol{\Sigma}_{\boldsymbol{P}_0}(i)\nabla\log\psi^g(\boldsymbol{x}_i)
+
\text{higher-order local terms}.
\label{eq:app-policy-gradient-centered}
\end{equation}
If the passive local motion has zero mean, \(\bar{\boldsymbol{\xi}}_i=0\), then Eq.~\eqref{eq:app-policy-gradient-centered} reduces to the uncentered expression.
Comparing Eq.~\eqref{eq:app-policy-gradient-centered} with Eq.~\eqref{eq:app-centered-popvec}, the place-kernel readout matches the local LMDP policy mean to leading order when \(\boldsymbol{\mu}_G(\boldsymbol{y})\approx\bar{\boldsymbol{\xi}}_i\) and \(\boldsymbol{\Sigma}_G(\boldsymbol{y})\approx\boldsymbol{\Sigma}_{\boldsymbol{P}_0}(i)\). Otherwise it is a geometrically preconditioned log-desirability readout, with preconditioner \(\boldsymbol{\Sigma}_G(\boldsymbol{y})\) rather than \(\boldsymbol{\Sigma}_{\boldsymbol{P}_0}(i)\).

For general reversible \(\boldsymbol{\pi}\),
\[
z(j)=\sqrt{\frac{\pi_g}{\pi_j}}\psi^g(j).
\]
The exact LMDP policy can be written as
\begin{equation}
\mu^*(j\mid i)
\propto
P_0(j\mid i)\frac{\psi^g(j)}{\sqrt{\pi_j}}.
\label{eq:app-policy-general-pi}
\end{equation}
Equivalently,
\[
\mu^*(j\mid i)\propto S_{ij}\psi^g(j),
\]
because the factor \(1/\sqrt{\pi_i}\) is constant for fixed \(i\).
Hence the exact policy readout under nonuniform \(\boldsymbol{\pi}\) uses the corrected field
\[
\boldsymbol{\psi}^g/\sqrt{\boldsymbol{\pi}}.
\]

\section{Nonlinear perturbation under strong goal recall}
\label{app:perturbation}

The exact equivalence in Theorem~\ref{thm:app-linear-equivalence} holds for the homogeneous linearized dynamics. Strong goal recall can drive the clamped goal assembly and nearby assemblies into the nonlinear range of \(\phi\). The log-gradient readout is invariant under spatially constant multiplicative factors, so the relevant nonlinear effect is the residual spatial shape change after absorbing the leading amplitude renormalization. The bounds below control this distortion in the perturbative regime measured by \(A_{\mathrm{eff}}\), including large raw clamps whose saturating output keeps \(A_{\mathrm{eff}}\) within the small-gain range.

\subsection{Nonlinear steady-state equation}
\label{app:perturbation-nonlinear-steady}

Work in the homogeneous-gain scaled variables of App.~\ref{app:linearization-homogeneous-gain}. Let
\[
A=\bar u_g
\]
be the clamped goal perturbation. The effective recurrent output emitted by the clamped goal, expressed in linearized activity units, is
\begin{equation}
A_{\mathrm{eff}}(A)
=
\frac{\phi(r_g^0+A)-\phi(r_g^0)}{\phi'(r_g^0)}.
\label{eq:app-Aeff}
\end{equation}
This definition assumes
\[
\phi'(r_g^0)>0.
\]
For small \(A\),
\[
A_{\mathrm{eff}}(A)=A+\mathcal{O}(A^2).
\]
For bounded saturating \(\phi\) and fixed \(\phi'(r_g^0)>0\), \(A_{\mathrm{eff}}(A)\) is bounded as \(A\to\infty\).

On \(\mathcal I_g\), the nonlinear steady-state equation is
\begin{equation}
(\boldsymbol{I}-\alpha \boldsymbol{S}_{\mathcal I_g\mathcal I_g})\boldsymbol{u}_{\mathcal I_g}
=
\alpha \boldsymbol{S}_{\mathcal I_g g}A_{\mathrm{eff}}(A)
+
\bm{\mathcal{R}}(\boldsymbol{u}_{\mathcal I_g};A),
\label{eq:app-nonlinear-steady}
\end{equation}
where \(\bm{\mathcal{R}}\) contains the quadratic and higher-order nonlinear terms away from the clamped goal.

Define
\[
\boldsymbol{G}_g(\alpha)
=
(\boldsymbol{I}-\alpha \boldsymbol{S}_{\mathcal I_g\mathcal I_g})^{-1},
\]
and define the unit-boundary linear field
\begin{equation}
\boldsymbol{H}_{\mathcal I_g}^g
=
\boldsymbol{G}_g(\alpha)\,\alpha \boldsymbol{S}_{\mathcal I_g g}.
\label{eq:app-H-field}
\end{equation}
By boundary convention,
\[
H_g^g=1,
\qquad
\boldsymbol{H}^g|_{\mathcal I_g}=\boldsymbol{H}_{\mathcal I_g}^g.
\]
The full nonlinear field can be written as
\begin{equation}
\boldsymbol{u}_{\mathcal I_g}
=
A_{\mathrm{eff}}\boldsymbol{H}_{\mathcal I_g}^g+\boldsymbol{\eta}_{\mathcal I_g},
\qquad
\boldsymbol{\eta}_{\mathcal I_g}=\boldsymbol{G}_g(\alpha)\bm{\mathcal{R}}(\boldsymbol{u}_{\mathcal I_g};A).
\label{eq:app-eta-definition}
\end{equation}

\subsection{Linear exponential envelope}
\label{app:perturbation-linear-envelope}

For the reversible passive chain,
\[
H_i^g
=
\sqrt{\frac{\pi_i}{\pi_g}}\,
\mathbb E_{\boldsymbol{P}_0,i}
\!\left[
\alpha^{\tau_g}\mathbb{1}_{\{\tau_g<\infty\}}
\right].
\]
Since \(\tau_g\ge d(i,g)\),
\begin{equation}
|H_i^g|
\le
C_h e^{-\kappa d(i,g)},
\qquad
C_h=\sup_i\sqrt{\frac{\pi_i}{\pi_g}}.
\label{eq:app-H-decay}
\end{equation}
In the uniform-\(\boldsymbol{\pi}\) case one may take \(C_h=1\). This is a path-length envelope. The sharp asymptotic decay of a screened Green function can depend on graph geometry and screening mass.

The Dirichlet Green kernel has the corresponding envelope
\begin{align}
0\le [\boldsymbol{G}_g(\alpha)]_{ij}
&=
\sqrt{\frac{\pi_i}{\pi_j}}
\sum_{t\ge0}\alpha^t
[\boldsymbol{P}_{0,\mathcal I_g\mathcal I_g}^{t}]_{ij}
\nonumber\\
&\le
C_\pi\frac{\alpha^{d(i,j)}}{1-\alpha}
=
C_\pi\frac{e^{-\kappa d(i,j)}}{1-\alpha},
\label{eq:app-G-decay}
\end{align}
where
\[
C_\pi=\sup_{i,j}\sqrt{\frac{\pi_i}{\pi_j}}.
\]
\subsection{Small-gain control of the nonlinear solution}
\label{app:perturbation-small-gain}

Define the weighted norm
\[
\|\boldsymbol{v}\|_{H}
=
\sup_{i\in\mathcal I_g}\frac{|v_i|}{H_i^g}.
\]
Because the component is finite and \(H_i^g>0\), this norm is finite. Let
\[
\boldsymbol{Q}=(Q_i)_{i\in\mathcal I_g},
\qquad
Q_i
=
(H_i^g)^2+\sum_j |S_{ij}|(H_j^g)^2,
\]
and define
\[
M_Q
=
\sup_{i\in\mathcal I_g}
\frac{[\boldsymbol{G}_g(\alpha)\boldsymbol{Q}]_i}{H_i^g}.
\]
On a finite component, \(M_Q<\infty\). For graph families, a uniform version of this estimate requires a uniform bound on neighboring ratios \(H_j^g/H_i^g\) on the support of \(\boldsymbol{S}\), or equivalently a uniform Harnack-type constant for the screened field. From Eq.~\eqref{eq:app-remainder-bound},
\[
\|\boldsymbol{G}_g(\alpha)\bm{\mathcal{R}}(\boldsymbol{v};A)\|_H
\le
C_Q\|\boldsymbol{v}\|_H^2,
\qquad
C_Q=\widetilde C_\phi M_Q,
\]
as long as \(\boldsymbol{v}\) remains inside the activity interval on which the Taylor remainder bound holds.

If
\[
4C_Q|A_{\mathrm{eff}}|<1
\]
and the corresponding local Lipschitz constant of \(\boldsymbol{G}_g\bm{\mathcal{R}}\) on the ball \(\|\boldsymbol{v}\|_H\le2|A_{\mathrm{eff}}|\) is smaller than one, the nonlinear fixed-point map
\[
\mathcal T(\boldsymbol{v})=A_{\mathrm{eff}}\boldsymbol{H}^g+\boldsymbol{G}_g(\alpha)\bm{\mathcal{R}}(\boldsymbol{v};A)
\]
is a contraction on that ball. The nonlinear solution exists uniquely in the ball and satisfies
\begin{equation}
\|\boldsymbol{u}\|_H\le2|A_{\mathrm{eff}}|,
\qquad
\|\boldsymbol{\eta}\|_H\le4C_Q|A_{\mathrm{eff}}|^2.
\label{eq:app-nonlinear-H-envelope}
\end{equation}
Consequently,
\begin{equation}
|\eta_i|
\le
C_\eta |A_{\mathrm{eff}}|^2 H_i^g
\le
C_\eta C_h |A_{\mathrm{eff}}|^2e^{-\kappa d(i,g)}.
\label{eq:app-eta-absolute-envelope}
\end{equation}
The absolute correction has the same universal path-length envelope as the linear field. The ratio \(\eta_i/(A_{\mathrm{eff}}H_i^g)\) is controlled by \(\mathcal{O}(|A_{\mathrm{eff}}|)\) under the small-gain condition and is generally not forced to decay by Eq.~\eqref{eq:app-eta-absolute-envelope} alone.

\subsection{Far-field amplitude projection}
\label{app:perturbation-far-field-amplitude}

The following geometric condition captures the far-field structure needed to separate a scalar amplitude shift from a residual shape correction.

\paragraph{Far-field ratio condition.}
There exist constants \(\beta>0\), \(C_{\mathrm{far}}<\infty\), and linear functionals \(\mathcal A_g\) such that, for every source \(\boldsymbol{v}\) with finite weighted source norm
\[
\|\boldsymbol{v}\|_{2,g}
=
\sup_j
\frac{|v_j|}{(H_j^g)^2}
<\infty,
\]
one has
\begin{equation}
\boldsymbol{G}_g(\alpha)\boldsymbol{v}
=
\mathcal A_g[\boldsymbol{v}]\boldsymbol{H}^g+\boldsymbol{\varepsilon}_g[\boldsymbol{v}],
\label{eq:app-far-field-decomp}
\end{equation}
with
\begin{equation}
|(\boldsymbol{\varepsilon}_g[\boldsymbol{v}])_i|
\le
C_{\mathrm{far}}\|\boldsymbol{v}\|_{2,g}H_i^g e^{-\beta d(i,g)}.
\label{eq:app-far-field-residual}
\end{equation}
This condition holds for standard single-channel far-field geometries such as homogeneous screened one-dimensional chains and regular corridors, and for graph regions where localized source responses have a unique outgoing positive harmonic profile. On arbitrary finite graphs it is an explicit geometric assumption. The residual log-gradient and shell-radius estimates below are conditional on this far-field ratio condition.

Applying Eq.~\eqref{eq:app-far-field-decomp} to
\[
\boldsymbol{v}=\bm{\mathcal{R}}(\boldsymbol{u};A)
\]
gives
\begin{equation}
\boldsymbol{\eta}
=
\mathcal A_g[\bm{\mathcal{R}}]\boldsymbol{H}^g+\boldsymbol{\zeta},
\qquad
|\zeta_i|
\le
C_\zeta |A_{\mathrm{eff}}|^2H_i^g e^{-\beta d(i,g)}.
\label{eq:app-eta-projected}
\end{equation}
Define the renormalized amplitude
\begin{equation}
A_{\mathrm{ren}}
=
A_{\mathrm{eff}}+\mathcal A_g[\bm{\mathcal{R}}].
\label{eq:app-Aren}
\end{equation}
Then
\begin{equation}
u_i=A_{\mathrm{ren}}H_i^g+\zeta_i.
\label{eq:app-nonlinear-amplitude-residual}
\end{equation}
For sufficiently small \(|A_{\mathrm{eff}}|\),
\[
A_{\mathrm{ren}}
=
A_{\mathrm{eff}}\left(1+\mathcal{O}(|A_{\mathrm{eff}}|)\right).
\]
On any region where
\[
A_{\mathrm{ren}}>0,
\qquad
\left|\frac{\zeta_i}{A_{\mathrm{ren}}H_i^g}\right|<1,
\]
the log-gradient is well defined. Write
\[
u_i=A_{\mathrm{ren}}H_i^g(1+\varrho_i),
\qquad
\varrho_i=\frac{\zeta_i}{A_{\mathrm{ren}}H_i^g}.
\]
Then
\begin{equation}
|\varrho_i|
\le
C_\varrho |A_{\mathrm{eff}}|e^{-\beta d(i,g)}.
\label{eq:app-varrho-decay}
\end{equation}

For an accessible edge \(e=(i,j)\),
\begin{align}
\nabla_e\log u
&=
\nabla_e\log H^g
+
\nabla_e\log(1+\varrho).
\end{align}
If \(\|\varrho\|_\infty\le\varepsilon<1\), then
\begin{equation}
|\nabla_e\log(1+\varrho)|
\le
\frac{|\varrho_j-\varrho_i|}{1-\varepsilon}.
\label{eq:app-loggrad-varrho}
\end{equation}
Using Eq.~\eqref{eq:app-varrho-decay}, edges whose endpoints are at distance at least \(r\) from the goal satisfy
\begin{equation}
|\nabla_e\log u-\nabla_e\log H^g|
\le
C_\nabla |A_{\mathrm{eff}}|e^{-\beta r}.
\label{eq:app-loggrad-decay}
\end{equation}

For a target residual relative shape tolerance \(\varepsilon\), the affected shell radius is
\begin{equation}
r_\varepsilon
=
\left[
\frac1\beta
\log
\left(
\frac{C_\varrho |A_{\mathrm{eff}}(A)|}{\varepsilon}
\right)
\right]_+,
\qquad
[x]_+=\max\{x,0\}.
\label{eq:app-shell-radius}
\end{equation}
In the linear recall range \(A_{\mathrm{eff}}(A)\approx A\), multiplying recall strength by \(c\) shifts the radius by
\[
\Delta r_\varepsilon=\frac{\log c}{\beta}.
\]
The residual decay length is \(\ell_\beta=1/\beta\). An \(e\)-fold amplitude change shifts the shell by one \(\ell_\beta\); a tenfold change shifts it by \((\log 10)\ell_\beta\). For bounded saturating \(\phi\), \(A_{\mathrm{eff}}(A)\) is bounded, and the shell radius in Eq.~\eqref{eq:app-shell-radius} has a finite upper bound.

\section{Weak-cost limit and Green functions}
\label{app:green}
\label{app:green-laplacian-identity}
The recurrent Laplacian \(\boldsymbol{L}=\boldsymbol{I}-\boldsymbol{S}\) (App.~\ref{app:notation}) coincides with the discrete Laplacian of~\citep{Zuo2026Laplacian} up to a positive scalar. Under the lazy normalization of Sec.~\ref{sec:btsp-rule} with rate \(\epsilon\) and symmetric transition counts \(\boldsymbol{W}^{\mathrm{tr}}\), one has \(\boldsymbol{S}=\boldsymbol{P}_0\) and uniform \(\boldsymbol{\pi}\), so
\begin{equation}
\boldsymbol{L} = \epsilon\,(\boldsymbol{D}_{\mathrm{deg}}-\boldsymbol{W}^{\mathrm{tr}}),
\qquad
\boldsymbol{D}_{\mathrm{deg}} = \operatorname{diag}(d_1,\ldots,d_N),
\qquad
d_i = \sum_{j\neq i}W^{\mathrm{tr}}_{ij}.
\label{eq:app-L-Lpos}
\end{equation}
The right-hand side is the discrete Laplacian \(\boldsymbol{L}_{\mathrm{pos}}=\boldsymbol{D}_{\mathrm{deg}}-\boldsymbol{A}_{\mathrm{adj}}\) of~\citep{Zuo2026Laplacian} with adjacency matrix \(\boldsymbol{A}_{\mathrm{adj}}=\boldsymbol{W}^{\mathrm{tr}}\); eigenvectors of \(\boldsymbol{L}\) and \(\boldsymbol{S}\) therefore coincide with the Laplacian eigenfunctions of~\citep{Zuo2026Laplacian}.
\subsection{Screened Dirichlet operator}
\label{app:green-screened-dirichlet}

The goal-clamped linear equation is
\[
(\boldsymbol{I}-\alpha \boldsymbol{S}_{\mathcal I_g\mathcal I_g})\boldsymbol{u}_{\mathcal I_g}
=
\alpha \boldsymbol{S}_{\mathcal I_g g}\bar u_g.
\]
With
\[
\boldsymbol{L}_g^{\mathrm{Dir}}
=
\boldsymbol{I}_{\mathcal I_g}-\boldsymbol{S}_{\mathcal I_g\mathcal I_g},
\qquad
m=\alpha^{-1}-1,
\]
one has
\begin{align}
\boldsymbol{I}_{\mathcal I_g}-\alpha \boldsymbol{S}_{\mathcal I_g\mathcal I_g}
&=
\alpha
\left[
\alpha^{-1}\boldsymbol{I}_{\mathcal I_g}
-
\boldsymbol{S}_{\mathcal I_g\mathcal I_g}
\right]
\nonumber\\
&=
\alpha
\left(
\boldsymbol{L}_g^{\mathrm{Dir}}+m\boldsymbol{I}_{\mathcal I_g}
\right).
\label{eq:app-helmholtz}
\end{align}
Therefore
\begin{equation}
(\boldsymbol{L}_g^{\mathrm{Dir}}+m\boldsymbol{I}_{\mathcal I_g})\boldsymbol{u}_{\mathcal I_g}
=
\boldsymbol{S}_{\mathcal I_g g}\bar u_g.
\label{eq:app-boundary-screened-equation}
\end{equation}
The screened Dirichlet Green operator is
\begin{equation}
\boldsymbol{G}_g(m)
=
(\boldsymbol{L}_g^{\mathrm{Dir}}+m\boldsymbol{I}_{\mathcal I_g})^{-1}.
\label{eq:app-screened-green}
\end{equation}
Thus
\begin{equation}
\boldsymbol{u}_{\mathcal I_g}
=
\boldsymbol{G}_g(m)\boldsymbol{S}_{\mathcal I_g g}\bar u_g.
\label{eq:app-boundary-green}
\end{equation}

Since \(\boldsymbol{L}_g^{\mathrm{Dir}}\succ0\), let
\[
\boldsymbol{L}_g^{\mathrm{Dir}}\boldsymbol{\varphi}_n=\lambda_n\boldsymbol{\varphi}_n,
\qquad
0<\lambda_1\le\lambda_2\le\cdots.
\]
Then
\begin{equation}
\boldsymbol{G}_g(m)
=
\sum_n
\frac{\boldsymbol{\varphi}_n\boldsymbol{\varphi}_n^\top}{\lambda_n+m}.
\label{eq:app-green-spectral}
\end{equation}
As \(m\to0\),
\begin{equation}
\boldsymbol{G}_g(m)\to(\boldsymbol{L}_g^{\mathrm{Dir}})^{-1}.
\label{eq:app-dirichlet-green-limit}
\end{equation}

\subsection{Boundary-driven weak-cost limit}
\label{app:green-boundary-driven}

The boundary-driven clamped-goal field satisfies
\[
\frac{\boldsymbol{u}_{\mathcal I_g}}{\bar u_g}
=
\boldsymbol{G}_g(m)\boldsymbol{S}_{\mathcal I_g g}.
\]
Taking \(m\to0\) gives
\begin{equation}
\lim_{m\to0}\frac{\boldsymbol{u}_{\mathcal I_g}}{\bar u_g}
=
(\boldsymbol{L}_g^{\mathrm{Dir}})^{-1}\boldsymbol{S}_{\mathcal I_g g}.
\label{eq:app-boundary-limit-general}
\end{equation}
This vector can be evaluated using the null vector of \(\boldsymbol{L}\). Since
\[
\boldsymbol{L}\sqrt{\boldsymbol{\pi}}=0,
\]
the non-goal rows give
\[
(\boldsymbol{I}_{\mathcal I_g}-\boldsymbol{S}_{\mathcal I_g\mathcal I_g})\sqrt{\boldsymbol{\pi}}_{\mathcal I_g}
=
\boldsymbol{S}_{\mathcal I_g g}\sqrt{\pi_g}.
\]
Hence
\begin{equation}
(\boldsymbol{L}_g^{\mathrm{Dir}})^{-1}\boldsymbol{S}_{\mathcal I_g g}
=
\frac{\sqrt{\boldsymbol{\pi}}_{\mathcal I_g}}{\sqrt{\pi_g}}.
\label{eq:app-boundary-limit-pi}
\end{equation}
Therefore
\begin{equation}
\lim_{m\to0}\psi_i^g
=
\sqrt{\frac{\pi_i}{\pi_g}}.
\label{eq:app-boundary-limit-network}
\end{equation}
In desirability coordinates,
\[
z_i^g=\sqrt{\frac{\pi_g}{\pi_i}}\psi_i^g,
\]
so
\begin{equation}
\lim_{m\to0}z_i^g=1
\label{eq:app-hitting-probability-limit}
\end{equation}
on the goal-connected component. This is the hitting-probability field.

Let
\[
\epsilon_q=\frac{q_0}{\lambda}.
\]
Since
\[
\alpha=e^{-\epsilon_q},
\]
Eq.~\eqref{eq:app-z-hitting} gives
\[
z_i^g
=
\mathbb E_i[e^{-\epsilon_q\tau_g}].
\]
On a finite component,
\begin{equation}
z_i^g
=
1-\epsilon_q\mathbb E_i[\tau_g]
+
\frac{\epsilon_q^2}{2}\mathbb E_i[\tau_g^2]
+
\mathcal{O}(\epsilon_q^3\mathbb E_i[\tau_g^3]).
\label{eq:app-hitting-time-expansion}
\end{equation}
Thus
\begin{equation}
\log z_i^g
=
-\epsilon_q\mathbb E_i[\tau_g]
+
\frac{\epsilon_q^2}{2}\operatorname{Var}_i(\tau_g)
+
\mathcal{O}(\epsilon_q^3),
\label{eq:app-log-z-hitting}
\end{equation}
and
\begin{equation}
V_i^g
=
-\lambda\log z_i^g
=
q_0\mathbb E_i[\tau_g]
-
\frac{q_0^2}{2\lambda}\operatorname{Var}_i(\tau_g)
+
\mathcal{O}(q_0^3/\lambda^2).
\label{eq:app-value-hitting-time}
\end{equation}
The log-gradient signal in desirability coordinates satisfies, for an edge \(e=(i,j)\),
\begin{equation}
\nabla_e\log z^g
=
-\frac{q_0}{\lambda}\nabla_e\mathbb E[\tau_g]
+
\mathcal{O}\!\left((q_0/\lambda)^2\right).
\label{eq:app-weak-cost-loggrad}
\end{equation}
Thus the spatial variation of the weak-cost desirability is carried by the hitting-time correction.

\subsection{Dirichlet source-response Green function}
\label{app:green-dirichlet-source-response}

For a localized source \(\boldsymbol{s}\) on \(\mathcal I_g\), consider
\begin{equation}
(\boldsymbol{L}_g^{\mathrm{Dir}}+m\boldsymbol{I})\boldsymbol{y}_m=\boldsymbol{s}.
\label{eq:app-source-response-screened}
\end{equation}
Then
\[
\boldsymbol{y}_m=\boldsymbol{G}_g(m)\boldsymbol{s},
\]
and
\begin{equation}
\boldsymbol{y}_m\to(\boldsymbol{L}_g^{\mathrm{Dir}})^{-1}\boldsymbol{s}
\qquad
(m\to0).
\label{eq:app-source-response-dirichlet}
\end{equation}
If \(\boldsymbol{s}=\boldsymbol{e}_a\) for \(a\in\mathcal I_g\), this limit is the \(a\)-th column of the Dirichlet Green function.

The boundary-driven field and an interior localized-source response have distinct weak-cost limits:
\[
\boldsymbol{G}_g(m)\boldsymbol{S}_{\mathcal I_g g}
\to
\frac{\sqrt{\boldsymbol{\pi}}_{\mathcal I_g}}{\sqrt{\pi_g}},
\]
whereas
\[
\boldsymbol{G}_g(m)\boldsymbol{e}_a
\to
(\boldsymbol{L}_g^{\mathrm{Dir}})^{-1}\boldsymbol{e}_a.
\]
\subsection{Full-graph mean-zero Green function}
\label{app:green-full-mean-zero}

On a connected component, the full Laplacian
\[
\boldsymbol{L}=\boldsymbol{I}-\boldsymbol{S}
\]
has null vector
\[
\boldsymbol{\varphi}_0=\sqrt{\boldsymbol{\pi}}.
\]
Let
\[
\boldsymbol{P}_\perp=\boldsymbol{I}-\boldsymbol{\varphi}_0\boldsymbol{\varphi}_0^\top
\]
be the Euclidean orthogonal projection onto the subspace orthogonal to \(\sqrt{\boldsymbol{\pi}}\). The screened full-graph Green operator is
\[
\boldsymbol{G}(m)=(\boldsymbol{L}+m\boldsymbol{I})^{-1}.
\]
With orthonormal eigenpairs
\[
\boldsymbol{L}\boldsymbol{\varphi}_n=\lambda_n\boldsymbol{\varphi}_n,
\qquad
n\ge1,
\qquad
\lambda_n>0,
\]
the spectral expansion is
\begin{equation}
\boldsymbol{G}(m)
=
\frac1m\boldsymbol{\varphi}_0\boldsymbol{\varphi}_0^\top
+
\sum_{n\ge1}\frac{\boldsymbol{\varphi}_n\boldsymbol{\varphi}_n^\top}{\lambda_n+m}.
\label{eq:app-full-green}
\end{equation}
For a source \(\boldsymbol{s}\) satisfying
\[
\langle \boldsymbol{s},\boldsymbol{\varphi}_0\rangle=0,
\]
the divergent mode is absent and
\begin{equation}
\boldsymbol{G}(m)\boldsymbol{s}\to \boldsymbol{L}^\dagger \boldsymbol{s}
\qquad
(m\to0),
\label{eq:app-mean-zero-green}
\end{equation}
where
\[
\boldsymbol{L}^\dagger
=
\sum_{n\ge1}\frac{\boldsymbol{\varphi}_n\boldsymbol{\varphi}_n^\top}{\lambda_n}
\]
is the Moore--Penrose inverse on the \(\sqrt{\boldsymbol{\pi}}\)-orthogonal subspace.

For a point source at \(g\) in the symmetric network coordinate, the mean-zero source is
\begin{equation}
\boldsymbol{s}^{(g)}
=
\boldsymbol{P}_\perp \boldsymbol{e}_g
=
\boldsymbol{e}_g-\sqrt{\pi_g}\sqrt{\boldsymbol{\pi}}.
\label{eq:app-mean-zero-point-source}
\end{equation}
The corresponding full-graph mean-zero Green column is
\begin{equation}
\boldsymbol{g}_0^{(g)}
=
\boldsymbol{L}^\dagger
\left(
\boldsymbol{e}_g-\sqrt{\pi_g}\sqrt{\boldsymbol{\pi}}
\right).
\label{eq:app-full-green-column}
\end{equation}

For any vector \(\boldsymbol{w}\) in the symmetric network coordinate, the correct centering operation is
\begin{equation}
\mathcal C_{\boldsymbol{\pi}}(\boldsymbol{w})
=
\boldsymbol{P}_\perp \boldsymbol{w}
=
\boldsymbol{w}-\sqrt{\boldsymbol{\pi}}\,\langle\sqrt{\boldsymbol{\pi}},\boldsymbol{w}\rangle.
\label{eq:app-pi-centering}
\end{equation}
A scale-free shape normalization is
\begin{equation}
\mathcal N_{\boldsymbol{\pi}}(\boldsymbol{w})
=
\frac{\mathcal C_{\boldsymbol{\pi}}(\boldsymbol{w})}{\|\mathcal C_{\boldsymbol{\pi}}(\boldsymbol{w})\|_2},
\qquad
\mathcal C_{\boldsymbol{\pi}}(\boldsymbol{w})\ne0.
\label{eq:app-pi-shape-normalization}
\end{equation}
Arithmetic mean-centering coincides with Eq.~\eqref{eq:app-pi-centering} exactly in the uniform-\(\boldsymbol{\pi}\) case.

For an uncentered point response,
\begin{equation}
\boldsymbol{G}(m)\boldsymbol{e}_g
=
\frac{\sqrt{\pi_g}}{m}\sqrt{\boldsymbol{\pi}}
+
\boldsymbol{L}^\dagger
\left(
\boldsymbol{e}_g-\sqrt{\pi_g}\sqrt{\boldsymbol{\pi}}
\right)
+
\mathcal{O}(m).
\label{eq:app-point-source-divergence}
\end{equation}
Subtracting the \(\sqrt{\boldsymbol{\pi}}\) component isolates the finite mean-zero Green function. If the full graph has several connected components, \(\boldsymbol{P}_\perp\) is replaced by the projection onto the orthogonal complement of the full nullspace of \(\boldsymbol{L}\).

\section{Local environmental edits and low-rank structure}
\label{app:woodbury}

\subsection{Rank of a local connectivity edit}
\label{app:woodbury-rank-local-edit}

Let
\[
\boldsymbol{S}'=\boldsymbol{S}+\Delta\boldsymbol{S}
\]
be the recurrent matrix after a local environmental change. Closing or opening \(k\) undirected edges changes \(\mathcal{O}(k)\) off-diagonal entries. Under the lazy normalization with fixed \(\epsilon\), the same edit changes the self-loop entries of the affected rows so that row sums remain one. Under standard random-walk normalization, the same rank scaling holds on bounded-degree graphs, because a changed edge rescales only the outgoing weights incident to the affected vertices. Thus
\begin{equation}
\operatorname{rank}(\Delta\boldsymbol{S})\le r_{\mathrm{edit}},
\qquad
r_{\mathrm{edit}}=\mathcal{O}(k),
\label{eq:app-deltaS-rank}
\end{equation}
where \(k\) counts changed local matrix entries up to a bounded-degree constant.

Write a rank factorization
\begin{equation}
\Delta\boldsymbol{S}=\boldsymbol{U}_\Delta \boldsymbol{C}_\Delta \boldsymbol{V}_\Delta^\top,
\label{eq:app-deltaS-factor}
\end{equation}
with \(\boldsymbol{U}_\Delta,\boldsymbol{V}_\Delta\in\mathbb R^{N\times r_{\mathrm{edit}}}\) and nonsingular \(\boldsymbol{C}_\Delta\in\mathbb R^{r_{\mathrm{edit}}\times r_{\mathrm{edit}}}\), after discarding zero singular directions. A single directed entry change \(\Delta S_{ab}=\delta\) is the rank-one matrix \(\delta \boldsymbol{e}_a \boldsymbol{e}_b^\top\). A symmetric edge edit with lazy row-stochastic renormalization is a sum of a constant number of such rank-one terms.

\subsection{Fixed-goal Dirichlet update}
\label{app:woodbury-fixed-goal-update}

For a fixed goal \(g\), define
\[
\boldsymbol{A}_g=\boldsymbol{I}-\alpha \boldsymbol{S}_{\mathcal I_g\mathcal I_g},
\qquad
\boldsymbol{b}^{(g)}=\alpha \boldsymbol{S}_{\mathcal I_g g}\bar u_g.
\]
The original field is
\[
\boldsymbol{u}^{(g)}=\boldsymbol{A}_g^{-1}\boldsymbol{b}^{(g)}.
\]
After the edit,
\[
\boldsymbol{A}'_g
=
\boldsymbol{I}-\alpha \boldsymbol{S}'_{\mathcal I_g\mathcal I_g}
=
\boldsymbol{A}_g-\alpha\Delta \boldsymbol{S}_{\mathcal I_g\mathcal I_g},
\]
and
\[
\boldsymbol{b}'^{(g)}
=
\boldsymbol{b}^{(g)}+\delta \boldsymbol{b}^{(g)},
\qquad
\delta \boldsymbol{b}^{(g)}
=
\alpha\Delta \boldsymbol{S}_{\mathcal I_g g}\bar u_g.
\]
Factor
\begin{equation}
-\alpha\Delta \boldsymbol{S}_{\mathcal I_g\mathcal I_g}
=
\boldsymbol{U}_g\boldsymbol{C}_g\boldsymbol{V}_g^\top.
\label{eq:app-fixed-goal-factor}
\end{equation}
The factorization is chosen with nonsingular \(\boldsymbol{C}_g\) on the retained rank.
Then
\[
\boldsymbol{A}'_g=\boldsymbol{A}_g+\boldsymbol{U}_g\boldsymbol{C}_g\boldsymbol{V}_g^\top.
\]
Woodbury gives
\begin{equation}
(\boldsymbol{A}'_g)^{-1}
=
\boldsymbol{A}_g^{-1} -
\boldsymbol{A}_g^{-1}\boldsymbol{U}_g
\left(
\boldsymbol{C}_g^{-1}+\boldsymbol{V}_g^\top \boldsymbol{A}_g^{-1}\boldsymbol{U}_g
\right)^{-1}
\boldsymbol{V}_g^\top \boldsymbol{A}_g^{-1}.
\label{eq:app-woodbury-fixed-goal}
\end{equation}
Therefore
\begin{align}
\boldsymbol{u}'^{(g)}-\boldsymbol{u}^{(g)}
&=
\boldsymbol{A}_g^{-1}\delta \boldsymbol{b}^{(g)}
\nonumber\\
&\quad
-
\boldsymbol{A}_g^{-1}\boldsymbol{U}_g
\left(
\boldsymbol{C}_g^{-1}+\boldsymbol{V}_g^\top \boldsymbol{A}_g^{-1}\boldsymbol{U}_g
\right)^{-1}
\boldsymbol{V}_g^\top \boldsymbol{A}_g^{-1}
\bigl(\boldsymbol{b}^{(g)}+\delta \boldsymbol{b}^{(g)}\bigr).
\label{eq:app-field-difference-fixed-goal}
\end{align}
The fixed-goal field change lies in the span of
\[
\boldsymbol{A}_g^{-1}\boldsymbol{U}_g
\]
together with
\[
\boldsymbol{A}_g^{-1}\delta \boldsymbol{b}^{(g)}.
\]
Under fixed-\(\epsilon\) lazy normalization, if no edited matrix entry lies in the goal column \(\boldsymbol{S}_{\mathcal I_g g}\), then
\[
\delta \boldsymbol{b}^{(g)}=0,
\]
and the field change lies in an \(\mathcal{O}(k)\)-dimensional subspace. Under random-walk normalization, \(\delta \boldsymbol{b}^{(g)}\) can also change when an edit changes the degree of a vertex adjacent to \(g\), or changes \(d_g\). When the boundary column changes, \(\delta \boldsymbol{b}^{(g)}\) is supported on the locally edited boundary entries under lazy normalization and on the affected goal-neighborhood entries under bounded-degree random-walk normalization, contributing another \(\mathcal{O}(k)\)-dimensional subspace.

\subsection{Shared full-resolvent update}
\label{app:woodbury-shared-full-resolvent}

Let
\[
\boldsymbol{R}=(\boldsymbol{I}-\alpha \boldsymbol{S})^{-1},
\qquad
\boldsymbol{R}'=(\boldsymbol{I}-\alpha \boldsymbol{S}')^{-1}.
\]
Factor
\begin{equation}
-\alpha\Delta\boldsymbol{S}=\boldsymbol{U}\boldsymbol{C}\boldsymbol{V}^\top.
\label{eq:app-full-factor}
\end{equation}
The factorization is chosen with nonsingular \(\boldsymbol{C}\) on the retained rank.
Then
\[
\boldsymbol{I}-\alpha \boldsymbol{S}'=\boldsymbol{I}-\alpha \boldsymbol{S}+\boldsymbol{U}\boldsymbol{C}\boldsymbol{V}^\top.
\]
Woodbury gives
\begin{equation}
\boldsymbol{R}'
=
\boldsymbol{R} -
\boldsymbol{R}\boldsymbol{U}
\left(
\boldsymbol{C}^{-1}+\boldsymbol{V}^\top \boldsymbol{R}\boldsymbol{U}
\right)^{-1}
\boldsymbol{V}^\top \boldsymbol{R}.
\label{eq:app-resolvent-update}
\end{equation}
Hence
\begin{equation}
\boldsymbol{R}'-\boldsymbol{R}
=
-
\boldsymbol{R}\boldsymbol{U}
\left(
\boldsymbol{C}^{-1}+\boldsymbol{V}^\top \boldsymbol{R}\boldsymbol{U}
\right)^{-1}
\boldsymbol{V}^\top \boldsymbol{R},
\label{eq:app-resolvent-difference}
\end{equation}
and
\begin{equation}
\operatorname{rank}(\boldsymbol{R}'-\boldsymbol{R})\le r_{\mathrm{edit}}=\mathcal{O}(k).
\label{eq:app-resolvent-rank}
\end{equation}

For a fixed source matrix \(\boldsymbol{B}\) independent of the edit,
\[
\boldsymbol{F}=\boldsymbol{R}\boldsymbol{B},
\qquad
\boldsymbol{F}'=\boldsymbol{R}'\boldsymbol{B},
\]
so
\begin{equation}
\boldsymbol{F}'-\boldsymbol{F}=(\boldsymbol{R}'-\boldsymbol{R})\boldsymbol{B}
\label{eq:app-fixed-source-update}
\end{equation}
has rank at most \(r_{\mathrm{edit}}\). If the source matrix changes by a local low-rank term \(\Delta\boldsymbol{B}\), then
\[
\boldsymbol{F}'=\boldsymbol{R}'(\boldsymbol{B}+\Delta\boldsymbol{B}),
\]
and
\begin{equation}
\boldsymbol{F}'-\boldsymbol{F}=(\boldsymbol{R}'-\boldsymbol{R})\boldsymbol{B}+\boldsymbol{R}'\Delta\boldsymbol{B},
\label{eq:app-source-change-update}
\end{equation}
with
\[
\operatorname{rank}(\boldsymbol{F}'-\boldsymbol{F})
\le
r_{\mathrm{edit}}+\operatorname{rank}(\Delta\boldsymbol{B}).
\]
\subsection{Unit-clamped all-goal fields}
\label{app:woodbury-unit-clamped-all-goal-fields}

A unit-clamped goal field can be represented through the full resolvent. For a source \(c_g \boldsymbol{e}_g\),
\[
\boldsymbol{u}^{(g)}=c_g\boldsymbol{R}\boldsymbol{e}_g.
\]
Choosing
\[
c_g=\frac{\bar u}{R_{gg}}
\]
enforces
\[
u_g^{(g)}=\bar u.
\]
For all goals with common clamp amplitude \(\bar u\), define
\begin{equation}
\boldsymbol{\Psi}=\boldsymbol{R}\boldsymbol{D}_{\mathrm{clamp}},
\qquad
\boldsymbol{D}_{\mathrm{clamp}}=\bar u\,\operatorname{diag}(R_{11}^{-1},\ldots,R_{NN}^{-1}).
\label{eq:app-clamped-field-matrix}
\end{equation}
If \(\boldsymbol{S}\ge0\), then
\[
\boldsymbol{R}=\boldsymbol{I}+\alpha \boldsymbol{S}+\alpha^2\boldsymbol{S}^2+\cdots
\]
has nonnegative entries and
\[
R_{gg}\ge1,
\]
so \(\boldsymbol{D}_{\mathrm{clamp}}\) is well defined.

After the edit,
\[
\boldsymbol{\Psi}'=\boldsymbol{R}'\boldsymbol{D}_{\mathrm{clamp}}',
\qquad
\boldsymbol{D}_{\mathrm{clamp}}'=\bar u\,\operatorname{diag}(R_{11}'^{-1},\ldots,R_{NN}'^{-1}).
\]
Thus
\begin{equation}
\boldsymbol{\Psi}'-\boldsymbol{\Psi}
=
(\boldsymbol{R}'-\boldsymbol{R})\boldsymbol{D}_{\mathrm{clamp}}
+
\boldsymbol{R}'(\boldsymbol{D}_{\mathrm{clamp}}'-\boldsymbol{D}_{\mathrm{clamp}}).
\label{eq:app-clamped-field-difference}
\end{equation}
The first term has rank at most \(r_{\mathrm{edit}}=\mathcal{O}(k)\). The second term is the source-renormalization term required to keep every goal clamped to the same amplitude. Its algebraic rank can be full because the diagonal entries \(R_{gg}\) can change for many goals.

The full-rank source-renormalization term is spatially localized in goal index under an exponential resolvent envelope. Let \(\mathcal V_{\mathrm{edit}}\) be the set of vertices incident to the edited entries. Suppose
\[
|R_{ij}|\le C_R e^{-\kappa_R d(i,j)},
\qquad
|R'_{ij}|\le C_R' e^{-\kappa_R d(i,j)}
\]
and the Woodbury middle factor
\[
\left(
\boldsymbol{C}^{-1}+\boldsymbol{V}^\top \boldsymbol{R}\boldsymbol{U}
\right)^{-1}
\]
is uniformly bounded, with \(\boldsymbol{U}\) and \(\boldsymbol{V}\) supported on \(\mathcal V_{\mathrm{edit}}\) and its edited neighborhood. Then Eq.~\eqref{eq:app-resolvent-difference} gives
\begin{equation}
|\Delta R_{gg}|
=
|R'_{gg}-R_{gg}|
\le
C_{\Delta R}e^{-2\kappa_R d(g,\mathcal V_{\mathrm{edit}})}.
\label{eq:app-diagonal-resolvent-change}
\end{equation}
Since \(R_{gg}\ge1\) and \(R'_{gg}\ge1\),
\begin{equation}
|D'_{\mathrm{clamp},gg}-D_{\mathrm{clamp},gg}|
=
\bar u
\left|
\frac1{R'_{gg}}-\frac1{R_{gg}}
\right|
\le
C_D e^{-2\kappa_R d(g,\mathcal V_{\mathrm{edit}})}.
\label{eq:app-diagonal-source-change}
\end{equation}

For a truncation radius \(R_{\mathrm{edit}}\), retain only goals satisfying
\[
d(g,\mathcal V_{\mathrm{edit}})\le R_{\mathrm{edit}}.
\]
The retained source-renormalization term has rank at most the number of goals in that edited-neighborhood ball. The discarded diagonal entries obey
\[
|D'_{\mathrm{clamp},gg}-D_{\mathrm{clamp},gg}|
\le
C_D e^{-2\kappa_R R_{\mathrm{edit}}}.
\]
On graphs with subexponential or sufficiently slow volume growth, the discarded Frobenius energy decays exponentially in \(R_{\mathrm{edit}}\). As \(\alpha\) approaches one, the decay rate \(\kappa_R\) may decrease and resolvent constants may grow, so this locality estimate weakens in the weak-cost limit.

For each individual goal, multiplying the field by a scalar leaves the log-gradient unchanged:
\[
\nabla\log(c_g f^{(g)})=\nabla\log f^{(g)}
\qquad
(c_g>0).
\]
Thus diagonal source renormalization affects raw all-goal field matrices and cancels exactly in per-goal log-gradient readouts.

\section{Numerical experiment details}
\label{app:experiments}

This section gives the numerical protocol used for all simulations in
Sec.~\ref{sec:experiments}. The purpose of the simulations is to check four
claims made in the main text and in the theoretical appendix. First, the
linear goal-clamped network must reproduce the LMDP desirability field with the
stationary-measure correction in Theorem~\ref{thm:equiv}. Second, the same
readout must remain stable when a strong goal clamp pushes local neurons into
the nonlinear range of the activation function. Third, transition-gated BTSP
must write enough legal local connectivity during goal-free exploration to plan
through a maze with obstacles, and a local environmental edit must produce the
low-rank remapping predicted by App.~\ref{app:woodbury-unit-clamped-all-goal-fields}. Fourth, the same
recurrent resolvent must connect to the screened Green function in
App.~\ref{app:green-full-mean-zero}, and the effect of asymmetric BTSP sampling must be
controlled by the perturbation scale derived in App.~\ref{app:asymmetry-irreversible-sampling}.

\subsection{Common graph, field, and readout conventions}
\label{app:exp-common}

Each open grid cell was represented by one assembly. Coordinates were integer
cell centers, denoted by \(\boldsymbol{x}_i\in\mathbb R^2\). Unless a specific
experiment below states a different normalization, the accessible transition
memory was a symmetric weighted grid graph with eight-neighbor connectivity. A
horizontal or vertical step and a diagonal step were both allowed when the two
end cells were open. Diagonal corner cutting was removed: a diagonal edge was
kept only when the two associated axial edges were also legal. The raw edge
weight was
\begin{equation}
W_{ij}=\exp\left(-\frac{\|\boldsymbol{x}_j-\boldsymbol{x}_i\|_2^2}{2\sigma_e^2}\right),
\qquad \sigma_e=1.
\label{eq:app-exp-edge-weight}
\end{equation}
Thus axial edges had weight \(e^{-1/2}\), diagonal edges had weight \(e^{-1}\),
and blocked transitions had weight zero. The matrix was symmetrized after
edge construction.

For the equivalence, BTSP-learning, low-rank, and Green-limit simulations, the
symmetric transition memory \(\boldsymbol{W}\) was converted into a passive
kernel by the standard random-walk normalization of App.~\ref{app:reversibility-random-walk-normalization}:
\begin{equation}
P_0(j\mid i)=\frac{W_{ij}}{d_i},
\qquad
d_i=\sum_j W_{ij},
\qquad
\pi_i=\frac{d_i}{\sum_\ell d_\ell}.
\label{eq:app-exp-rw-normalization}
\end{equation}
The recurrent matrix used in the linear network was the symmetric coordinate
\begin{equation}
\boldsymbol{S}=\boldsymbol{\Pi}^{1/2}\boldsymbol{P}_0\boldsymbol{\Pi}^{-1/2}.
\label{eq:app-exp-S-normalization}
\end{equation}
This choice makes the stationary-measure factor in
Theorem~\ref{thm:equiv} visible near doors, walls, and boundaries. The
asymmetric BTSP experiment used a fixed-\(\epsilon\) lazy normalization instead;
that case is described separately in App.~\ref{app:exp-asymmetric-btsp}.

For a recalled goal \(g\), the LMDP desirability field was computed from
\begin{equation}
(\boldsymbol{I}_{\mathcal I_g}-\alpha\boldsymbol{P}_{0,\mathcal I_g\mathcal I_g})
\boldsymbol{z}_{\mathcal I_g}^g
=
\alpha\boldsymbol{P}_{0,\mathcal I_g g},
\qquad
z_g^g=1.
\label{eq:app-exp-lmdp-solve}
\end{equation}
The linear recurrent field was computed from
\begin{equation}
(\boldsymbol{I}_{\mathcal I_g}-\alpha\boldsymbol{S}_{\mathcal I_g\mathcal I_g})
\boldsymbol{u}_{\mathcal I_g}
=
\alpha\boldsymbol{S}_{\mathcal I_g g}\bar u_g,
\qquad
u_g=\bar u_g.
\label{eq:app-exp-network-solve}
\end{equation}
The normalized network field was \(\psi_i^g=u_i/\bar u_g\). When
\(\boldsymbol{\pi}\) was nonuniform, the LMDP-coordinate field was
\begin{equation}
\widehat z_i
=
\sqrt{\frac{\pi_g}{\pi_i}}\,\psi_i^g,
\qquad
\widehat z_g=1.
\label{eq:app-exp-pi-correction}
\end{equation}
This is the numerical version of Eq.~\eqref{eq:state-map}. Sparse linear systems
were solved directly with SciPy sparse solvers.

The place-tuned readout was the discrete version of Eq.~\eqref{eq:popvec}. At a
current state \(i\), a graph-geodesic Gaussian place kernel was formed around
\(i\),
\begin{equation}
\mathcal G_i(j)=\exp\left[-\frac{d_{\rm graph}(i,j)^2}{2\sigma_G^2}\right],
\qquad
d_{\rm graph}(i,j)\le 3\sigma_G,
\label{eq:app-exp-place-kernel}
\end{equation}
where graph distance used Euclidean edge lengths on the legal graph. The
population-vector readout was
\begin{equation}
\Delta\boldsymbol{x}_{\rm pop}(i;\boldsymbol{f})
=
\frac{\sum_j \mathcal G_i(j) f_j(\boldsymbol{x}_j-\boldsymbol{x}_i)}
{\sum_j \mathcal G_i(j) f_j},
\label{eq:app-exp-discrete-popvec}
\end{equation}
with \(\boldsymbol{f}=\boldsymbol{z}^g\), \(\widehat{\boldsymbol{z}}\), or the relevant
normalized source-response field. Readout-angle errors were computed as the
angle between two such vectors at the same state, omitting states where either
vector had norm below numerical precision.

For rollout figures, the continuous readout vector was projected onto the legal
neighboring moves. If \(\boldsymbol{v}_i\) is the readout at state \(i\), each
neighbor \(j\) received the score
\begin{equation}
\mathrm{score}(j\mid i)
=
\frac{(\boldsymbol{x}_j-\boldsymbol{x}_i)^\top \boldsymbol{v}_i}
{\|\boldsymbol{x}_j-\boldsymbol{x}_i\|_2\,\|\boldsymbol{v}_i\|_2}
+w_f\frac{f_j-f_i}{|f_i|+10^{-12}}
-w_\ell N_j,
\label{eq:app-exp-rollout-score}
\end{equation}
where \(N_j\) is the number of previous visits to \(j\). The next state was the
legal neighbor with the largest score. The pure readout rollouts used
\(w_f=w_\ell=0\). Some diagnostic rollouts used a small field-gain or loop
penalty, and the exact values are listed below. A route was counted as
successful when it reached the goal state, or when the bug-trap protocol used a
specified goal-zone radius. Route length was the sum of Euclidean lengths of
successive grid moves. Route ratio was route length divided by the shortest
legal graph path length under the same success criterion.

Relative field errors were reported as
\begin{equation}
\varepsilon_{\rm field}(\boldsymbol{a},\boldsymbol{b})
=
\frac{\|\boldsymbol{a}-\boldsymbol{b}\|_2}{\|\boldsymbol{b}\|_2+10^{-15}}.
\label{eq:app-exp-field-error}
\end{equation}
Field correlations were Pearson correlations across all open cells. All plots
of fields used \(\log(\boldsymbol{f}+10^{-10})\) or \(\log(\boldsymbol{f}+10^{-12})\) only for display;
metrics were computed on the untransformed fields unless stated otherwise.

\subsection{Equivalence and strong-recall robustness}
\label{app:exp-equivalence}

The first experiment checks the exact theorem before adding learning or maze
changes. The maze was a two-room diagnostic environment with a zero-thickness
vertical wall. The default size was \(68\times48\) grid cells, with the wall
between columns \(33\) and \(34\), a near door over rows \(10\)--\(15\), and a
far door over rows \(34\)--\(40\). The start was \((9,12)\), the goal was
\((58,37)\), and the legal graph had \(3036\) states and \(11709\) undirected
edges. This maze was chosen because the bottlenecks make \(\boldsymbol{\pi}\)
mildly nonuniform, so the factor \(\sqrt{\pi_i/\pi_g}\) in
Theorem~\ref{thm:equiv} can be tested directly.

For the linear equivalence check, Eq.~\eqref{eq:app-exp-lmdp-solve} and
Eq.~\eqref{eq:app-exp-network-solve} were solved at
\begin{equation}
\alpha\in\{0.90,0.94,0.97,0.985,0.992,0.996\},
\qquad
\bar u_g=1.
\label{eq:app-exp-alpha-grid}
\end{equation}
The network field was converted to the LMDP coordinate by
Eq.~\eqref{eq:app-exp-pi-correction}. At the main setting \(\alpha=0.985\), the
corrected field matched the LMDP desirability with relative error
\(9.55\times10^{-16}\) and correlation \(1.000\). Across the whole \(\alpha\)
grid, the corrected relative error stayed between \(6.29\times10^{-16}\) and
\(9.59\times10^{-16}\). The uncorrected field \(\boldsymbol{\psi}^g\) differed
from \(\boldsymbol{z}^g\) by \(3.17\%\) at \(\alpha=0.985\), matching the size
of the stationary-measure correction. This panel tests the algebraic identity
in Theorem~\ref{thm:equiv}; the uncorrected points show why the
\(\sqrt{\boldsymbol{\pi}}\) factor is needed in a nonuniform environment.

The nonlinear recall test used the same maze and \(\alpha=0.985\). The activity
nonlinearity was
\begin{equation}
\phi(u)=\tanh u.
\label{eq:app-exp-tanh}
\end{equation}
For each clamp amplitude
\begin{equation}
A=\bar u_g\in\{0.05,0.10,0.20,0.40,0.80,1.60,3.20,5.00\},
\label{eq:app-exp-A-grid}
\end{equation}
the non-goal fixed point was iterated from the linear solution using
\begin{equation}
\boldsymbol{u}_{\mathcal I_g}
=
\alpha\boldsymbol{S}_{\mathcal I_g\mathcal I_g}\tanh(\boldsymbol{u}_{\mathcal I_g})
+
\alpha\boldsymbol{S}_{\mathcal I_g g}\tanh A,
\qquad
u_g=A.
\label{eq:app-exp-nonlinear-fixed-point}
\end{equation}
Damped fixed-point iteration used damping factor \(0.72\) and relative tolerance
\(2\times10^{-11}\) for the one-dimensional amplitude sweep. The nonlinear
field was first mapped to the LMDP coordinate by Eq.~\eqref{eq:app-exp-pi-correction}
and then aligned to the linear field by a single scalar least-squares factor.
This scalar alignment removes the uniform amplitude change that cannot affect
Eq.~\eqref{eq:popvec}.

The visible nonlinear shell was measured through local gain loss. For the tanh
unit,
\begin{equation}
1-\frac{\phi'(u)}{\phi'(0)}=\tanh^2 u.
\label{eq:app-exp-gain-loss}
\end{equation}
Cells were binned by graph distance from the goal. In each distance bin, the
\(90\)th percentile of Eq.~\eqref{eq:app-exp-gain-loss} was computed. The shell
radius was the largest distance bin whose \(90\)th percentile exceeded
\(3\times10^{-3}\). This radius measures where the local gain has visibly left
the small-signal regime. At \(A=5\), the radius was eight graph-distance cells.
Outside that shell, the population-vector readout remained close to the linear
prediction: the mean angle error was \(0.080^\circ\), and the \(99\)th percentile
was \(0.60^\circ\). At a probe location near the door and far from the goal,
the angle error was \(0.015^\circ\). The shell radius grew approximately as
\(\log A\), with \(R^2=0.94\). This directly tests the shell-radius estimate in
App.~\ref{app:perturbation-far-field-amplitude}: strong recall changes the local shape near the
clamp, while the action-relevant log-gradient remains stable outside that local
region.

A two-parameter stress sweep repeated the nonlinear solve for every
\(\alpha\) in Eq.~\eqref{eq:app-exp-alpha-grid} and every \(A\) in
Eq.~\eqref{eq:app-exp-A-grid}, using tolerance \(5\times10^{-10}\). The displayed
main row \(\alpha=0.985\) stayed below a \(1^\circ\) outside-shell \(99\)th
percentile across the whole recall range; its maximum was \(0.60^\circ\). This
stress map separates two quantities that can be conflated visually: the field
amplitude can be strongly compressed near the goal, yet the far-field readout
that drives behavior changes by less than one degree in the main regime.

\subsection{Goal-free BTSP learning and local low-rank remapping}
\label{app:exp-btsp}

The second experiment asks whether the recurrent operator needed for planning
can be written by local goal-free transition events. It also asks whether a
local environmental edit produces the low-rank field remapping predicted by
App.~\ref{app:woodbury-unit-clamped-all-goal-fields}. The experiment has two parts: a challenging bug-trap
navigation task and a smaller two-door remapping diagnostic.

Goal-free exploration was simulated as short novelty-biased bouts on the legal
graph. At the beginning of each bout, the animal was placed in a low-visit
region, defined by the lowest quartile of current visit counts. During a bout,
the next state was sampled among legal neighbors using the score
\begin{equation}
q(j\mid i)
=
-1.20\,n_j
-0.80\,n_{\{i,j\}}
+0.55\,h_{ij}
+0.05\,\xi_j.
\label{eq:app-exp-exploration-score}
\end{equation}
Here \(n_j\) is the visit count of state \(j\), \(n_{\{i,j\}}\) is the visit
count of the undirected edge \(\{i,j\}\), \(h_{ij}\) is the cosine alignment
between the proposed step and the current heading, and \(\xi_j\) is a standard
normal random variable. A softmax with temperature \(0.55\) converted these
scores into choice probabilities. This exploration rule has no goal input and
no shortest-path information. It samples under-visited local transitions while
keeping a mild heading persistence.

Each sampled transition \(i\to j\) added one count to both \(W_{ij}\) and
\(W_{ji}\). Thus the learned memory is the symmetric BTSP transition memory in
Sec.~\ref{sec:btsp-rule}. A self-only numerical floor of \(10^{-7}\) was added
before normalization so that isolated rows during early sparse learning did not
make the sparse linear system singular. This floor did not add untraversed
transition geometry. Because every event was sampled from the legal graph, the
learned edge precision was exactly \(1.000\) in all BTSP-learning runs; the
main variable was edge recall.

\paragraph{Bug-trap navigation.}
The bug-trap maze was \(112\times76\) cells. The start was \((102,45)\), the
goal was \((72,43)\), and the legal graph had \(7285\) states and \(27669\)
undirected edges. The start was only \(30.1\) Euclidean cells from the goal, but
the shortest legal route to the goal zone had length \(110.43\). The goal-zone
radius was \(1.5\) cells. This geometry makes the task useful scientifically:
a local cue pointing straight toward the goal hits a wall, so successful
navigation requires a field that has already integrated the global obstacle
layout.

The bug-trap field used \(\alpha=0.993\). The readout width was
\(\sigma_G=3.2\), and pure readout rollouts used
\(w_f=w_\ell=0\), maximum length \(3600\) steps, and goal-zone radius \(1.5\).
Five exploration seeds were used: \(4,15,27,41,57\). Performance was measured at
\(1.0,1.5,2.0,2.5\), and \(3.0\) goal-free transitions per assembly. At the
displayed milestone, \(3.0\) transitions per assembly corresponded to
\(21855\) sampled transition events.

At this milestone, edge recall was \(68.9\%\) on average across the five seeds,
precision was \(1.000\), and the learned field correlated with the ideal-maze
field at \(0.988\pm0.003\) (mean \(\pm\) SEM). All five seeds reached the goal
zone using the pure population-vector readout. The median route ratio was
\(1.16\). For the displayed seed, the pure readout route took \(110\) legal
steps with route ratio \(1.16\); the first readout vector was
\(93.8^\circ\) away from the direct Euclidean vector to the goal. A Euclidean
greedy baseline, which chose the legal neighbor most aligned with the direct
goal vector and used only a small revisit penalty, stalled at the wall after
\(16\) legal steps. This result tests the core claim that local goal-free BTSP
can write a map whose relaxation solves a global obstacle-avoidance query.

\paragraph{Two-door remapping diagnostic.}
The low-rank remapping diagnostic used a smaller \(60\times42\) two-door maze
with the wall between columns \(29\) and \(30\), a near door over rows
\(8\)--\(13\), and a far door over rows \(29\)--\(35\). The start was
\((8,10)\), the goal was \((51,32)\), and the graph had \(2320\) states and
\(8905\) undirected edges. The discount was \(\alpha=0.988\), the readout width
was \(\sigma_G=2.9\), and rollouts used maximum length \(2200\), zero goal
radius, and loop penalty \(w_\ell=0.02\). The exploration checkpoints were
\(0.15,0.30,0.50,0.80,1.10,1.50,2.00\), and \(2.80\) transitions per assembly,
again across seeds \(4,15,27,41,57\).

The displayed remapping point was \(2.0\) transitions per assembly. At this
sparse milestone, the learned graph had mean edge recall \(49.4\%\), precision
\(1.000\), and learned-versus-ideal field correlation \(0.973\pm0.006\). All
five seeds reached the goal, with median route ratio \(1.27\). For the displayed
seed, the open-door route ratio was \(1.20\).

The local edit closed the near door by removing the \(16\) coordinate-level
crossing edges that passed through that door. No other learned weights were
changed, and no new exploration was run. After the edit, the next relaxation
rerouted the displayed goal through the far door; the displayed closed-door
route ratio was \(1.09\). For the Woodbury check, the effective support of
\(\Delta\boldsymbol{S}\) above numerical threshold \(10^{-10}\) involved
\(26\) rows. A direct solve on the closed-door operator and the Woodbury update
of App.~\ref{app:woodbury-fixed-goal-update} agreed with relative error \(1.82\times10^{-15}\).

To test whether the same local edit has low rank across many goals, \(34\) goal
states were selected by taking spatial quantiles of the score
\(\boldsymbol{q}^\top\boldsymbol{x}_i\), with \(\boldsymbol{q}=(1,0.7)^\top\). For each sampled goal, the open-door and closed-door linear network
fields were solved, corrected to the LMDP coordinate by
Eq.~\eqref{eq:app-exp-pi-correction}, and differenced. These \(34\) field
differences were stacked as columns of a matrix. The first two singular modes
explained \(99.46\%\) of its total squared energy. This is the numerical version
of the low-rank statement in Sec.~\ref{sec:lowrank}: a local edit produces a
small set of propagation patterns that explains the field change for many
recalled goals.

\subsection{Weak-cost Green-function limit}
\label{app:exp-green-limit}

The Green-limit experiment used the same \(60\times42\) two-door diagnostic maze
as the remapping diagnostic. It checks the operator identity in
App.~\ref{app:green-full-mean-zero} and clarifies which Green object is being visualized. The
clamped boundary field \(\boldsymbol{\psi}^g\) has a flat hitting-probability
limit in desirability coordinates as \(\alpha\to1\). To isolate the Laplacian
Green object used for spectral navigation, the numerical experiment used the
full-graph point-source response
\begin{equation}
\boldsymbol{r}_\alpha^{(g)}
=(\boldsymbol{I}-\alpha\boldsymbol{S})^{-1}\boldsymbol{e}_g.
\label{eq:app-exp-source-resolvent}
\end{equation}
With \(\boldsymbol{L}=\boldsymbol{I}-\boldsymbol{S}\) and
\begin{equation}
m_\alpha=\frac{1-\alpha}{\alpha},
\label{eq:app-exp-screening-mass}
\end{equation}
the exact algebraic relation is
\begin{equation}
\alpha\boldsymbol{r}_\alpha^{(g)}
=
(\boldsymbol{L}+m_\alpha\boldsymbol{I})^{-1}\boldsymbol{e}_g.
\label{eq:app-exp-resolvent-green-identity}
\end{equation}
This identity was tested at
\begin{equation}
\alpha\in\{0.90,0.94,0.97,0.985,0.992,0.996,0.998,0.999,0.9995\}.
\label{eq:app-exp-green-alpha-grid}
\end{equation}
The relative error in Eq.~\eqref{eq:app-exp-resolvent-green-identity} stayed at
floating-point precision, with maximum value \(2.67\times10^{-13}\) at
\(\alpha=0.9995\).

The weak-cost comparison used the mean-zero Laplacian Green column in the
symmetric network coordinate. Let
\begin{equation}
\boldsymbol{\varphi}_0=\sqrt{\boldsymbol{\pi}},
\qquad
\boldsymbol{P}_\perp
=
\boldsymbol{I}-\boldsymbol{\varphi}_0\boldsymbol{\varphi}_0^\top.
\label{eq:app-exp-null-projection}
\end{equation}
The reference was
\begin{equation}
\boldsymbol{G}_0^{(g)}
=
\boldsymbol{L}^\dagger
\left(\boldsymbol{e}_g-\sqrt{\pi_g}\sqrt{\boldsymbol{\pi}}\right),
\label{eq:app-exp-pseudoinverse-reference}
\end{equation}
computed by a sparse least-squares solve with absolute and relative tolerances
\(10^{-10}\). For each \(m_\alpha\), the screened Green field was projected by
\(\boldsymbol{P}_\perp\) and normalized to unit Euclidean norm before comparison
with Eq.~\eqref{eq:app-exp-pseudoinverse-reference}. The resulting shape error
decreased from \(1.168\) at \(m_\alpha=0.111\) to \(0.162\) at
\(m_\alpha=5.0\times10^{-4}\), showing convergence to the Laplacian Green shape
as the cost decreases.

The spectral reconstruction in Fig.~\ref{fig:app-green-limit} used the
\(72\) smallest eigenpairs of \(\boldsymbol{L}\), computed by sparse symmetric
eigensolver. The screened Green expansion was
\begin{equation}
\boldsymbol{G}_{m_\alpha}^{(g)}
=
\sum_n
\frac{(\boldsymbol{\varphi}_n)_g\,\boldsymbol{\varphi}_n}{\lambda_n+m_\alpha}.
\label{eq:app-exp-spectral-green}
\end{equation}
The plot shows the non-null gains normalized by the first non-null gain and the
shape error after adding successive low-frequency modes. This panel checks that
the Green field is dominated by low-frequency Laplacian modes in the weak-cost
regime, matching the spectral interpretation in Sec.~\ref{sec:exp-green}.

For the displayed field, \(\alpha=0.996\), so \(m_\alpha=0.00402\). The plotted
field was \(\boldsymbol{G}_{m_\alpha}^{(g)}\) normalized by its goal value; white
contours show the independently computed recurrent source response
\(\boldsymbol{r}_\alpha^{(g)}\), also normalized by its goal value. The two agree
with relative error \(2.0\times10^{-15}\). The rollout in the displayed panel
used \(\sigma_G=2.9\), maximum length \(1800\), zero goal radius, and loop
penalty \(w_\ell=0.02\).

\begin{figure}[!t]
\centering
\includegraphics[width=0.95\linewidth]{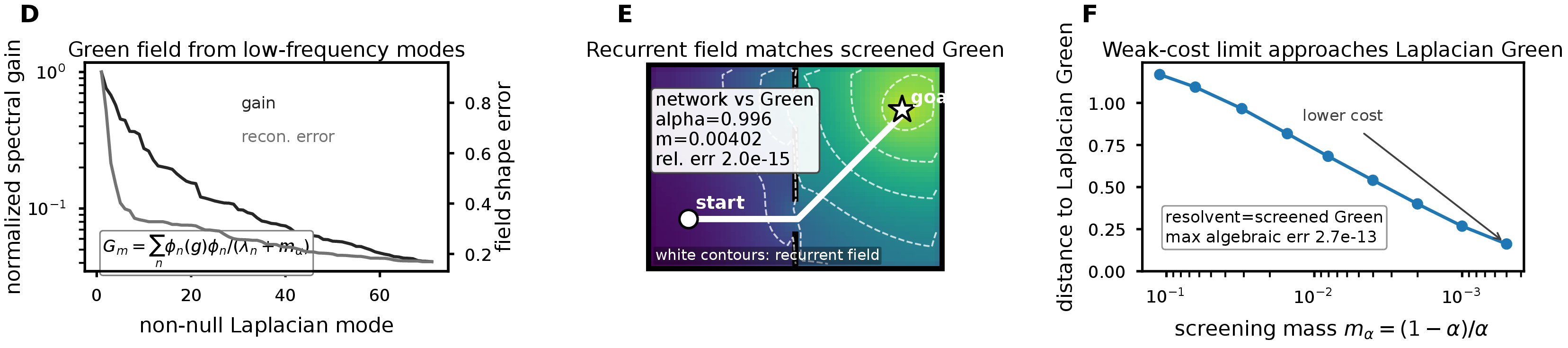}
\caption{\textbf{The recurrent source response is a screened Green function and
approaches the Laplacian Green shape in the weak-cost limit.}
(\textbf{D}) Spectral reconstruction of the screened Green response using the
low-frequency eigenmodes of \(\boldsymbol{L}=\boldsymbol{I}-\boldsymbol{S}\).
The black curve shows normalized gains \((\lambda_n+m_\alpha)^{-1}\), and the
gray curve shows the shape error after adding successive modes.
(\textbf{E}) At \(\alpha=0.996\), the source response of the recurrent network
matches the screened Green field; white contours show the recurrent response
and color shows the Green field.
(\textbf{F}) As \(m_\alpha=(1-\alpha)/\alpha\) decreases, the mean-zero screened
Green shape approaches the Laplacian pseudoinverse Green column. The algebraic
identity between the recurrent resolvent and the screened Green operator holds
at numerical precision throughout the sweep.}
\label{fig:app-green-limit}
\end{figure}

\subsection{Asymmetric BTSP perturbation and bilateral replay}
\label{app:exp-asymmetric-btsp}

The exact theorem assumes a reversible passive kernel. Empirical BTSP sampling
can contain directional residue from biased movement, finite sampling, or
forward-dominant sequence statistics. The asymmetric BTSP experiment asks a
more precise question than generic robustness: whether all deviations from the
reversible reference are ordered by the single dimensionless perturbation scale
\begin{equation}
\chi=\frac{\alpha\|\boldsymbol{E}_{\rm asym}\|_2}{1-\alpha}
\label{eq:app-exp-chi-def}
\end{equation}
from App.~\ref{app:asymmetry-irreversible-sampling}. This question is quantitative and falsifiable.
If \(\eta\), \(\alpha\), maze position, or sampling seed controlled the error
separately, field error and readout error would scatter when plotted against
\(\chi\). A collapse onto one curve supports the perturbation analysis.

The experiment used the same \(112\times76\) bug-trap maze as the main BTSP
navigation task. The start, goal, graph size, and goal-zone radius were the same
as in App.~\ref{app:exp-btsp}. A directed effective count matrix
\(\boldsymbol{C}\) was built from the legal directed edges of the maze. The
symmetric legal support came from the same edge weights in
Eq.~\eqref{eq:app-exp-edge-weight}. A weak circulation bias was then added to
make \(C_{ij}\) and \(C_{ji}\) slightly different while keeping the support
legal.

For each directed edge \(i\to j\), define the unit step
\begin{equation}
\boldsymbol{s}_{ij}
=
\frac{\boldsymbol{x}_j-\boldsymbol{x}_i}
{\|\boldsymbol{x}_j-\boldsymbol{x}_i\|_2},
\label{eq:app-exp-step-vector}
\end{equation}
the edge midpoint \(\boldsymbol{m}_{ij}=(\boldsymbol{x}_i+\boldsymbol{x}_j)/2\),
and a circulation center \(\boldsymbol{c}\). The tangent direction around that
center was
\begin{equation}
\boldsymbol{t}_{ij}
=
\frac{[-(m_{ij,y}-c_y),\; m_{ij,x}-c_x]^\top}
{\|[-(m_{ij,y}-c_y),\; m_{ij,x}-c_x]^\top\|_2}.
\label{eq:app-exp-tangent}
\end{equation}
The directional score was
\begin{equation}
s_{ij}^{\rm flow}=\operatorname{clip}(\boldsymbol{s}_{ij}^\top\boldsymbol{t}_{ij},-1,1),
\label{eq:app-exp-flow-score}
\end{equation}
and the directed count rate was
\begin{equation}
C_{ij}=W_{ij}\exp(\rho s_{ij}^{\rm flow}).
\label{eq:app-exp-directed-count}
\end{equation}
Across sampling seeds, the center was drawn as
\(\boldsymbol{c}=(70,38)+\boldsymbol{\xi}\), with standard deviations
\((1.4,1.0)\) in the two coordinates, and the strength was
\(\rho=0.050\exp(\zeta)\), with \(\zeta\sim\mathcal N(0,0.07^2)\). The sampling
seeds were \(4,15,27,41,57\). This construction gives a controlled directional
residue without adding illegal edges or changing the maze.

The asymmetry knob was
\begin{equation}
\boldsymbol{W}^{\eta}
=(1+\eta)\boldsymbol{C}+(1-\eta)\boldsymbol{C}^\top
=\boldsymbol{C}+\boldsymbol{C}^\top+
\eta(\boldsymbol{C}-\boldsymbol{C}^\top),
\label{eq:app-exp-W-eta}
\end{equation}
with
\begin{equation}
\eta\in\{0,0.05,0.10,0.20,0.40,0.60,0.80\}.
\label{eq:app-exp-eta-grid}
\end{equation}
The off-diagonal symmetric support \(\boldsymbol{C}+\boldsymbol{C}^\top\) was
held fixed, and \(\eta\) scaled the antisymmetric residue
\(\boldsymbol{C}-\boldsymbol{C}^\top\). For each sampling seed, a single global
lazy-normalization constant was chosen as
\begin{equation}
\epsilon
=
\frac{0.88}{\max_{\eta}\max_i\sum_{j\ne i}W_{ij}^{\eta}}.
\label{eq:app-exp-fixed-epsilon}
\end{equation}
The raw asymmetric passive kernel was then
\begin{equation}
P_\eta(j\mid i)=\epsilon W_{ij}^{\eta}\quad (j\ne i),
\qquad
P_\eta(i\mid i)=1-\epsilon\sum_{j\ne i}W_{ij}^{\eta}.
\label{eq:app-exp-P-eta}
\end{equation}
This fixed-\(\epsilon\) rule keeps the maximum off-diagonal row mass at
\(0.88\), so every self-loop is nonnegative. The diagonal lazy term changes only
to keep each row stochastic. It contributes no antisymmetric part.

The reversible reference for the same seed was \(\boldsymbol{P}_0\), obtained by setting
\(\eta=0\). For each \(\eta\), the antisymmetric perturbation in the uniform
matrix coordinate was
\begin{equation}
\boldsymbol{E}_{\rm asym}^{\eta}
=\frac{1}{2}\left[(\boldsymbol{P}_\eta-\boldsymbol{P}_0)
-(\boldsymbol{P}_\eta-\boldsymbol{P}_0)^\top\right]
=
\eta\epsilon(\boldsymbol{C}-\boldsymbol{C}^\top).
\label{eq:app-exp-E-asym-eta}
\end{equation}
The spectral norm \(\|\boldsymbol{E}_{\rm asym}^{\eta}\|_2\) was computed with a
largest-singular-value sparse solver. The sweep used
\begin{equation}
\alpha\in\{0.970,0.985,0.993,0.997\}.
\label{eq:app-exp-asym-alpha-grid}
\end{equation}
Together with the five seeds and seven \(\eta\) values, this gave \(140\) raw
asymmetric conditions.

For each raw condition, the Dirichlet field \(\boldsymbol{z}_\eta^g\) was solved
from Eq.~\eqref{eq:app-exp-lmdp-solve} with \(\boldsymbol{P}_\eta\) in place of
\(\boldsymbol{P}_0\). The reference field \(\boldsymbol{z}_0^g\) used the same seed and
\(\alpha\), with \(\eta=0\). The field error was
\(\|\boldsymbol{z}_\eta^g-\boldsymbol{z}_0^g\|_2/\|\boldsymbol{z}_0^g\|_2\). The
readout-angle error was the median angle across \(148\) probe states. Probe
states were selected on a stride-seven grid, excluding states within five graph
steps of the goal, and the set was augmented by the start and three corridor or
chamber states \((52,25)\), \((84,43)\), and \((96,24)\). This probe set tests
readout changes away from the clamped goal, where behavior is chosen.

Rollouts in this experiment used the pure population-vector rule with
\(\sigma_G=3.2\), \(w_f=w_\ell=0\), maximum length \(1200\), and goal-zone radius
\(1.5\). Route ratio was normalized by the same shortest path to the goal zone,
\(110.43\), used in the bug-trap BTSP task.

The bilateral replay control projected each raw asymmetric transition memory
back to its reversible part:
\begin{equation}
\frac{1}{2}\left(\boldsymbol{W}^{\eta}+(\boldsymbol{W}^{\eta})^\top\right)
=\boldsymbol{C}+\boldsymbol{C}^\top.
\label{eq:app-exp-bilateral-projection}
\end{equation}
After the same fixed-\(\epsilon\) lazy normalization, this control is exactly
\(\boldsymbol{P}_0\) for the corresponding seed. It represents a circuit in which forward
and reverse replay, or other bilateral sampling, cancels the directional
residue before the recurrent map is used, consistent with bidirectional
hippocampal replay motifs \citep{PfeifferFoster2013,WidloskiFoster2022}. The
control was plotted at the same raw \(\chi\) values, so it asks whether the
same asymmetric experience can be rescued by a reversible projection.

Figure~\ref{fig:app-asymmetric-btsp} is organized as a mechanistic chain. Panel
A defines the perturbation. It shows that \(\eta\) scales only the directional
residue in Eq.~\eqref{eq:app-exp-W-eta}, while the quantitative axis used later
is \(\chi\), the perturbation size after multiplication by the recurrent
resolvent gain \(\alpha/(1-\alpha)\). This matters because the same raw
asymmetry is more dangerous when \(\alpha\) is closer to one.

Panels B and C show one representative seed at \(\alpha=0.985\). The displayed
conditions are \(\eta=0\), \(\eta=0.10\), and \(\eta=0.60\), with
\(\chi=0.00\), \(0.19\), and \(1.11\). Panel B uses two color meanings. The
left image is the reference log field \(\log\boldsymbol{z}_0^g\). The center and right images
show the signed difference
\(\Delta\log\boldsymbol{z}_\eta^g=\log\boldsymbol{z}_\eta^g-\log\boldsymbol{z}_0^g\), with red indicating that the
asymmetric field is higher than the reference and blue indicating that it is
lower. This display shows that the perturbation visibly twists the field, and
that the twist grows continuously with \(\chi\). Panel C applies the same
readout to the same three fields. At \(\eta=0\) and \(\eta=0.10\), the rollout
reaches the goal with route ratio \(1.02\). At \(\eta=0.60\), the rollout still
reaches the goal, with route ratio \(1.13\), after taking a visibly shifted
route. Complete raw failures occur in the larger sweep shown in panel F.

Panels D and E test the scaling law. Color encodes \(\eta\), marker shape
encodes \(\alpha\), and both panels use a \(\log(1+\chi)\) horizontal axis. The
field error and the median readout-angle error both collapse onto single
increasing trends when plotted against \(\chi\). In the green band
\(\chi\le0.3\), the raw asymmetric branch has median field error
\(0.0040\), \(75\)th-percentile field error \(0.0070\), and median readout-angle
error \(0.13^\circ\). All raw rollouts in this band succeed, with median route
ratio \(1.03\). For \(1\le\chi<2\), median field error rises to \(0.053\), median
readout-angle error to \(1.88^\circ\), and median route ratio to \(1.14\); two
raw rollouts fail at \(\alpha=0.997\), \(\eta=0.20\), with
\(\chi\approx1.7\). For \(2\le\chi<8\), median field error is \(0.088\), median
readout-angle error is \(3.88^\circ\), and median route ratio is \(1.24\). These
numbers show that the theory gives more than a qualitative robustness claim: it
identifies the scale on which deviations should be compared across different
\(\alpha\) and \(\eta\) values.

Panel F gives the biological interpretation of the same scale. The green band
marks the low-\(\chi\) region used as the biological safe zone, motivated by
small measured directional imbalance in BTSP induction and BTSP-like recurrent
plasticity \citep{Bittner2017,Li2024}. In this region, raw asymmetric BTSP
produces fields and rollouts close to the reversible reference. Outside the
band, raw replay degrades and can fail. The bilateral
projection in Eq.~\eqref{eq:app-exp-bilateral-projection} keeps every rollout
successful across the full sweep, with route ratios between \(1.02\) and
\(1.08\). Thus the same plot carries two conclusions. Moderate asymmetry leaves
the theorem's prediction effectively intact, and bidirectional replay provides
a concrete circuit-level recovery mechanism when raw directional residue grows
large.

\begin{figure}[!t]
\centering
\includegraphics[width=0.98\linewidth]{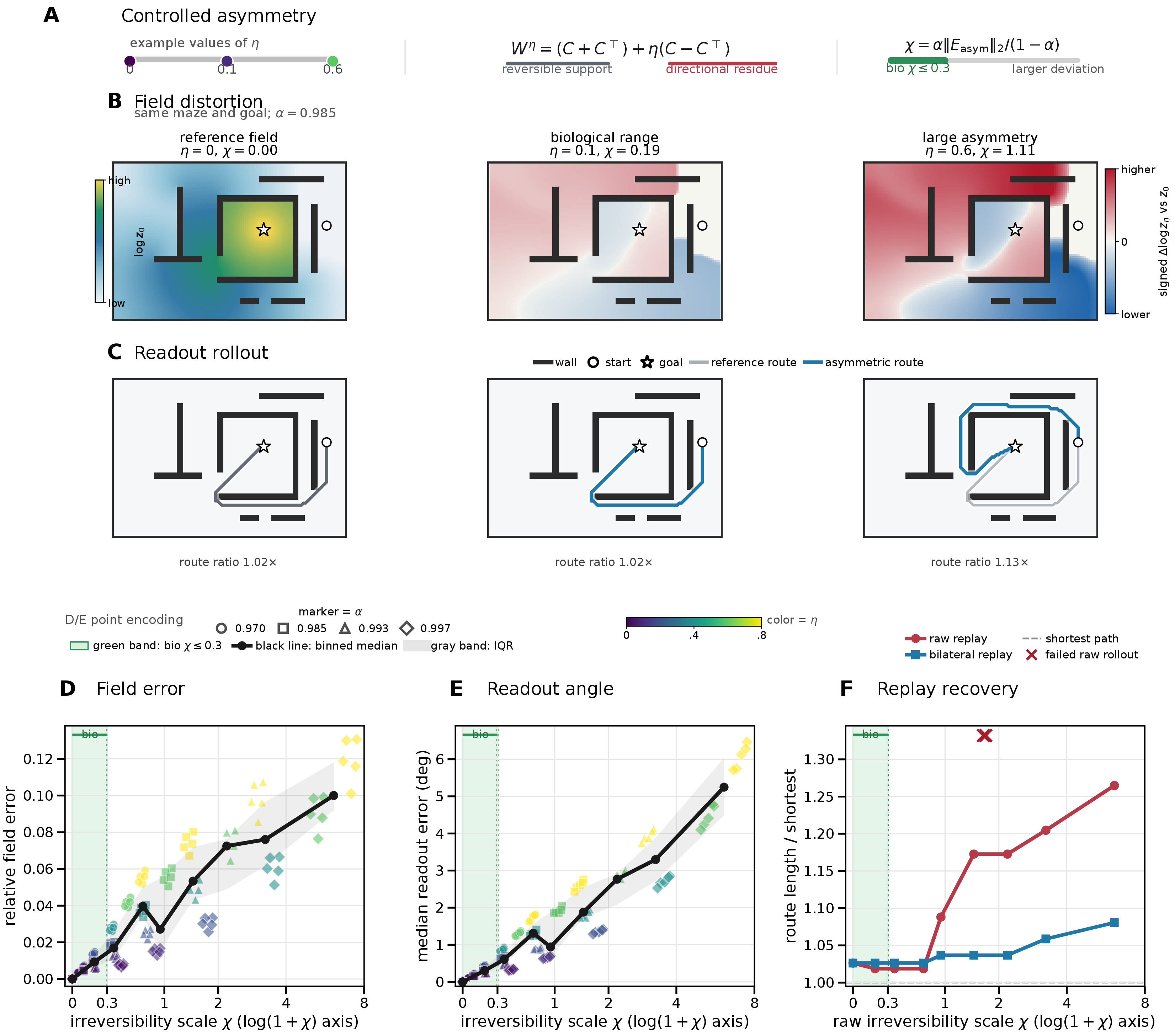}
\caption{\textbf{Asymmetric BTSP deviations are controlled by a single
irreversibility scale.}
(\textbf{A}) The perturbation knob \(\eta\) decomposes the effective BTSP memory
into a fixed reversible support and a scaled directional residue. The relevant
error scale is \(\chi=\alpha\|\boldsymbol{E}_{\rm asym}\|_2/(1-\alpha)\), which
combines raw asymmetry with recurrent amplification.
(\textbf{B}) Representative fields in the bug-trap maze at \(\alpha=0.985\).
The left panel shows the reference \(\log\boldsymbol{z}_0^g\). The middle and right panels
show signed deviations \(\Delta\log\boldsymbol{z}_\eta^g=\log\boldsymbol{z}_\eta^g-\log\boldsymbol{z}_0^g\); red means the
asymmetric field is higher than the reference and blue means it is lower.
(\textbf{C}) Population-vector rollouts from the same three fields. Low
\(\chi\) leaves the route near the reference route, whereas larger \(\chi\)
produces a visible detour.
(\textbf{D}) Relative field error collapses when plotted against \(\chi\) across
seeds, discounts, and \(\eta\) values. Point color encodes \(\eta\), marker
shape encodes \(\alpha\), the black curve is the binned median, and the gray
band is the interquartile range.
(\textbf{E}) The behavior-relevant readout-angle error follows the same
\(\chi\)-ordered trend.
(\textbf{F}) Raw asymmetric replay degrades at larger \(\chi\), while bilateral
replay projects the same raw experience back to the reversible support and
keeps route quality close to the reference.}
\label{fig:app-asymmetric-btsp}
\end{figure}

\subsection{reproduction}
\label{app:exp-reproduction}

All simulations were run from the experiment repository with Python, NumPy,
SciPy, Pandas, and Matplotlib. The only package requirements are
\begin{verbatim}
numpy>=1.24
scipy>=1.10
pandas>=2.0
matplotlib>=3.7
\end{verbatim}
The scripts set \texttt{OPENBLAS\_NUM\_THREADS}, \texttt{OMP\_NUM\_THREADS},
and \texttt{MKL\_NUM\_THREADS} to one unless these variables have already been
set externally. Randomness enters through NumPy's \texttt{default\_rng}; the
seeds listed above fully determine the simulations.

From the repository root, the figures and metric tables are reproduced by
\begin{verbatim}
python -m experiments.exp1_equivalence --outdir results --figdir figures --seed 2
python -m experiments.exp2_btsp_navigation_lowrank --outdir results --figdir figures --seed 4
python -m experiments.exp3_green_limit --outdir results --figdir figures --seed 5
python -m experiments.exp4_asymmetric_btsp --outdir results --figdir figures --seed 4
\end{verbatim}
The combined figure utility can also be run by
\begin{verbatim}
python -m experiments.make_all_figures --root .
\end{verbatim}
and the shell wrapper
\begin{verbatim}
./run_all.sh
\end{verbatim}
runs the main sequence used to regenerate the final figure files. The primary
PDF outputs are
\begin{verbatim}
figures/fig1_equivalence_linearization.pdf
figures/fig2_btsp_navigation_lowrank.pdf
figures/fig3_green_limit.pdf
figures/fig_app_asymmetric_btsp.pdf
\end{verbatim}
and the numerical tables are
\begin{verbatim}
results/exp1_equivalence_metrics.csv
results/exp1_nonlinear_shell_metrics.csv
results/exp1_alpha_recall_sweep_metrics.csv
results/exp2_btsp_learning_metrics.csv
results/exp2_challenge_navigation_metrics.csv
results/exp2_lowrank_remap_metrics.csv
results/exp3_green_limit_metrics.csv
results/exp4_asymmetric_btsp_metrics.csv
results/exp4_asymmetric_btsp_summary.csv
\end{verbatim}
No external data files are required. The figures are generated from the saved
metrics and from the fields and rollouts produced by the scripts.


\clearpage
\DisableAppendixContents
\section*{NeurIPS Paper Checklist}


\begin{enumerate}

\item {\bf Claims}
    \item[] Question: Do the main claims made in the abstract and introduction accurately reflect the paper's contributions and scope?
    \item[] Answer: \answerYes{} 
    \item[] Justification:  The abstract and Sec.~\ref{sec:intro} state the paper's specific contributions, and these claims are supported by the equivalence theorem, readout analysis, BTSP learning model, low-rank update analysis, and grid-maze experiments in Secs.~\ref{sec:equivalence}--\ref{sec:experiments}. The claims are scoped to the stated assumptions of the linearized/recurrent relaxation model and to the numerical maze settings detailed in App.~\ref{app:experiments}.
    \item[] Guidelines:
    \begin{itemize}
        \item The answer \answerNA{} means that the abstract and introduction do not include the claims made in the paper.
        \item The abstract and/or introduction should clearly state the claims made, including the contributions made in the paper and important assumptions and limitations. A \answerNo{} or \answerNA{} answer to this question will not be perceived well by the reviewers. 
        \item The claims made should match theoretical and experimental results, and reflect how much the results can be expected to generalize to other settings. 
        \item It is fine to include aspirational goals as motivation as long as it is clear that these goals are not attained by the paper. 
    \end{itemize}

\item {\bf Limitations}
    \item[] Question: Does the paper discuss the limitations of the work performed by the authors?
    \item[] Answer: \answerYes{} 
    \item[] Justification: Sec.~\ref{sec:discussion} contains a dedicated paragraph titled ''Limitations,'' discussing the unspecified hippocampal subfield, the idealized transition gate, and the need to identify cellular mechanisms and in-vivo parameters. Additional assumptions and perturbative regimes are stated in Apps.~\ref{app:linearization}, \ref{app:asymmetry}, and \ref{app:perturbation}.
    \item[] Guidelines:
    \begin{itemize}
        \item The answer \answerNA{} means that the paper has no limitation while the answer \answerNo{} means that the paper has limitations, but those are not discussed in the paper. 
        \item The authors are encouraged to create a separate ``Limitations'' section in their paper.
        \item The paper should point out any strong assumptions and how robust the results are to violations of these assumptions (e.g., independence assumptions, noiseless settings, model well-specification, asymptotic approximations only holding locally). The authors should reflect on how these assumptions might be violated in practice and what the implications would be.
        \item The authors should reflect on the scope of the claims made, e.g., if the approach was only tested on a few datasets or with a few runs. In general, empirical results often depend on implicit assumptions, which should be articulated.
        \item The authors should reflect on the factors that influence the performance of the approach. For example, a facial recognition algorithm may perform poorly when image resolution is low or images are taken in low lighting. Or a speech-to-text system might not be used reliably to provide closed captions for online lectures because it fails to handle technical jargon.
        \item The authors should discuss the computational efficiency of the proposed algorithms and how they scale with dataset size.
        \item If applicable, the authors should discuss possible limitations of their approach to address problems of privacy and fairness.
        \item While the authors might fear that complete honesty about limitations might be used by reviewers as grounds for rejection, a worse outcome might be that reviewers discover limitations that aren't acknowledged in the paper. The authors should use their best judgment and recognize that individual actions in favor of transparency play an important role in developing norms that preserve the integrity of the community. Reviewers will be specifically instructed to not penalize honesty concerning limitations.
    \end{itemize}

\item {\bf Theory assumptions and proofs}
    \item[] Question: For each theoretical result, does the paper provide the full set of assumptions and a complete (and correct) proof?
    \item[] Answer: \answerYes{} 
    \item[] Justification: The main equivalence result states its assumptions in Theorem~\ref{thm:equiv}, and the formal proof appears in App.~\ref{app:lmdp-derivation}. The supporting assumptions and derivations for linearization, reversibility, readout, nonlinear perturbation, Green-function limits, and low-rank updates are provided in Apps.~\ref{app:linearization}--\ref{app:woodbury}.
    \item[] Guidelines:
    \begin{itemize}
        \item The answer \answerNA{} means that the paper does not include theoretical results. 
        \item All the theorems, formulas, and proofs in the paper should be numbered and cross-referenced.
        \item All assumptions should be clearly stated or referenced in the statement of any theorems.
        \item The proofs can either appear in the main paper or the supplemental material, but if they appear in the supplemental material, the authors are encouraged to provide a short proof sketch to provide intuition. 
        \item Inversely, any informal proof provided in the core of the paper should be complemented by formal proofs provided in appendix or supplemental material.
        \item Theorems and Lemmas that the proof relies upon should be properly referenced. 
    \end{itemize}

    \item {\bf Experimental result reproducibility}
    \item[] Question: Does the paper fully disclose all the information needed to reproduce the main experimental results of the paper to the extent that it affects the main claims and/or conclusions of the paper (regardless of whether the code and data are provided or not)?
    \item[] Answer: \answerYes{} 
    \item[] Justification: App.~\ref{app:experiments} specifies the graph construction, maze geometries, solvers, tolerances, seeds, sweep values, metrics, rollout rules, and exact reproduction commands for the experiments in Sec.~\ref{sec:experiments}. The supplemental package contains the code, scripts, result CSVs, and generated figures needed to reproduce the reported results.
    \item[] Guidelines:
    \begin{itemize}
        \item The answer \answerNA{} means that the paper does not include experiments.
        \item If the paper includes experiments, a \answerNo{} answer to this question will not be perceived well by the reviewers: Making the paper reproducible is important, regardless of whether the code and data are provided or not.
        \item If the contribution is a dataset and\slash or model, the authors should describe the steps taken to make their results reproducible or verifiable. 
        \item Depending on the contribution, reproducibility can be accomplished in various ways. For example, if the contribution is a novel architecture, describing the architecture fully might suffice, or if the contribution is a specific model and empirical evaluation, it may be necessary to either make it possible for others to replicate the model with the same dataset, or provide access to the model. In general. releasing code and data is often one good way to accomplish this, but reproducibility can also be provided via detailed instructions for how to replicate the results, access to a hosted model (e.g., in the case of a large language model), releasing of a model checkpoint, or other means that are appropriate to the research performed.
        \item While NeurIPS does not require releasing code, the conference does require all submissions to provide some reasonable avenue for reproducibility, which may depend on the nature of the contribution. For example
        \begin{enumerate}
            \item If the contribution is primarily a new algorithm, the paper should make it clear how to reproduce that algorithm.
            \item If the contribution is primarily a new model architecture, the paper should describe the architecture clearly and fully.
            \item If the contribution is a new model (e.g., a large language model), then there should either be a way to access this model for reproducing the results or a way to reproduce the model (e.g., with an open-source dataset or instructions for how to construct the dataset).
            \item We recognize that reproducibility may be tricky in some cases, in which case authors are welcome to describe the particular way they provide for reproducibility. In the case of closed-source models, it may be that access to the model is limited in some way (e.g., to registered users), but it should be possible for other researchers to have some path to reproducing or verifying the results.
        \end{enumerate}
    \end{itemize}

\item {\bf Open access to data and code}
    \item[] Question: Does the paper provide open access to the data and code, with sufficient instructions to faithfully reproduce the main experimental results, as described in supplemental material?
    \item[] Answer: \answerYes{} 
    \item[] Justification: The anonymized supplemental material includes the full experiment code, a \texttt{README.md}, \texttt{requirements.txt}, \texttt{run\_all.sh}, experiment modules, result CSV files, and generated figures. App.~\ref{app:exp-reproduction} gives the exact commands needed to reproduce the main figures and numerical outputs.
    \item[] Guidelines:
    \begin{itemize}
        \item The answer \answerNA{} means that paper does not include experiments requiring code.
        \item Please see the NeurIPS code and data submission guidelines (\url{https://neurips.cc/public/guides/CodeSubmissionPolicy}) for more details.
        \item While we encourage the release of code and data, we understand that this might not be possible, so \answerNo{} is an acceptable answer. Papers cannot be rejected simply for not including code, unless this is central to the contribution (e.g., for a new open-source benchmark).
        \item The instructions should contain the exact command and environment needed to run to reproduce the results. See the NeurIPS code and data submission guidelines (\url{https://neurips.cc/public/guides/CodeSubmissionPolicy}) for more details.
        \item The authors should provide instructions on data access and preparation, including how to access the raw data, preprocessed data, intermediate data, and generated data, etc.
        \item The authors should provide scripts to reproduce all experimental results for the new proposed method and baselines. If only a subset of experiments are reproducible, they should state which ones are omitted from the script and why.
        \item At submission time, to preserve anonymity, the authors should release anonymized versions (if applicable).
        \item Providing as much information as possible in supplemental material (appended to the paper) is recommended, but including URLs to data and code is permitted.
    \end{itemize}

\item {\bf Experimental setting/details}
    \item[] Question: Does the paper specify all the training and test details (e.g., data splits, hyperparameters, how they were chosen, type of optimizer) necessary to understand the results?
    \item[] Answer: \answerYes{} 
    \item[] Justification: App.~\ref{app:experiments} gives the maze construction, graph connectivity rules, kernel widths, discounts, nonlinear recall amplitudes, seeds, solver tolerances, exploration sampler, rollout scoring, and metrics for each experiment. The experiments do not involve train/test splits in the supervised-learning sense; the relevant exploration prefixes and evaluation protocols are specified in Apps.~\ref{app:exp-equivalence}--\ref{app:exp-green-limit}.
    \item[] Guidelines:
    \begin{itemize}
        \item The answer \answerNA{} means that the paper does not include experiments.
        \item The experimental setting should be presented in the core of the paper to a level of detail that is necessary to appreciate the results and make sense of them.
        \item The full details can be provided either with the code, in appendix, or as supplemental material.
    \end{itemize}

\item {\bf Experiment statistical significance}
    \item[] Question: Does the paper report error bars suitably and correctly defined or other appropriate information about the statistical significance of the experiments?
    \item[] Answer: \answerYes{} 
    \item[] Justification: For the stochastic BTSP experiments, Sec.~\ref{sec:exp-btsp-navigation} and App.~\ref{app:exp-btsp} report means and SEM across five exploration seeds, with the seed source of variability stated. The equivalence, nonlinear recall, Woodbury, and Green-function checks are deterministic numerical/algebraic checks and report relative errors, angle errors, sweep values, and numerical tolerances instead of statistical error bars.
    \item[] Guidelines:
    \begin{itemize}
        \item The answer \answerNA{} means that the paper does not include experiments.
        \item The authors should answer \answerYes{} if the results are accompanied by error bars, confidence intervals, or statistical significance tests, at least for the experiments that support the main claims of the paper.
        \item The factors of variability that the error bars are capturing should be clearly stated (for example, train/test split, initialization, random drawing of some parameter, or overall run with given experimental conditions).
        \item The method for calculating the error bars should be explained (closed form formula, call to a library function, bootstrap, etc.)
        \item The assumptions made should be given (e.g., Normally distributed errors).
        \item It should be clear whether the error bar is the standard deviation or the standard error of the mean.
        \item It is OK to report 1-sigma error bars, but one should state it. The authors should preferably report a 2-sigma error bar than state that they have a 96\% CI, if the hypothesis of Normality of errors is not verified.
        \item For asymmetric distributions, the authors should be careful not to show in tables or figures symmetric error bars that would yield results that are out of range (e.g., negative error rates).
        \item If error bars are reported in tables or plots, the authors should explain in the text how they were calculated and reference the corresponding figures or tables in the text.
    \end{itemize}

\item {\bf Experiments compute resources}
    \item[] Question: For each experiment, does the paper provide sufficient information on the computer resources (type of compute workers, memory, time of execution) needed to reproduce the experiments?
    \item[] Answer: \answerYes{}
    \item[] Justification: A compute-resource report is provided in the supplemental material. All reported experiments were rerun on a local Apple M1 MacBook Pro using CPU-only execution; the report lists the worker type, memory, storage, per-experiment wall-clock time, and total compute. 
    \item[] Guidelines:
    \begin{itemize}
        \item The answer \answerNA{} means that the paper does not include experiments.
        \item The paper should indicate the type of compute workers CPU or GPU, internal cluster, or cloud provider, including relevant memory and storage.
        \item The paper should provide the amount of compute required for each of the individual experimental runs as well as estimate the total compute. 
        \item The paper should disclose whether the full research project required more compute than the experiments reported in the paper (e.g., preliminary or failed experiments that didn't make it into the paper). 
    \end{itemize}
    
\item {\bf Code of ethics}
    \item[] Question: Does the research conducted in the paper conform, in every respect, with the NeurIPS Code of Ethics \url{https://neurips.cc/public/EthicsGuidelines}?
    \item[] Answer: \answerYes{} 
    \item[] Justification: The work is theoretical and simulation-based, uses synthetic grid mazes and published scientific references, and does not involve human subjects, private data, sensitive attributes, or deployment on users. The supplemental code and assets can be anonymized for review, and no aspect of the work appears to require a deviation from the NeurIPS Code of Ethics.
    \item[] Guidelines:
    \begin{itemize}
        \item The answer \answerNA{} means that the authors have not reviewed the NeurIPS Code of Ethics.
        \item If the authors answer \answerNo, they should explain the special circumstances that require a deviation from the Code of Ethics.
        \item The authors should make sure to preserve anonymity (e.g., if there is a special consideration due to laws or regulations in their jurisdiction).
    \end{itemize}

\item {\bf Broader impacts}
    \item[] Question: Does the paper discuss both potential positive societal impacts and negative societal impacts of the work performed?
    \item[] Answer: \answerNo{} 
    \item[] Justification: The current manuscript discusses scientific implications and future directions in Sec.~\ref{sec:discussion}, but it does not include a dedicated discussion of both positive and negative societal impacts. The work is foundational neuroscience/ML theory with synthetic simulations and no direct deployment path or obvious high-risk misuse channel.
    \item[] Guidelines:
    \begin{itemize}
        \item The answer \answerNA{} means that there is no societal impact of the work performed.
        \item If the authors answer \answerNA{} or \answerNo, they should explain why their work has no societal impact or why the paper does not address societal impact.
        \item Examples of negative societal impacts include potential malicious or unintended uses (e.g., disinformation, generating fake profiles, surveillance), fairness considerations (e.g., deployment of technologies that could make decisions that unfairly impact specific groups), privacy considerations, and security considerations.
        \item The conference expects that many papers will be foundational research and not tied to particular applications, let alone deployments. However, if there is a direct path to any negative applications, the authors should point it out. For example, it is legitimate to point out that an improvement in the quality of generative models could be used to generate Deepfakes for disinformation. On the other hand, it is not needed to point out that a generic algorithm for optimizing neural networks could enable people to train models that generate Deepfakes faster.
        \item The authors should consider possible harms that could arise when the technology is being used as intended and functioning correctly, harms that could arise when the technology is being used as intended but gives incorrect results, and harms following from (intentional or unintentional) misuse of the technology.
        \item If there are negative societal impacts, the authors could also discuss possible mitigation strategies (e.g., gated release of models, providing defenses in addition to attacks, mechanisms for monitoring misuse, mechanisms to monitor how a system learns from feedback over time, improving the efficiency and accessibility of ML).
    \end{itemize}
    
\item {\bf Safeguards}
    \item[] Question: Does the paper describe safeguards that have been put in place for responsible release of data or models that have a high risk for misuse (e.g., pre-trained language models, image generators, or scraped datasets)?
    \item[] Answer: \answerNA{} 
    \item[] Justification: The paper does not release pretrained language models, image generators, scraped datasets, or other assets with a high risk of misuse. The released assets are simulation code, synthetic maze-generation procedures, result CSVs, and figures.
    \item[] Guidelines:
    \begin{itemize}
        \item The answer \answerNA{} means that the paper poses no such risks.
        \item Released models that have a high risk for misuse or dual-use should be released with necessary safeguards to allow for controlled use of the model, for example by requiring that users adhere to usage guidelines or restrictions to access the model or implementing safety filters. 
        \item Datasets that have been scraped from the Internet could pose safety risks. The authors should describe how they avoided releasing unsafe images.
        \item We recognize that providing effective safeguards is challenging, and many papers do not require this, but we encourage authors to take this into account and make a best faith effort.
    \end{itemize}

\item {\bf Licenses for existing assets}
    \item[] Question: Are the creators or original owners of assets (e.g., code, data, models), used in the paper, properly credited and are the license and terms of use explicitly mentioned and properly respected?
    \item[] Answer: \answerNo{} 
    \item[] Justification: The paper does not use external datasets, pretrained models, or third-party research code, and App.~\ref{app:experiments} plus the supplemental \texttt{requirements.txt} identify the standard scientific Python dependencies used for the experiments. However, the current manuscript/supplement does not explicitly enumerate the licenses or terms of use for those software dependencies, so this item is conservatively answered \answerNo{}.
    \item[] Guidelines:
    \begin{itemize}
        \item The answer \answerNA{} means that the paper does not use existing assets.
        \item The authors should cite the original paper that produced the code package or dataset.
        \item The authors should state which version of the asset is used and, if possible, include a URL.
        \item The name of the license (e.g., CC-BY 4.0) should be included for each asset.
        \item For scraped data from a particular source (e.g., website), the copyright and terms of service of that source should be provided.
        \item If assets are released, the license, copyright information, and terms of use in the package should be provided. For popular datasets, \url{paperswithcode.com/datasets} has curated licenses for some datasets. Their licensing guide can help determine the license of a dataset.
        \item For existing datasets that are re-packaged, both the original license and the license of the derived asset (if it has changed) should be provided.
        \item If this information is not available online, the authors are encouraged to reach out to the asset's creators.
    \end{itemize}

\item {\bf New assets}
    \item[] Question: Are new assets introduced in the paper well documented and is the documentation provided alongside the assets?
    \item[] Answer: \answerNo{} 
    \item[] Justification: The paper releases new experiment code and synthetic result assets, and these are substantially documented by the supplemental \texttt{README.md}, \texttt{requirements.txt}, \texttt{RUN\_LOG.md}, \texttt{run\_all.sh}, result CSVs, and App.~\ref{app:exp-reproduction}. However, the current supplemental package does not explicitly include a license, copyright notice, or terms of use for the newly released assets, so this item is conservatively answered \answerNo{}.
    \item[] Guidelines:
    \begin{itemize}
        \item The answer \answerNA{} means that the paper does not release new assets.
        \item Researchers should communicate the details of the dataset\slash code\slash model as part of their submissions via structured templates. This includes details about training, license, limitations, etc. 
        \item The paper should discuss whether and how consent was obtained from people whose asset is used.
        \item At submission time, remember to anonymize your assets (if applicable). You can either create an anonymized URL or include an anonymized zip file.
    \end{itemize}

\item {\bf Crowdsourcing and research with human subjects}
    \item[] Question: For crowdsourcing experiments and research with human subjects, does the paper include the full text of instructions given to participants and screenshots, if applicable, as well as details about compensation (if any)? 
    \item[] Answer: \answerNA{} 
    \item[] Justification: The paper does not involve crowdsourcing, data collection from workers, behavioral experiments, or research with human subjects. All empirical results are produced by synthetic grid-maze simulations.
    \item[] Guidelines:
    \begin{itemize}
        \item The answer \answerNA{} means that the paper does not involve crowdsourcing nor research with human subjects.
        \item Including this information in the supplemental material is fine, but if the main contribution of the paper involves human subjects, then as much detail as possible should be included in the main paper. 
        \item According to the NeurIPS Code of Ethics, workers involved in data collection, curation, or other labor should be paid at least the minimum wage in the country of the data collector. 
    \end{itemize}

\item {\bf Institutional review board (IRB) approvals or equivalent for research with human subjects}
    \item[] Question: Does the paper describe potential risks incurred by study participants, whether such risks were disclosed to the subjects, and whether Institutional Review Board (IRB) approvals (or an equivalent approval/review based on the requirements of your country or institution) were obtained?
    \item[] Answer: \answerNA{} 
    \item[] Justification: The paper does not involve crowdsourcing, human participants, identifiable private information, or human-subject data. IRB or equivalent human-subjects review is therefore not applicable to the reported theoretical and simulation work.
    \item[] Guidelines:
    \begin{itemize}
        \item The answer \answerNA{} means that the paper does not involve crowdsourcing nor research with human subjects.
        \item Depending on the country in which research is conducted, IRB approval (or equivalent) may be required for any human subjects research. If you obtained IRB approval, you should clearly state this in the paper. 
        \item We recognize that the procedures for this may vary significantly between institutions and locations, and we expect authors to adhere to the NeurIPS Code of Ethics and the guidelines for their institution. 
        \item For initial submissions, do not include any information that would break anonymity (if applicable), such as the institution conducting the review.
    \end{itemize}

\item {\bf Declaration of LLM usage}
    \item[] Question: Does the paper describe the usage of LLMs if it is an important, original, or non-standard component of the core methods in this research? Note that if the LLM is used only for writing, editing, or formatting purposes and does \emph{not} impact the core methodology, scientific rigor, or originality of the research, declaration is not required.
    \item[] Answer: \answerNA{} 
    \item[] Justification: LLMs are not an important, original, or non-standard component of the core scientific method, theory, simulations, or evaluation pipeline. Any use limited to writing, editing, or formatting would not affect the methodology, scientific rigor, or originality and therefore does not require declaration under this checklist item.
    \item[] Guidelines:
    \begin{itemize}
        \item The answer \answerNA{} means that the core method development in this research does not involve LLMs as any important, original, or non-standard components.
        \item Please refer to our LLM policy in the NeurIPS handbook for what should or should not be described.
    \end{itemize}

\end{enumerate}

\end{document}